\newcommand{\vals}{\textbf{v}}
\newcommand{\items}{{\cal{M}}}
\newcommand{\agents}{{\cal{N}}}
\newcommand{\FF}{{\cal{F}}}
\newcommand{\MMSi}[0]{MMS_{n}(v_i)}
\newcommand{\MMSv}[0]{MMS_{n}(v)}
\newcommand{\PSv}[0]{PS_{{n}}(v)}
\newcommand{\CC}{{\mathcal{C}}}
\newcommand{\QQ}{{\mathcal{Q}}}
\newcommand{\PTAS}[0]{PTAS(\epsilon)}
\newcommand{\Sharefullval}[1]{s(#1,n)}
\newcommand{\Sharefullpval}[1]{s'(#1,n)}
\newcommand{\Sharefullpv}[0]{\Sharefullpval{v}}
\newcommand{\Sharefullpvp}[0]{\Sharefullpval{v'}}
\newcommand{\Sharefullv}[0]{\Sharefullval{v}}
\newcommand{\Sharefullvi}[0]{\Sharefullval{v_i}}
\newcommand{\Sharefullvip}[0]{\Sharefullval{v'_i}}
\newcommand{\Sharefullvp}[0]{\Sharefullval{v'}}
\newcommand{\ShareGfullval}[1]{\hat{s}(#1,n)}
\newcommand{\ShareGfullv}[0]{\ShareGfullval{v}}
\newcommand{\ShareGfullvi}[0]{\ShareGfullval{v_i}}
\newcommand{\ShareGfullvp}[0]{\ShareGfullval{v'}}
\newcommand{\ShareGfullcv}[0]{\ShareGfullval{c\cdot v}}
\newcommand{\ShareGtwo}[2]{\hat{s}_{{#1}}({{#2}})}
\newcommand{\ShareG}[0]{\ShareGtwo{v}{v'}}
\newcommand{\ShareGfunc}[0]{\ShareGtwo{v}{\cdot}}
\newcommand{\ShareGvv}[0]{\ShareGtwo{v}{v}}
\newcommand{\ShareGvpvp}[0]{\ShareGtwo{v'}{v'}}
\newcommand{\ShareGcvcv}[0]{\ShareGtwo{c\cdot v}{c\cdot v}}
\newcommand{\ShareGvcv}[0]{\ShareGtwo{v}{c\cdot v}}
\newcommand{\PTASALG}[0]{PTAS_{ALG}}
\newcommand{\Prefix}[0]{Z}
\newtheorem{theorem}{Theorem}
\newtheorem{lemma}{Lemma}
\newtheorem{observation}[lemma]{Observation}
\newtheorem{proposition}[lemma]{Proposition}
\newtheorem{corollary}[theorem]{Corollary}
\newtheorem{question}{Question}
\newtheorem{definition}{Definition}
\newtheorem{example}{Example}
\newtheorem{remark}[definition]{Remark}
\newenvironment{proof}{\noindent\bf{Proof.}\rm}{\hfill$\blacksquare$\bigskip}
\begin{document}

\title{Fair Shares: Feasibility, Domination and Incentives}
	\author{Moshe Babaioff\thanks{Microsoft Research ---  E-mail: \texttt{moshe@microsoft.com}.}, Uriel Feige\thanks{Weizmann Institute and Microsoft Research ---  E-mail: \texttt{uriel.feige@weizmann.ac.il}.}}

\maketitle

\begin{abstract}

We consider fair allocation of a set $\items$ of indivisible goods to $n$ equally-entitled agents, with no monetary transfers. Every agent $i$ has a valuation function $v_i$ from some given class of valuation functions. A \emph{share} $s$ is a function that maps a pair $(v_i,n)$ to a non-negative value, with the interpretation that if an allocation of $\items$ to $n$ agents fails to give agent $i$ a bundle of value at least equal to $s(v_i,n)$, this serves as evidence that the allocation is not fair towards $i$. For such an interpretation to make sense, we would like the share to be \emph{feasible}, meaning that for any valuations in the class, there is an allocation that gives every agent at least her share.  The maximin share (MMS) was a natural candidate for a feasible share for additive valuations. However, Kurokawa, Procaccia and Wang [2018] show that it is not feasible.

We initiate a systematic study of the family of feasible shares. We say that a share is \emph{self maximizing} if truth-telling maximizes the implied guarantee (the worse true value of any bundle that gives the share with respect to the report). We show that every feasible share is dominated by some self-maximizing and feasible share. We seek to identify those self-maximizing feasible shares that are polynomial time computable, and offer the highest share values. We show that a SM-dominating feasible share -- one that dominates every self-maximizing (SM) feasible share -- does not exist for additive valuations (and beyond). Consequently, we relax the domination property to that of domination up to a multiplicative factor of $\rho$ (called \emph{$\rho$-dominating}). {For additive valuations we present shares that are feasible, self-maximizing and polynomial-time computable. For $n$ agents we present such a share that is $\frac{2n}{3n-1}$-dominating, and is $\frac{4}{5}$-dominating when $n\leq 4$. For two agents we present such a share that is $(1 - \epsilon)$-dominating.  Moreover, for each of these shares we present a polynomial time algorithm that computes allocations that give every agent at least her share.}
\end{abstract}

\pagebreak

\section{Introduction}\label{sec:intro}
There is a large literature concerning the fair partition of indivisible goods 
among agents with equal entitlements to the items, without monetary transfers. Central to this problem is the definition of what is ``fair'' and when should an agent claim that she was treated unfairly be considered founded. One can roughly partition the approaches in the literature to two. The first aims to eliminate (or minimize) envy between the agents.\footnote{This approach is based on envy-freeness and its relaxations, we discuss it in Section \ref{sec:intro-related}.} The other aims to give each agent the value she ``deserves'', or close to it. 

In this paper we follow 
the latter approach, considering fairness based on the value that an agent obtains from the items she gets (relative to her own valuation).
Prior work have defined some specific notions of \emph{shares}, and in particular, the maximin share (known as MMS) which is the highest value that the worse-off agent gets, when all agents have the same valuation. 
Intuitively, a share aims to capture the value the agent should always get, no matter what the valuations of the others are.  
The maximin share (MMS) was a natural candidate for a share such that for any additive valuations {of the participating agents}, there is an allocation that gives every agent her share (as this is so when valuations are identical). 
However, Kurokawa, Procaccia and Wang 
\cite{KurokawaPW18} 
show that  there are instances for which it is not possible to give every agent her MMS share. 

We are not aware of prior work that attempts to formally define what a ``share'' means as a general concept, and which properties it should satisfy.
In this paper we aim to address this issue, formalizing the notion of a \emph{share} and of a \emph{feasible share}, studying relation between different shares (domination) and 
studying incentives for truthful valuation reporting.   
A share may serve as a fairness notion for a given class of valuations if it is \emph{feasible}, meaning that for any valuations of the agents in the class, there is an allocation that gives every agent at least her share. 
When a share is feasible, allocations that do not give some agent her share may plausibly be considered not to be fair. Thus, a feasible share serves as a lower bound on the value that an agent can expect to get, for any valuations of the others. As discussed above, the MMS is not feasible. 

\subsection{Shares as Contracts}
\label{sec:contracts}

We introduce properties that we would like shares to have. In order to motivate these properties, we take the view that shares can be interpreted as contracts. This view and its implications are presented in this section.

We consider fair allocation of a set $\items$ of indivisible goods to $n$ {equally-entitled} agents, with no monetary transfers. Every agent $i$ has a valuation function $v_i$ from some class of valuations $\CC$. The valuation  assigns a value $v_i(S)$ to any bundle $S\subseteq \items$. Valuations are assumed to be normalized ($v_i(\emptyset)=0$) and monotone non-decreasing.
We formalize the notion of a \emph{share} as a function that maps a pair $(v_i,n)$ to a non-negative value 
{$\Sharefullvi$, and the function satisfies  
some minimal requirements} (independence of item names\footnote{That is, permuting the names of the items does not change the value of the share. In particular, the share does not rely on different agents agreeing on the names of the items.} 
and being \emph{realizable}: 
{no larger than the value of the most valuable bundle, which is $v(\items)$ when items are goods).}
Note that to capture the ex-ante symmetry between the players {(all have equal entitlement to the goods)}, the share function is anonymous: the function is the same for all agents and does not depend on the {identity of the agent applying it.} 

It is instructive to think of the notion of a share as if it is a {\em contract} between the agent and the entity that produces the allocation.
We refer to this entity as the {\em allocator}. (Depending on the setting, the allocator can take one of many different forms, such as an actual person doing the allocation, or an algorithm doing the allocation.) The context of the contract is a set $\items$ of items, a number $n$ of agents, and a class $\CC$ of valuation functions. The terms of the contract are that the agent reports an allocation function $v$ in the class $\CC$, and the allocator guarantees {that the agent will get a bundle that she values at least as} 
(her {\em share}) $\Sharefullv$. Consequently, for a given $v\in \CC$, the contract partitions the set of bundles into two: those bundles $B$ that are \emph{acceptable}, 
namely, satisfy  $v(B) \ge \Sharefullv$, and those that are {not acceptable} ($v(B) < \Sharefullv$). 
{Given a share $s$ and valuation $v$, the \emph{share guarantee} $\ShareGfullv$ is {defined to be} the value of the least valuable acceptable bundle for $v$.}
The contract gives some rights to the agent, namely, the right to receive an acceptable bundle. The contract also gives rights to the allocator, namely, the right to choose which of 
the acceptable bundles to give to the agent. 

We now list properties that we desire contracts to have. They motivate concepts considered in the current paper.

\begin{itemize}

    \item Feasibility. The allocator offers shares to all agents of the allocation instance. Viewing shares as contracts, a natural requirement is that they do not contradict each other, so that the allocator can fulfil his side of the contract with respect to every agent. {In the terminology of the paper, we say that a share is {\em feasible} if for every allocation instance there is an allocation that gives every agent an acceptable bundle.} {Such an allocation will be referred to as an \emph{acceptable allocation}.}

    \item Legibility ({\em poly-time computability}). A contract should be written in such a language that the parties to the contract can understand its content. In the context of shares, the content is the value of the function $\Sharefullv$.
    In a computational setting, we interpret the notion of ``understanding'' as that of being able, given $n$ and $v$ (or if $v$ does not have a succinct representation, then access to $v$ using value queries), to compute in polynomial time the value $\Sharefullv$ of the share.  Thereafter, upon receiving a bundle $B$, the agent can determine whether the allocator has fulfilled his side of the contract. The agent can also determine whether she was fortunate to get a value much larger than her share. Likewise, the allocator can test whether his proposed allocation to the agents fulfills his side of the {contract with every agent}.

    \item Incentive {for truthful reporting} ({\em self maximizing}).  Suppose that an agent evaluates a contract by her value to the worst possible acceptable bundle under the contract.\footnote{The assumption that agents are risk averse is common in the literature of fair division,  e.g., when considering the behaviour of the splitting agent in the \emph{cut-and-choose} protocol (equal cut is optimal only under worse-case assumption about the other's valuation). We assume that this risk aversion extends beyond the valuations of others, and holds also to with respect to choice of allocation (satisfying the contract).} 
    Indeed, for most valuation classes of interest (e.g., unit demand, additive, submodular, subadditive), the agent might get this worst bundle, under an adversarial choice of an acceptable allocation and valuation functions for the other agents (see \cref{pro:SM}).
    As part of the contract, the agent reports a valuation function.
    Incentive {for truthful reporting} in this context means that the agent cannot get a better contract by reporting a different valuation function $v'$ instead of her true valuation function $v$. 
    This serves two purposes. One is that the agent need not waste efforts in figuring out which report gives her the best contract -- she can simply report her true valuation function. The other is that by receiving the true valuation functions of all agents, the allocator holds the relevant information that 
    is needed in order to serve {both the goal of giving each agent at least her share according to her true valuation function,  and} {in order to obtain any} auxiliary goals of the allocation mechanism (such as maximizing welfare), if there are such goals. The notion of a share  
    {giving incentive for truthful reporting}
    should not be confused with the notion of an allocation mechanism being incentive compatible. The former concerns getting the best contract, and the report of the agent can be given even before an allocation algorithm has been decided upon. The latter concerns getting the best bundle, and requires that the allocation algorithm be fixed before the agent reports a valuation function. We shall refer to shares that {give incentive for truthful reporting} 
    as being {\em self maximizing}, so as to avoid confusion with the notion of an allocation mechanism being incentive compatible.
\end{itemize}

An example of a share that is feasible, poly-time computable and self-maximizing, even when $\CC$ is the class of all monotone valuation functions {(items are goods)}, is a share that we shall denote by $s_n$, where {$s_n(v,n)$}  
is the value of the $n$-th most valuable item according to $v$. Feasibility is established by the following allocation mechanism, that we shall refer to as $AM_n$. Sort the agents in an arbitrary order, and let $v_i$ denote the valuation function reported by agent $i$. Items are allocated in rounds. In round $i < n$ agent $i$ gets an item of highest value according to $v_i$, among the items not yet allocated. In round $n$, agent $n$ gets all remaining items. 
Under $AM_n$, every agent $i$ gets a bundle that contains at least one of her top $n$ items,
and as valuation functions are monotone, agent $i$ receives a bundle of value at least $s_n(v_i,n)$. Poly-time computability of $s_n$ is evident\footnote{As $m$ value queries suffice in order to determine the value of every single item, and after sorting these values, the $n$-th largest value is the value of the share.}. The share $s_n$ is self-maximizing because for every report $v'_i$, the allocator is allowed to allocate to the agent any one of the top $n$ items according to $v'_i$, and the {true value of the worst of these items (value according to $v_i$)} cannot have value larger than $s_n(v_i,n)$, {the $n$-th most valuable item according to the true valuation $v_i$}. 

The reader may have observed that $AM_n$ is an incentive compatible allocation mechanism. Agent $i$ cannot improve her allocation by reporting a different $v'_i$ instead of her true $v_i$. Hence the share $s_n$ has an associated incentive compatible allocation mechanism. However, nothing in the definition of the share $s_n$ requires that this particular mechanism be used. For example, the allocator is free to implement a round robin allocation instead (agent $i$ selects items in rounds $i + kn$ for $k \in \{0, 1, 2, \ldots \}$, where in each such round she gets the yet not allocated item of highest marginal value relative to the set of items that she already holds). The round robin allocation gives every agent {value of at least her share value $\Sharefullv$,} 
but is not an incentive compatible allocation mechanism: depending on valuation functions of other agents, agent $i$ might improve her allocation by reporting a different $v'_i$ instead of her true $v_i$. 

One may further observe that {the allocation algorithm $AM_n$ actually}
offers every agent a share that {\em dominates} $s_n$ (always of value at least $s_n$, and sometimes higher). We refer to this share as $s_{n-1}$, where $s_{n-1}(v,n)$ is the minimum between the value of the $(n-1)$-th most valuable item according to $v$, and the value of the least valuable bundle among those containing $m - n + 1$ items. {Like $s_n$, the share $s_{n-1}$ is also feasible (by allocation algorithm $AM_n$), self maximizing ({as we prove later,} see \cref{prop:picking-well-behaved}, for example), and when valuations are additive, it is also poly-time computable.}

Given the above examples of $s_n$ and $s_{n-1}$, and the fact that agents are better off when $s_{n-1}$ is the share rather than $s_n$, it is natural to ask (for a given class $\CC$ of valuations) whether there is {a feasible share that is the ``best possible'' -- at least as large as any other feasible share.}
{Share  $s$ \emph{dominates} share $s'$ for a class of valuations functions $\CC$, if {for every $n$ (for which the share is defined) and} for every valuation $v\in \CC$ it holds that $\Sharefullv\geq \Sharefullpv$. For a given class $\CC$ of valuations, a feasible share is a \emph{dominant feasible share} if it dominates every other feasible share for that class. A feasible share is a \emph{SM-dominant feasible share} if it dominates every other feasible share that is self maximizing (SM).
}

\begin{question}
\label{Q1}
For which classes $\CC$ of valuations is there a share that is feasible (and better still, also self maximizing and poly-time computable) that dominates all other feasible shares (or at least dominates those feasible shares that are self maximizing)?
\end{question} 

As we shall see, there are classes $\CC$ for which SM-dominant feasible shares do not exist. 
{Moreover, we show that the class of additive valuations {over goods} does not have a feasible share that dominates every self-maximizing feasible share that is poly-time computable.}
This leads to a relaxed version of the above question, in which we look for a feasible share that dominates a scaled-down version of any other feasible share. 
{Formally, for $\rho \ge 0$, share  $s$ \emph{$\rho$-dominates} share $s'$ for a class of valuations functions $\CC$, if for every valuation $v\in \CC$ it holds that $\Sharefullv\geq \rho\cdot \Sharefullpv$. Feasible share  $s$ is a \emph{$\rho$-dominating share} for class $\CC$ if it $\rho$-dominates every other feasible share for $\CC$ {(self maximizing or not).}}

\begin{question}
\label{Q2}
For a given class $\CC$, what is the maximum value of $\rho$  (where $0 \le \rho \le 1$) for which there is a feasible share (and better still, also self maximizing and poly-time computable) that is $\rho$-dominating for $\CC$? 
\end{question}

{In \cref{app:auxiliary} we discuss auxiliary goals and the fact that there might be a tradeoff between such goals and maximizing $\rho$ when picking a $\rho$-dominating share. In \cref{sec:declaration} we discuss the possibility of combining a guarantee to give agents at least their share with additional non-binding ``statements of intent" to provide allocations with additional desirable properties.

Our paper is concerned with the above questions. We next describe our results. }

\subsection{Our Results} \label{sec:results}

We start by considering Question~\ref{Q1}, which asks for which classes of valuations there exists a share that is feasible and that dominates every other feasible share (or at least dominates those feasible shares that are self maximizing). We first observe that whenever the MMS is feasible, it is a feasible share that dominates every other feasible share, and the unique one (see \cref{prop:feasibleMMS-maximal}). For example, this is the case for the class of unit-demand valuations, in which the MMS (which is the value of the $n^{th}$ most valuable item) is feasible. In contrast, our main negative result shows that for any class of valuations that contains the additive valuations, a \emph{dominant feasible share} (a share that dominates every other feasible share) does not exist. Moreover, there does not exist a \emph{SM-dominant feasible share}: a share that dominates every other feasible share that is self maximizing (SM).

\begin{restatable}{theorem}{thmMainImpossibility}
\label{thm:main}
    Let $\CC$ be any class of valuations that contains all additive valuations over goods. 
    For $\CC$, there is no feasible share that dominates all feasible shares that are self maximizing. 
    Thus, there does not exist an SM-dominant feasible share for class $\CC$. 
\end{restatable}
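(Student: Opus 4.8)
The plan is to turn the impossibility into a statement about the maximin share. First I would record the pointwise ceiling that feasibility imposes: on the instance where all $n$ agents report the same valuation $v$, every allocation induces a partition of $\items$ into $n$ bundles, and its least valuable bundle is worth at most $\MMSv$ to $v$. Hence every feasible share satisfies $\Sharefullv \le \MMSv$, so $\MMSn$ is the pointwise upper envelope of all feasible shares. Since $\MMSn$ is itself infeasible, the theorem is really the assertion that this envelope is already ``traced out'' by \emph{self maximizing} feasible shares on an instance where it cannot be honoured simultaneously.

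Concretely, suppose for contradiction that $s$ is a feasible share dominating every self maximizing feasible share. I would fix $n=3$ (which suffices, since domination must hold for every $n$) and take an additive profile $(v_1,v_2,v_3)$ that is MMS-infeasible -- no allocation gives every agent $i$ a bundle worth at least $\MMSi$ -- which exists because $\CC$ contains all additive valuations \cite{KurokawaPW18}. The engine of the proof is a collection of \emph{tightness shares}: for each $i$ a share $s^{(i)}$ that is feasible, self maximizing, and satisfies $s^{(i)}(v_i,n)=\MMSi$. Granting these, domination yields $\Sharefullvi \ge s^{(i)}(v_i,n)=\MMSi$ for each $i$; but then an acceptable allocation for $s$ on $(v_1,v_2,v_3)$ would give every agent at least her maximin share, contradicting the choice of the profile. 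Thus no such feasible $s$ exists, and because the $s^{(i)}$ are self maximizing this rules out an SM-dominant feasible share (the non-existence of a plain dominant feasible share is then immediate).

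The whole difficulty is concentrated in producing the tightness shares $s^{(i)}$, and this is the step I expect to be the main obstacle. A share may depend only on $(v,n)$, must be invariant under renaming items, and must be feasible against \emph{all} profiles, so I cannot simply declare its value to be $\MMSi$ at $v_i$; I must exhibit, for every profile, an allocation certifying feasibility. My leverage is that I am free to choose the infeasible profile, so I would build $s^{(i)}$ as the worst-case guarantee of an explicit anonymous picking or partition-template mechanism (in the spirit of $AM_n$ and of round-robin): feasibility then comes for free from the mechanism, self-maximization follows from the picking structure (cf.\ \cref{prop:picking-well-behaved}), and I would tune the mechanism so that for $v_i$ its worst acceptable bundle is worth exactly $\MMSi$.

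The genuine tension -- and the reason the statement is not routine -- is that these two demands pull in opposite directions. MMS-infeasibility forces the valuations $v_i$ to be combinatorially nontrivial, whereas tight feasible shares are easiest to build for structurally simple (near-identical or near-uniform) valuations, for which the instance tends to become MMS-\emph{feasible}. Resolving this is where the real work lies: I would either (i) design a bespoke MMS-infeasible instance whose valuations are still structured enough that matching feasible self maximizing shares can be written down and verified, or (ii) relax exact tightness to near-tightness, provided the chosen instance is \emph{robustly} infeasible (no allocation gives every agent even a fixed fraction of her MMS), so that shares reaching within that fraction of $\MMSi$ already force the contradiction. Everything else -- the ceiling inequality, the domination step, and the passage from $n=3$ to all $n$ -- is bookkeeping.
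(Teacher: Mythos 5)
Your skeleton is exactly the paper's argument: exhibit self-maximizing feasible shares whose values trace out the MMS pointwise on additive valuations, invoke domination to force $\Sharefullv \ge \MMSv$, and contradict the infeasibility of the MMS \cite{KurokawaPW18}. However, you explicitly leave the engine of the proof --- the ``tightness shares'' --- unconstructed, and this is precisely the content of the theorem; a plan that defers this step is not yet a proof. Moreover, your diagnosis of where the difficulty lies is mistaken. You claim there is a genuine tension between MMS-infeasibility of the profile (forcing the $v_i$ to be combinatorially nontrivial) and the constructibility of tight shares (easiest for simple valuations), and you propose resolving it via a bespoke instance or via near-tightness on a robustly infeasible instance. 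In fact there is no tension at all: the paper's construction (\cref{prop:picking-MMS}) is uniform over \emph{every} additive valuation $v$, however complicated. Take any MMS partition $(A_1,\ldots,A_n)$ for $v$, sort the items in non-increasing $v$-value and name each item by its rank, and define the picking order $\omega_v$ so that the agent associated with part $A_k$ picks in exactly the rounds whose indices are the ranks of the items of $A_k$. For additive valuations both the agent and the adversary pick greedily, so the item available at round $j$ has value at least $v(e_j)$; hence each identity secures at least the value of her part, the minimum over identities is at least $\MMSv$, and it is at most $\MMSv$ since the share is feasible. The resulting $\omega_v$-Picking-Order share is feasible and self-maximizing for the \emph{entire} class $\CC$ (\cref{prop:picking-well-behaved,prop:picking-feasible}), including non-additive valuations, exactly meeting your requirement that the share be feasible against all profiles. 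With this one lemma your contradiction goes through verbatim.

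Your fallback (ii) deserves a separate warning, because it would fail outright. Robust infeasibility at a fixed fraction $\rho$ bounded away from $1$ does not exist for additive valuations: it is known \cite{GT20} that an allocation giving every agent a $\left(\frac{3}{4}+\frac{1}{12n}\right)$ fraction of her MMS always exists, so any instance is ``$\rho$-feasible'' for all $\rho \le \frac{3}{4}+\frac{1}{12n}$; the known infeasibility gaps are tiny (e.g., $\rho \le \frac{39}{40}$ for $n=3$), so the near-tightness route would require shares within a $\frac{39}{40}$ factor of $\MMSi$ at $v_i$ --- harder, not easier, than the exact construction you were trying to avoid. One further small caveat: your opening ``ceiling'' observation ($\Sharefullv \le \MMSv$ for every feasible share, which is \cref{prop:feasibleMMS-maximal}) is correct and is indeed used implicitly, but note that it alone does not make the MMS the supremum of \emph{self-maximizing} feasible shares; that is exactly what the picking-order construction supplies, and it is the missing idea in your proposal.
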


Note that the impossibility result holds although we do not require the dominating share itself to be self maximizing. 
Additionally, note that our  impossibility result does not rely on bounded computational power -- it holds even when there are no computational constraints. 

We prove the theorem by defining a family of shares that are feasible and are self maximizing for any class $\CC$ of valuations.\footnote{For additive valuations these shares are also poly-time computable.} 
We also show that for each additive valuation, there is a share in the family that has value that is the same as the MMS for that valuation. Thus, to dominate all these shares, the dominating share must dominate the MMS. 
To complete the proof we use the fact that the MMS is not feasible for additive valuations (\cite{KurokawaPW18}).

The theorem can also be proven as a corollary of another result that we prove and is of independent interest. 
We show that any feasible share can be transformed to a feasible share that dominates it, and is self-maximizing. 
Thus, in a sense, we can always assume that a feasible share is self-maximizing. 
This transformation is not done in polynomial time, thus, when computation is important, 
this transformation does not show that self maximization comes for free (computationally). 

\begin{restatable}{theorem}{thmFeasibleToSMFeasible}
   \label{thm:ICdominates}
    Let $\CC$ be an arbitrary class of valuation functions, and let $s$ be an arbitrary feasible share for class  
    $\CC$. 
    Then for class 
    $\CC$ there exists a feasible share $s'$ 
    that dominates $s$ and is  
    self-maximizing.
\end{restatable}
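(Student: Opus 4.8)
Fix the number of agents $n$. The plan is to build a single monotone operator on shares that raises a share to the best guarantee an agent could secure by reporting optimally, and then to iterate it to a fixed point. For a share $s$ and valuations $v,w\in\CC$, the guarantee obtained by reporting $w$ while holding true valuation $v$ is $\ShareGtwo{v}{w}=\min\{v(B):w(B)\ge \Sharefullval{w}\}$, the worst true value of a bundle the allocator may hand over under the report $w$. I define the operator $T$ by
\[
(Ts)(v,n)\;=\;\max_{w\in\CC}\;\ShareGtwo{v}{w},
\]
where the maximum is attained because each $\ShareGtwo{v}{w}$ equals $v(B)$ for some bundle $B$ and hence ranges over the finite set $\{v(B):B\subseteq\items\}$. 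Three properties drive the argument. First, taking $w=v$ gives $(Ts)(v,n)\ge\ShareGvv\ge\Sharefullv$, so $Ts$ dominates $s$; since every acceptable bundle has value at most $v(\items)$, $Ts$ is realizable and inherits independence of item names. Second, $T$ is monotone: raising $s$ only shrinks each acceptable set $\{B:w(B)\ge \Sharefullval{w}\}$ and hence can only raise the inner minimum. Third, and crucially, $T$ preserves feasibility: given a profile $(v_1,\dots,v_n)$, let each agent $i$ report a maximizer $w_i\in\CC$ of $\ShareGtwo{v_i}{\cdot}$; feasibility of $s$ applied to $(w_1,\dots,w_n)$ yields an allocation with $w_i(B_i)\ge \Sharefullval{w_i}$, hence $v_i(B_i)\ge\ShareGtwo{v_i}{w_i}=(Ts)(v_i,n)$, so the same allocation is acceptable under $Ts$.

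Next I would iterate. Because $\items$ is finite, for $k\ge 1$ the value $(T^ks)(v,n)$ lies in the finite set $\{v(B):B\subseteq\items\}$, so the pointwise nondecreasing sequence $s\le Ts\le T^2s\le\cdots\le v(\items)$ stabilizes at each $v$. Let $s^\ast=\sup_k T^ks$ be the pointwise limit; it is a valid share dominating $s$. Feasibility of $s^\ast$ follows from a pigeonhole argument: for a fixed profile there are finitely many allocations, so one allocation $\alpha$ is acceptable under $T^ks$ for infinitely many $k$; for those $k$ we have $v_i(\alpha_i)\ge(T^ks)(v_i,n)$, and since the iterates increase to $s^\ast$ this gives $v_i(\alpha_i)\ge s^\ast(v_i,n)$ for all $i$, so $\alpha$ is acceptable under $s^\ast$.

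Then I would verify that $s^\ast$ is a fixed point, $Ts^\ast=s^\ast$. The inequality $Ts^\ast\ge s^\ast$ is immediate from monotonicity (from $s^\ast\ge T^ks$ we get $Ts^\ast\ge T^{k+1}s$, then take the supremum). For the reverse, fix $v$ and a maximizer $w^\ast$ attaining $(Ts^\ast)(v,n)=\min\{v(B):w^\ast(B)\ge s^\ast(w^\ast,n)\}$; since the iterates at $w^\ast$ stabilize, $s^\ast(w^\ast,n)=(T^Ks)(w^\ast,n)$ for some finite $K$, so the acceptable set for $w^\ast$ is unchanged upon replacing $s^\ast$ by $T^Ks$, giving $(Ts^\ast)(v,n)\le(T^{K+1}s)(v,n)\le s^\ast(v,n)$. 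Finally, the fixed-point equation is exactly self-maximization: it reads $\max_w\ShareGtwo{v}{w}=s^\ast(v,n)$, while the threshold bound gives $\ShareGvv=\min\{v(B):v(B)\ge s^\ast(v,n)\}\ge s^\ast(v,n)$; combined with $\ShareGvv\le\max_w\ShareGtwo{v}{w}$ this forces $\ShareGvv=s^\ast(v,n)=\max_w\ShareGtwo{v}{w}$, i.e.\ truthful reporting maximizes the implied guarantee.

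The conceptual core is the feasibility-preservation of $T$ through the optimal-report argument. The main obstacle I anticipate is the passage to the limit: proving that $s^\ast$ stays feasible and is a genuine fixed point of $T$. Both steps rest on the finiteness of $\items$ (finitely many bundle-values makes the iteration stabilize pointwise and the maximum over reports attained, and finitely many allocations lets a single allocation certify feasibility of the limit). I would also remark that the construction searches over all of $\CC$ and over all iterates, which is why it need not be polynomial-time, consistent with the discussion following the statement.
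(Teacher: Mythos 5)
Your proof is correct, but it takes a genuinely different route from the paper's. Your operator $T$, with $(Ts)(v,n)=\max_{w\in\CC}\ShareGtwo{v}{w}$, is exactly the ``best implied guarantee'' map that the paper announces as its main idea, yet the paper's actual proof avoids iteration entirely: it fixes a deterministic allocation algorithm $A_{m,n}$ witnessing feasibility of $s$, partitions $\CC$ into at most $2^{2^m}$ subclasses according to the set of bundles that $A_{m,n}$ might hand to an agent holding a given valuation, and defines $s'(v,n)=\max_{x}\min_{\{B \,:\, x_B=1\}}v(B)$ over all subclasses $x$. Feasibility is proved by replacing each $v_i$ with a representative of its maximizing subclass and running $A_{m,n}$, and self-maximization holds after this single step because the share value of any report $v'$ is itself the minimum of $v'$ over a concrete family of bundles, all of which remain acceptable for $v'$ under $s'$. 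Your one-shot $Ts$ lacks this property --- raising the share shrinks the acceptable sets, so the implied guarantee of a misreport under $Ts$ can exceed $(Ts)(v,n)$ --- and you correctly compensate by iterating to a pointwise fixed point; your three supporting arguments (stabilization via finiteness of $\{v(B):B\subseteq\items\}$, the pigeonhole over the finitely many allocations to carry feasibility to the limit, and the equivalence of fixed points of $T$ with self-maximizing shares) are all sound, with the feasibility-preservation step (each agent reports her optimal $w_i$ and feasibility of $s$ is invoked on the profile $(w_1,\dots,w_n)$) doing the same work as the paper's class-representative substitution. Comparing what each buys: the paper's construction is one-shot and shorter, but depends on an arbitrary choice of algorithm and tie-breaking, and in fact pointwise dominates your $Ts$ (each subclass is a subset of the corresponding acceptable set, so the minimum over it is at least $\ShareGtwo{v}{v'}$); your construction is canonical and algorithm-free --- the closure of $s$ under the natural best-response operator --- and isolates a reusable monotone, inflationary, feasibility-preserving operator on shares. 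Both constructions are equally non-constructive computationally, consistent with the remark following the theorem. One small point you gloss over (as does the paper): name independence of the constructed share implicitly uses that $\CC$ is closed under permutations of item names.
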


\cref{thm:main} shows that there is no SM-dominating feasible share for additive valuations and beyond.  
The direct implication of this result is that there is no ''ultimate''  feasible share for these classes: no matter how we define a feasible share, there will always be room for additional shares not dominated by it. 

We thus move our focus to Question \ref{Q2}, and consider the problem of finding a share that is poly-time computable and self maximizing and is a 
$\rho$-dominating feasible share. 
That is, we relax exact domination to approximate domination, and look for shares that are $\rho$-dominating and have $\rho$ that is as large as possible. We first {observe} that for any class of valuations $\CC$, a share is a $\rho$-dominating feasible share if and only if it $\rho$-dominates the MMS (see \cref{prop:dom-MMS-all}).

{The rest of our paper focuses on additive valuations over items that are goods}.
Observe that as the MMS is not feasible for more than two agents, and hard to compute even for two agents, we cannot hope to find a $\rho$-dominating share for $\rho=1$,  and not even for two agents (if we want the share to be polynomial time computable). We thus look for 
poly-time computable and self maximizing feasible shares that are  
$\rho$-dominating, with $\rho<1$ that is as large as possible. 

We present a class of shares for additive valuations  {over goods} 
which we call {\em ordinal maximin shares}, and show that every share in this class is self maximizing. {Moreover, for two agents, any share in this class is feasible.}
We then present {several $\rho$-domination results, 
for each we define} a share that is an ordinal maximin share (thus self maximizing), {and we show that each is feasible} and can be computed in polynomial time. 

\begin{theorem} \label{thm:intro-add}
For the class of additive valuations  {over goods} 
with $n$ agents, there exists a share that is feasible, self-maximizing, polynomial-time computable and $\frac{2n}{3n-1}$-dominating. 
{For {at most four agents this share is} $\frac{4}{5}$-dominating.} 
{Furthermore, for two agents,
{for any positive $\epsilon$} there exists a share that is feasible, self-maximizing, polynomial-time computable and is $(1 - \epsilon)$-dominating. }
\end{theorem}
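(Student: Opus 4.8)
The plan is to produce, in each regime, an explicit \emph{ordinal maximin share} and then verify the four required properties one at a time, leaning on two facts already in place. First, every share in the ordinal maximin class is self-maximizing, so that property comes for free from membership in the class. Second, by \cref{prop:dom-MMS-all} a feasible share is $\rho$-dominating if and only if it $\rho$-dominates the MMS; this lets me avoid reasoning about arbitrary competing feasible shares and instead reduce every domination claim to the single inequality $\Sharefullvi \ge \rho \cdot \MMSi$. Thus for each construction the work splits into two genuinely substantive tasks — establishing feasibility, and lower-bounding the share value by $\rho$ times the maximin value — while self-maximization and (for additive valuations) poly-time computability are comparatively routine.

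For the general $n$-agent share I would set $\Sharefullvi$ to be a rank-based (hence ordinal) maximin quantity read off from the sorted singleton values of $v_i$, scaled by the target factor $\frac{2n}{3n-1}$; poly-time computability is then immediate because the value is determined by the sorted list of item values together with a maximin over polynomially many ordinal groupings. The domination side, $\Sharefullvi \ge \frac{2n}{3n-1}\,\MMSi$, becomes a purely ordinal comparison between the chosen grouping and an optimal maximin partition. For feasibility I would design a threshold allocation: repeatedly peel off any item an agent values at least her threshold and assign it as a singleton (a ``reduction'' that decreases the number of remaining agents), and otherwise hand an agent a minimal bag she values at least her threshold, maintaining by a counting/potential invariant that items never run out before all agents are served. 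The factor $\frac{2n}{3n-1}$ is precisely what makes this invariant close: it is the largest threshold for which, after removing one acceptable bundle, enough high-ranked value provably remains for everyone else.

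The refinement to $\frac{4}{5}$ for at most four agents uses the same share but a sharper feasibility analysis: with $n \le 4$ the possible configurations of ``expensive'' items are few enough to enumerate, and one checks directly that the reduction-plus-bag-filling allocation still succeeds at the higher threshold $\frac{4}{5}\,\MMSi$, while the matching ordinal inequality against the maximin partition tightens because few agents constrain the worst case. For two agents the feasibility obstacle vanishes entirely, since any ordinal maximin share is automatically feasible there; the only remaining tension is reconciling poly-time computability with closeness to the NP-hard $MMS_2$. I would therefore define the share through a PTAS for the two-way partition of the sorted items, taking the value of the smaller side of a $(1-\epsilon)$-approximate balanced split, but phrased as a maximin over ordinally determined candidate partitions so that it stays inside the ordinal maximin class (hence self-maximizing) while guaranteeing $\Sharefullvi \ge (1-\epsilon)\,\MMSi$.

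The main obstacle I anticipate is the feasibility half of the general-$n$ construction together with the exact constant. Feasibility and domination pull in opposite directions — raising the share value makes $\rho$-domination of MMS trivial but destroys the existence of a satisfying allocation — so the crux is to pin down the one ordinal quantity and companion allocation algorithm at which the two requirements meet, and then to run the reduction/bag-filling invariant tightly enough to certify exactly $\frac{2n}{3n-1}$ (and the $\frac{4}{5}$ sharpening for small $n$) rather than some weaker fraction.
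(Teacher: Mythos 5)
Your skeleton matches the paper's at the top level: self-maximization obtained for free from membership in the ordinal maximin class (\cref{prop:ordinal}), reduction of every domination claim to $\rho$-dominating the MMS (\cref{prop:dom-MMS-all}), and, for $n=2$, a PTAS for the partition problem re-engineered as a maximin over an ordinally defined family of partitions so that it stays self-maximizing, with feasibility by cut-and-choose (\cref{obs:ordinal-feasible}, \cref{thm:two-agents}); that part of your plan is essentially the paper's. However, your general-$n$ construction contains a genuine inconsistency. A share obtained by \emph{scaling} an ordinal maximin quantity by $\frac{2n}{3n-1}$ is no longer in the ordinal maximin class, since its value is not the minimum bundle value of any actual partition, and scaled shares are exactly the kind the paper shows fail self-maximization: $\rho$-MMS is not self-maximizing for any $\rho<1$ (\cref{prop:shares-self}). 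So ``self-maximization comes for free'' and ``scaled by the target factor'' cannot both hold. Relatedly, your feasibility argument runs the peel-off/bag-filling procedure at the threshold $\frac{2n}{3n-1}\MMSi$, which is NP-hard to compute, undermining poly-time computability. The paper's resolution is to define the share with no scaling, as an honest maximin over a restricted but poly-time-searchable family of \emph{nested} partitions ($NS_{n,q}$, computed by dynamic programming in \cref{lem:NS-poly-time}), to prove separately that its value is at least $\frac{2n}{3n-1}\MMSi$ (\cref{lem:twothirdsapprox}), and to run your reduction procedure with each agent's own computable $NS$ share value as her threshold, maintaining the invariant that the remaining agents' $NS_{n-1,q}$ over the remaining items does not decrease (\cref{lem:fixqlargen}).

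The $\frac{4}{5}$ claim for $n\le 4$ also cannot be ``the same share with a sharper feasibility analysis.'' If the top $n$ items are forced into separate bins, as your singleton-peeling structure imposes (the $q=1$ regime), the share provably fails to $\frac{4}{5}$-dominate the MMS already for $n=3$: the paper shows $\rho_{3,1}\le 0.77$ (\cref{pro:NEWnegative3}) and stresses that the $\frac{4}{5}$ bound crucially requires partitions in which the first $n$ items may share bins, i.e., $NS_{n,3}$. With that richer family, feasibility, which you dismiss as a small enumeration, becomes the hard step: the paper proves that two $NS_{3,3}$ partitions, each specified by four cutting points, can never be fully intersecting (\cref{lem:not-fully}), then extracts an acceptable allocation through a delicate case analysis (\cref{lem:no-inter-feasible}), and establishes $\rho_{4,3}=\frac{4}{5}$ by a computer-assisted MILP over instances reduced to at most $14$ items (\cref{lem:rho43-ub}). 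None of this is reachable by the threshold-bag-filling invariant you propose, so the middle claim of the theorem is where your outline has a real gap rather than a routine verification.
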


{Interestingly, the result for four agents implies the first known poly-time allocation algorithm that gives every agent a $\frac{4}{5}$ fraction of her MMS (the prior result of $\frac{4}{5}$ {by \cite{GM19}} was only existential).} 

We remark that it is known~\cite{GT20} that for additive valuations it is possible to find an allocation that gives every agent $\left(\frac{3}{4} + \frac{1}{12n}\right)$  {fraction} of her MMS.  
This immediately implies that the $\left(\frac{3}{4} + \frac{1}{12n}\right)$-MMS is a share that is feasible. 
Yet, this share is not self maximizing, as so is the case for any $\rho$-MMS share for $\rho<1$ (see \cref{prop:shares-self}). 
One can apply \cref{thm:ICdominates} on this share, to transform it to a feasible share that is self maximizing and $\left(\frac{3}{4} + \frac{1}{12n}\right)$-dominating, but the transformation is not poly-time computable. 
Alternatively, one can consider the poly-time algorithm of~\cite{GT20} that finds an allocation that gives every agent $\frac{3}{4}$ of her MMS and use it to define a {polynomial time computable} share as follows: for valuation $v$ the share is defined to be the lowest value of any of the $n$ bundles that the algorithm outputs, when the input is $n$ agents each with valuation $v$ (same valuation $v$ for all agents). However, such a share would not be self maximizing\footnote{For example, if $n=2$ and item values are $9,5,5,1$, then the bundle containing only the first item is {acceptable}
under this share. However, if the values $6,5,5,4$ are reported instead, then every {acceptable} bundle {that might be allocated by the algorithm of~\cite{GT20}} has value at least~10, with respect to both the reported valuation and the true valuation.}. 
Moreover, it is not clear whether this share is feasible\footnote{One might hope that given additive valuations $v_1, \ldots, v_n$, running the allocation algorithm of~\cite{GT20} on these valuation functions will produce an allocation that gives every agent at least her share. However, this might not be true, because the share value for each agent $i$ is determined by running the algorithm on a different instance, one in which all valuation functions are $v_i$. A-priori, there is no reason to believe that in the former run agent $i$ would necessarily get a higher value than the share value determined by the latter run.}.
We leave open the problem of finding the largest value of $\rho$ for which there is a $\rho$-dominating share that is feasible, self maximizing, and poly-time computable {for additive valuations. }

\subsection{Related Work}\label{sec:intro-related}

In our work we initiate a systematic study of the notion of fair shares in item allocation problems. We are not aware of any previous such systematic studies. However, various shares have been proposed in previous work, and were extensively studied. For divisible items and additive valuations, the proportional share (PS) can always be guaranteed, 
or using the terminology of the current paper, it is feasible. For indivisible items, the PS is not feasible. This served as partial motivation for introducing other types of shares, and the one which has been most extensively studied is the maximin share (MMS), introduced in~\cite{Budish11}. For additive valuations, the MMS is no larger than the PS, though for non-additive valuations it may be larger.
It turns out that for additive valuation and $n \ge 3$, also the MMS is not feasible~\cite{KurokawaPW18}. 
This led to multiple works [\citealp{KurokawaPW18}, \citealp{amanatidis2017approximation}, \citealp{GM19}, \citealp{BK20}, \citealp{GhodsiHSSY18}, \citealp{garg2019approximating} ,\citealp{GT20}, \citealp{FST21}] trying to determine the best ratio $\rho$ such that in instances with additive valuations, there always is an allocation giving every agent a bundle of value at least a $\rho$ fraction of 
her MMS (a $\rho$-MMS allocation). The current known bounds on $\rho$ for general $n$ are $\frac{3}{4} + \Theta(\frac{1}{n}) \le \rho \le 1 - \frac{1}{n^4}$. For $n=3$, the bounds are $\frac{8}{9} \le \rho \le \frac{39}{40}$.  The definition of MMS easily extends to non-additive valuations, though the values of $\rho$ for which $\rho$-MMS allocations are guaranteed to exist deteriorate the more we extend the class of valuations (see \citealp{GhodsiHSSY18}). 

The fact that the MMS is not feasible led to consideration of other notions of shares. A closely related notion is the ``$1$-out-of-$d$'' share (see~\cite{HosseiniSearns21}, for example), in which in an instance with $n$ agents, the share of the agent is the value of her MMS in instances with $d$ agents. If $d$ is sufficiently larger than $n$ ($d = 2n$ certainly suffices, and it is an open question whether $d=n+1$ suffices), then the ``$1$-out-of-$d$'' share becomes feasible. However, it does not provide any approximation to the MMS ({for $d>n$} the ``$1$-out-of-$d$'' share might be~0 even when the MMS is strictly positive). A different type of share is the Truncated-Proportional Share (TPS)~\cite{BEF2021c}. Its value (for additive valuations) lies between the PS and the MMS, and hence it is not feasible. (The TPS was not introduced so as to handle feasibility issues, but for different reasons: using the TPS simplifies proofs concerning the approximation ratios achievable by allocations.) 

{The main focus of our paper is the fair}
allocation of indivisible \emph{goods} 
to agents with \emph{equal} entitlement. Shares have been extended to other related settings. One such setting is allocation of chores (items of negative value). See for example~\cite{AzizRSW17, huang2019algorithmic}. Another such setting is that of agents of arbitrary entitlement. This setting led to definitions of additional types of shares, including the WMMS~\cite{farhadi2019fair} (which in fact does not qualify as a share under our definitions, as it depends on more than just the valuation and entitlement of the agent), the ``$\ell$-out-of-$d$'' share~\cite{BabaioffNT2020}, and the AnyPrice Share (APS) 
~\cite{BEF2021b}. 
We {briefly discuss} these related settings in Section~\ref{sec:discussion}. 

In the current paper, the criterion for ``fairness'' of allocations is based on shares. Other fairness criteria have also been extensively studied. For additive valuations, the notion of {\em envy freeness} 
provides a strengthening of the proportionality requirement, in the sense that every envy free allocation (one in which no agent envies the bundle received by a different agent) offers every agent at least her proportional share, and hence at least also her MMS. (For non-additive valuations, envy free allocations need not give every agent her MMS.) Envy free allocations exist for divisible items (and additive valuations), but need not exist for indivisible items. A relaxation of envy freeness, envy-free-up-to-one-good (EF1)~\cite{Budish11}, is feasible for every class of valuation functions~\cite{LiptonMMS04}, but does not even offer {better than a $\frac{1}{n}$} approximation to the MMS when valuations are additive\footnote{{For example, in an EF1 allocation in which an agent gets a single item of value~1, and every other agent gets one item of value~1 and one item of value~$n$.}}. 
An intermediate notion, envy-free-up-to-any-good (EFX)~\citep{CKMPSW19}, does offer a constant approximation to the MMS when valuations are additive, but it is not known if it is feasible {beyond two agent {for general valuations}, and beyond three agents when valuations are additive} (this is a well known open question~\cite{Procaccia20}). In the current paper, we do not address envy-based fairness notions.

{
A self-maximizing share provides incentives to risk-averse agents to report their true valuations, in settings in which they do not know the inner details of the allocation mechanism and the valuations of other agents. However, if agents do know (or have beliefs about) what valuation functions others agents report and how the allocation algorithm actually works, then the question of whether the underlying share is self-maximizing might become irrelevant. This is because the agent no longer needs to use the share value as an indication to what value she will receive -- she can instead simulate runs of the allocation algorithm {to rank her possible reports}
based on her beliefs, and report accordingly.
The allocation algorithm $AM_n$ presented in Section~\ref{sec:contracts} (and well known also from previous work~\cite{ABM2016}) is incentive compatible even in such full information settings, but does not offer the agents a good approximation to their MMS value. More generally, no allocation algorithm that offers a good approximation to the MMS (and allocates all items) is incentive compatible in the full information setting~\cite{ABM2016}. {In contrast, we are able to combine incentives and a good approximation to the MMS, but our incentives are not for an \emph{allocation algorithm} but rather for the \emph{share guarantee}.
}  
}

\section{Model} 
\label{sec:model}

We consider fair allocation of a set  $\items$ of $m$ indivisible  items to a set $\agents$ of $n$ agents with equal entitlements to the items, without monetary transfers. 
Each agent's preference over subsets of items is captured by a \emph{valuation} $v$ which is a function that maps every subset of items $S\subseteq \items$ to a non-negative value $v(S)$ that the agent derives from getting $S$.
We assume that $v$ is normalized: $v(\emptyset) = 0$. 
If $v$ is non-decreasing then the items are referred to as {\em goods} (with respect to $v$). 
Throughout the paper we assume items are goods, except when explicitly stated otherwise. 

A valuation $v$ {over the set of goods $\items$} is called \textit{additive} if each {good}  
$j\in \items$ is associated with a value $v(j)\geq 0$, and the value of a set of goods $S \subseteq \items$ is $v(S)=\sum_{j\in S}v(j)$.
A  valuation $v$ is called \textit{unit-demand} if each {good} 
$j\in \items$ is associated with a value $v(j)\geq 0$, and the value of a set of {goods} 
$S \subseteq \items$ is $v(S)=\max_{j\in S} v(j)$.

{Let $v_i(\cdot)$ denote the valuation of agent $i$, and let $\vals =(v_1,\ldots,v_n)$ denote the vector of agents' valuations.}
We assume that there are no transfers (no money involved). 
The set of items $\items$ needs to be partitioned between the agents in a fair way.  
In this work we consider deterministic allocations (partitions). An \emph{allocation} $A$ is a partition of the items to 
$n$ disjoint bundles  $A_1, \ldots, A_n$ (some of which might be empty), where $A_i \subseteq {\items}$ for every $i\in \agents$.
{We denote the set of all allocations of $\items$ to the $n$ agents by $\mathcal{A}$.} 

We next present the {definitions of several types of shares that were introduced in prior work.} 
For a setting with $n$ agents with valuations over a set $\items$ of items, the {\em proportional share (PS)} of {an agent with valuation $v$  
is $\PSv = \frac{v(\items)}{n}$.}
For indivisible items, it might not possible to give every agent her proportional share, for example, if there are fewer items than agents. The \emph{maximin share (MMS)} aims to address {this issue. 
The MMS of an agent is the highest value that she} can ensure herself by splitting the goods to $n$ bundles and getting the worst one.

\begin{definition}[maximin Share (MMS)]
	\label{def:pes}
	The \emph{maximin share (MMS)} of {an agent with valuation $v$} 
	over a set of items $\items$, when there are $n$ agents, which we denote  by $\MMSv$,  
	is defined to be the highest  value she can ensure herself by splitting the item in $\items$ to $n$ bundles and getting the worst one. 
	Formally: 
	$$\MMSv =   \max_{(A_1,\ldots,A_n)\in \mathcal{A}} \min_{j \in [n] } \left\{v(A_j) 
	\right\},$$
\end{definition}
When {$v$ and  $n$ 
are clear from context we simply use the term MMS to denote the MMS share $\MMSv$.}

Somewhat surprisingly, there are allocation instances with additive valuations over goods for which there is no allocation that gives every agent her MMS~\cite{KurokawaPW18}. 
{In this work we study shares can be concurrently given to all agents, the relations between such shares, and properties of such shares.}

\section{Shares and the Quest for a Dominant Feasible Share}\label{sec:maximal}
{In this section we formally define the notions that we study in this paper and prove some basic properties of these notions. We start with the definition of a \emph{share} and the  
\emph{share guarantee}.
We then move to discuss \emph{feasible shares} and \emph{self-maximizing shares}.  
We conclude with \emph{share domination} and the notions of \emph{dominant feasible shares} and \emph{SM-dominant feasible shares}. }

{
We start by formalizing the notion of a \emph{share} for a class of valuations. 
A \emph{share} for a class of valuations $\CC$ is a function mapping the set $\items$ of items, a valuation $v$ in $\CC$} and number $n$ of agents, to a real value. 
The share is required {to be {\em realizable} (no larger than the value of the most valuable bundle, which is $v(\items)$ when items are goods)} and to not 
depend on names of items. To give an agent her share, we must give her a bundle of value at least her share - the minimal such value is defined to be her \emph{share guarantee}. 

\begin{definition}
A {{\em share function} (or simply \emph{share})} $s$ {for a class of valuations $\CC$} is a function that outputs a real value {$\Sharefullv$} when given two parameters: 
a valuation function {$v\in \CC$ over a set of items $\items$,} and the number of agents $n$. 
It needs to satisfy two properties:
\begin{itemize}
    \item Realizability: for every $v\in \CC$ and $n$ it holds that $\Sharefullv \le \max_{S\subseteq \items}v(S)$.\footnote{For the case that items are goods this condition is simply $\Sharefullv \le v(\items)$. {We present this general definition so that it will apply to valuations that are not necessarily over goods, e.g, apply when items that are chores (bads). }} 
    \item Name independence: renaming the items does not change the value of the share.
\end{itemize}
{Given a share function $s$, a bundle $S$ is \emph{acceptable} for agent with valuation $v$ and share $s$, if its value is at least the share guarantee: $v(S) \ge \Sharefullv$.}

Given a share function $s$, the associated {\em share guarantee} $\hat{s}$ is defined as the function satisfying 
$$\ShareGfullv \doteq \min_{\{S \; | \; v(S) \ge \Sharefullv\}} v(S).$$ 
Namely, for any given input,  the share guarantee is the minimum value of
 a bundle whose value is at least as high as the share {for that input}.
Alternatively, for any valuation $v$, the share guarantee is the lowest value of any acceptable bundle for the valuation $v$ and share $s$.
\end{definition}

Observe that for every share $s$, the realizability property implies that the corresponding share guarantee is well defined. {When items are goods, the realizability property implies that the corresponding share guarantee is non-negative}. Every allocation that gives an agent a bundle of value at least equal to the share, necessarily gives the agent a bundle of value at least equal to the share guarantee. As a simple example, if 
there are two agents ($n=2$) and 
there are two items with $v(e_1) = 1$ and $v(e_2) = 2$, then the proportional share guarantee is $2$ (whereas the proportional share is only $\frac{3}{2}$). 

{We remark that for any share that is defined to be a value of some bundle (like the MMS), the share guarantee is the same as the share itself. Yet, for a share that is not defined as a value of a bundle (like the proportional share), the share guarantee might be strictly larger than the share.}

The intended interpretation of a share guarantee is as a guarantee to an agent $i$ who reports her true valuation function $v_i$: if the underlying allocation algorithm is one that gives every agent at least her share (according to the reported valuation functions), then regardless of the reports of all other agents, agent $i$ is guaranteed to receive a bundle of value at least equal to her share guarantee. {Thus, we next discuss the concept of a \emph{feasible share}, a share for which it is always possible to give every agent an acceptable bundle according to the share.
}

\subsection{Feasible Shares}\label{sec:notion-feasible}

{We focus on shares for which every agent can always get her share.}
For a class $\CC$ of valuation functions, we say that a share $s$ for $\CC$ is \emph{feasible} if for any agents' valuation vector $\vals =(v_1,\ldots,v_n)$ such that $v_i\in \CC$ for every $i$, there {exists an \emph{acceptable} allocation $A$: 
an allocation that gives every agent an acceptable bundle. That is, for every agent $i$ the bundle $A_i$ has value at least her share: $v_i(A_i)\geq \Sharefullvi$.}
{Note that by definition, an allocation that is acceptable must also gives every agent $i$ a bundle of value at least her share guarantee, that is, $v_i(A_i)\geq \ShareGfullvi$. }

It is well known that feasible shares have the advantage that even in strategic environments, agents will get their shares under equilibrium behaviour.\footnote{Such properties are much harder to prove for envy-based fairness notions like EF1, see {\cite{AmanatidisBFLLR21}}.} 
Fix any class of valuations $\CC$, a share $s$ that is feasible for $\CC$, and any algorithm for share $s$: an algorithm that for any vector $\vals' =(v'_1,\ldots,v'_n)$ of $n$ reported valuations in $\CC$ outputs an allocation $A'$ that gives each agent her share with respect to her reported valuation ($v'_i(A'_i)\geq \Sharefullvip$).  The mechanism that is defined by the algorithm is not necessarily (dominant strategy) truthful {(as is the case for the round-robin mechanism for the share $s_n$ presented in Section~\ref{sec:contracts}).} 
Nevertheless, as the share is feasible, each agent has a ``safe strategy'' (being truthful) which guarantees that she gets some minimal utility (her share guarantee). 
This property implies that outcomes of the mechanism will have nice utility guarantees even when agents are acting strategically:  if the algorithm is deterministic, in any pure Nash equilibrium of the game, each agent's
utility must be at least her share guarantee. Similarly, if agents are risk neutral, if the algorithm is randomized then in any mixed Nash equilibrium of the game each agent's expected  utility must be at least her share guarantee.\footnote{Similar claim is also true for any mixed Nash equilibrium in Bayesian settings.}

As observed above, for any deterministic algorithm for a feasible share, in any pure Nash equilibrium of the game, each agent's utility must be at least her share guarantee.
Yet, not every allocation algorithm will induce a game in which a pure Nash equilibrium exists. Does there always exist some algorithm for a feasible share $s$ for which a pure equilibrium always exists? We next show that such an algorithm indeed exists. 
\begin{proposition}\label{prop:nash-exists}
   Let $s$ be any feasible share. There exists an algorithm for share $s$ that induces a full-information game in which a pure Nash equilibrium always exists, and the equilibrium allocation is acceptable. 
\end{proposition}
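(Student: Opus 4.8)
The plan is to separate the two requirements. I would first observe that, for \emph{any} algorithm for a feasible share $s$, the allocation at every pure Nash equilibrium is automatically acceptable, so the entire content of the statement is the existence of an algorithm whose induced game possesses a pure equilibrium at all. Acceptability is free because reporting truthfully is a \emph{safe strategy}: if agent $i$ reports $v_i$, the algorithm must hand her a bundle $A_i$ with $v_i(A_i)\ge \Sharefullvi$, and any such bundle has value at least $\ShareGfullvi$ by definition of the share guarantee. Hence an agent can unilaterally secure $\ShareGfullvi$, so at every pure equilibrium her utility is at least $\ShareGfullvi$, i.e. the equilibrium allocation is acceptable. The real work is therefore to design one algorithm for $s$ whose induced normal-form game is guaranteed to admit a pure equilibrium; this is not automatic, since the proposition is stated precisely because generic algorithms for feasible shares (such as round robin) need not have this property.

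For the construction I would use a \emph{priority (serial-dictatorship) mechanism}. Fix an arbitrary order $1,\dots,n$ on the agents. Given a report profile, process the agents in this order, maintaining the set $R$ of yet-unallocated items (initially $\items$). When it is agent $i$'s turn, let $\mathcal{B}_i$ be the collection of bundles $S\subseteq R$ that are acceptable for her report and such that $R\setminus S$ can still be partitioned into bundles acceptable (with respect to $s$) for the remaining agents $i+1,\dots,n$; assign her the member of $\mathcal{B}_i$ of highest reported value (fixed tie-break) and remove it from $R$. Feasibility of $s$ guarantees $\mathcal{B}_1\ne\emptyset$, and by construction every step leaves an acceptably-partitionable remainder, so each $\mathcal{B}_i$ is nonempty and the output is always an acceptable allocation for the reports; thus this is a legitimate algorithm for $s$.

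The key structural feature is that the feasible collection $\mathcal{B}_1$ of the top-priority agent does \emph{not} depend on her own report: it is determined by the reports of the others, while her own report only selects within $\mathcal{B}_1$. Consequently, whatever she reports she receives some bundle of $\mathcal{B}_1$, and reporting truthfully selects the element of $\mathcal{B}_1$ of highest \emph{true} value, so truthful reporting is a best response for agent $1$ against every profile of the others. I would then try to push this downward by induction: having fixed best responses for the higher-priority agents, argue that each subsequent agent likewise faces a collection she can only select within, and that truthful reporting is again optimal, yielding the canonical serial-dictatorship allocation (on true valuations) as a pure equilibrium.

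The main obstacle is exactly the interdependence hidden in this induction: a \emph{lower}-priority agent's report enters the feasibility tests defining $\mathcal{B}_j$ for the higher-priority agents, so by misreporting a larger demand she can shrink a higher agent's options, force that agent onto a different bundle, and thereby free up items she herself covets. To close the argument I would modify the mechanism so that the feasibility certificate used at agent $j$'s step is \emph{monotone} and immune to the reports of the agents after $j$, for instance by testing partitionability against a fixed reference (worst-case) requirement rather than against the actual later reports, so that a later agent's deviation can only \emph{relax}, never tighten, an earlier agent's feasible collection. This removes the profitable manipulation and makes the induction go through. Finiteness of the relevant strategy space (given the others' reports, each agent can receive only one of finitely many bundles) means it suffices to verify the finitely many outcome-changing deviations, and by the first paragraph any pure equilibrium so produced is acceptable.
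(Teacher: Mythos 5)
Your opening observation is correct and coincides with the paper's closing argument: under \emph{any} algorithm for a feasible share, truthful reporting is a safe strategy, so at every pure Nash equilibrium each agent's utility is at least $s(v_i,n)$ and the equilibrium allocation is acceptable. The entire burden is therefore pure-equilibrium existence, and this is exactly where your construction has a genuine hole, one you yourself flag: in the serial-dictatorship mechanism, a lower-priority agent's report enters the partitionability test defining each higher-priority agent's collection $\mathcal{B}_j$, so the truthful profile need not be an equilibrium, and you exhibit no other candidate. The repair you sketch --- testing the remainder against a ``fixed reference (worst-case) requirement'' instead of the actual later reports --- cannot work as described. If the reference test quantifies over all possible later valuations in $\CC$, then $\mathcal{B}_1$ can be empty: feasibility of $s$ gives, for each \emph{fixed} profile, some acceptable allocation, but not a single bundle for agent~1 whose complement is acceptable simultaneously against every profile of the others. (Concretely, take $n=2$, the MMS share, and four items of equal value: any bundle $S$ with $|S| \ge 2$ fails against a $v_2$ concentrated on two items of $S$, since then $v_2(\items \setminus S)=0$ while $MMS_2(v_2)>0$; and any $S$ with $|S|\le 1$ fails agent~1's own acceptability when $v_1$ is uniform.) If instead the reference test is anything weaker than the actual later reports, the mechanism can output an allocation that denies some later agent her share with respect to her report, so it is no longer an algorithm for $s$ and the proposition's hypothesis is violated. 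Either horn destroys the induction, so the central claim of the proof is unestablished.

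The paper sidesteps this difficulty by changing the game form rather than engineering an incentive-compatible allocation rule: agents report \emph{sequentially}, each observing all earlier reports, and then an arbitrary fixed deterministic rule selects some acceptable allocation for the reported profile (nonempty by feasibility). Backward induction on this finite-horizon full-information game yields a subgame-perfect, hence pure Nash, equilibrium; no property of the selection rule beyond acceptability is needed, and the stage-wise argmax exists because reports can induce only finitely many allocations. Since the proposition asks only for \emph{some} algorithm inducing \emph{some} full-information game with a pure equilibrium, your restriction to a simultaneous-move normal-form game made the problem strictly harder than necessary; to salvage your route you would need to prove equilibrium existence for your mechanism directly (which the manipulation you identified suggests may be false), whereas adopting the sequential-reporting formulation resolves the statement immediately.
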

\begin{proof}
The game is defined as follows. First an arbitrary order over the agents is fixed. Without loss of generality we assume that agents' names are according to the given order (so agent $1$ is the first, $2$ is second, etc.). 
In the first stage, agents are approached according to that order, and each is requested to report a valuation, after being informed of the reports of all prior agents. We denote the reported valuation of agent $i$ by $v'_i$. 
Let $S'$ be the family of all acceptable allocations for share $s$ with respect to the reported valuations $(v'_1,v'_2,\ldots,v'_n)$. As the share is feasible the set $S'$ is non-empty. 
Now, in the second stage, a deterministic algorithm (that was fixed in advance) picks an allocation out of $S'$ and implements it. 
For example, the algorithm might pick the first (according to some fixed order) welfare maximizing allocation in $S'$.

We next claim that in this game there exists a pure Nash equilibrium (and even a subgame-perfect equilibrium). This holds as this is a full-information sequential game, and thus can be solved by backward induction. 
Specifically, in this subgame-perfect equilibrium, when considering her report, agent $n$, knowing all reports $(v'_1,v'_2,\ldots,v'_{n-1})$ of other agents and  the deterministic algorithm that will be used to pick an allocation, reports some valuation $v'_n$ that maximizes her utility. This serves as the base case for the induction. 
Similarly, any agent $i$ (from $n-1$ to $1$), knowing all reports $(v'_1,v'_2,\ldots,v'_{i-1})$ of other agents before her, and the strategies used by the agents that report after her (as defined recursively by this procedure), reports some valuation $v'_i$ that maximizes her utility. These strategies clearly define a subgame-perfect equilibrium. The allocation is acceptable as any agent can guarantee she gets her share by being truthful, so her report must give her at least as high utility as her share. 
\end{proof}

We note that the algorithm might not run in polynomial time, and even if it does, it might be computationally hard to find these equilibrium strategies. 

\subsection{Self-Maximizing Shares}\label{sec:notion-self-max}

We next define the notion of a {\em self-maximizing} share.
We  say that a share function is {\em self-maximizing} {for a class of valuations $\CC$ if for every valuation function in $\CC$}, reporting the true valuation function is a dominant strategy for maximizing 
the lowest value bundle (with respect to the true valuation) among all bundles that are acceptable for the reported valuation.
{To formalize this, we first define some useful notation. 
{For share $s$ and any valuation {$v\in \CC$}, define the share \emph{implied guarantee} 
when reporting $v'\in\CC$:}
$$\ShareG = \min_{\{S \; | \; v'(S) \ge \Sharefullvp\}} v(S)$$

\noindent to be the lowest {true value of any set (measured by the true valuation $v$),} among all sets that 
are acceptable with respect to $v'$.
Observe that {the implied guarantee when the report is the true valuation is simply the share guarantee, that is, } $\ShareGvv = \ShareGfullv$.

{Note that in the definition of the implied guarantee, $v'$ 
only influences the {family of bundles that are considered acceptable, yet the values of these bundles are}
measured with respect to the \emph{true} valuation $v$.
}

{The implied guarantee $\ShareG$ depends on $\Sharefullvp$ which depends on} $n$, but as  $n$ will always be clear from the context we do not make this dependence explicit in the notation, in order to keep the notation simple.

To illustrate the concept of the share implied guarantee, consider the proportional share $s$ over additive valuations. 
Assume that $n=3$ and consider valuations over four items, with items of values $(5,4,4,2)$ for valuation $v$. The value of the share is $\Sharefullv = 5$ and the share guarantee is $\ShareGfullv = 5$ as well. For reported valuation $v'$ with reported item values $(4,4,4,3)$, it holds that $\Sharefullvp = 5$ so no single item forms an acceptable bundle, and the set of acceptable bundles for $v'$ contains all bundles of size at least 2. The implied guarantee is the true value of the least valuable pair, which has value of $6$ and thus the implied guarantee when reporting $v'$ is $\ShareG =6$. Observe that this is larger than the share guarantee ($\ShareG = 6> \ShareGfullv = 5$), so by misreporting the implied guarantee increases.       

{With the notation of implied guarantee} 
we can now formally define the notion of self-maximizing share, {a share in which truthful reporting ($v'=v$) maximizes $\ShareGfunc$, the implied guarantee}:}

\begin{definition}
{  
A share $s$ is {\em self-maximizing} for class of valuations $\CC$
if for every set $\items$ of items and number of agents $n$, for {every true valuation $v\in\CC$ and every report $v'\in\CC$} 
it holds that 
$$\ShareGvv = \ShareGfullv\geq \ShareG$$ 
}
\end{definition}

Let us briefly discuss the relation between a share being self-maximizing and incentives to agents to report their true valuation functions (``truthfulness''). Suppose first that we are in a setting in which the agent knows the items and her valuation function. 
The agent does not know the valuation functions of other agents, and does not know the details of the allocation algorithm, except for the fact that it gives every agent at least her share guarantee with respect to its input {(the share is feasible)}, for a share 
that is self-maximizing. In such a setting, for a risk averse agent who wishes to maximize the minimum possible value that it receives (minimum possible with respect to the information that the agent does not have), being truthful is a dominant strategy {(for any valuation class for which the worse case can indeed be realized, as is the case for most classes of interest, see Appendix \ref{sec:only-guaranteed}).}
By being truthful she guarantees to herself at least her share guarantee, and no misreporting of her valuation function offers a higher guarantee. However, for an agent that is either not risk averse (for example -- in a Bayesian setting in which there is some probability distribution over valuations of other agents and over the choice of allocation algorithm to use), or knows the details of the allocation algorithm, truthfulness need not be a dominant strategy. 

\subsubsection{The MMS is Self Maximizing but any $\rho-$MMS is not}
\label{sec:MMS-self}

{We next show that the MMS is self maximizing {for any class of valuations, yet even for additive valuations,} for any positive $\rho<1$, the $\rho$-MMS is not. Additionally, we observe that the proportional share is not self maximizing {for $n \ge 3$ even for additive valuations (while it is so for $n=2$)}.}
{For the proof see \cref{app:prop-shares-self}.}

\begin{restatable}{proposition}{propSelfMaxShares}
\label{prop:shares-self} 
    The MMS is a self-maximizing share {for every class $\CC$ of valuations}. 
    In contrast, even just for additive valuation functions, 
    the  $\rho$-MMS (for any $0 < \rho < 1$) is not self-maximizing for $n \ge 2$, and 
    the proportional share is not self-maximizing for $n \ge 3$ {(while it is so for $n=2$)}.
\end{restatable}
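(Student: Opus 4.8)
The plan is to treat the three assertions separately, using the structural fact noted in the excerpt that whenever a share is defined as the value of a bundle the share guarantee coincides with the share; in particular $\ShareGfullv = \MMSv$ for the MMS. For the positive claim that the MMS is self-maximizing I would argue directly from the definition of the implied guarantee. Fix any class $\CC$, any true $v\in\CC$ and report $v'\in\CC$, and let $(A_1,\ldots,A_n)$ be a partition of $\items$ attaining $\MMSnum{n}(v')$, so that $v'(A_j)\ge \MMSnum{n}(v')$ for every $j$ and hence each $A_j$ is acceptable with respect to $v'$. This already gives $\ShareG \le \min_j v(A_j)$. But $(A_1,\ldots,A_n)$ is itself a feasible partition in the maximin problem for $v$, so $\min_j v(A_j)\le \MMSv$. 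Combining, $\ShareG \le \MMSv = \ShareGfullv = \ShareGvv$ for every report $v'$, which is precisely self-maximization. The argument uses nothing about the valuation class, so it holds for every $\CC$.

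For the negative claim on $\rho$-MMS I would exhibit, for each $n\ge 2$ and each $\rho\in(0,1)$, an additive instance admitting a profitable misreport. Take $n$ items of value $1$ and one extra item $c$ of value $\frac{1+\rho}{2}\in(\rho,1)$. A short computation gives $\MMSv = 1$, and since $\frac{1+\rho}{2}\ge\rho$ the singleton $\{c\}$ is acceptable, so $\ShareGfullv = \frac{1+\rho}{2} < 1$. Now let $v'$ agree with $v$ except that it lowers $c$ to value $\frac{\rho}{2}$; one checks $\MMSnum{n}(v')=1$ as well, so under $v'$ the acceptability threshold is $\rho$, the bundle $\{c\}$ is no longer acceptable, and every remaining acceptable bundle contains a value-$1$ item and thus has true value at least $1$. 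Hence $\ShareG = 1 > \frac{1+\rho}{2} = \ShareGvv$, so truth-telling is not optimal.

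For the proportional share I would prove both directions. For $n=2$ the key observation is that a bundle $S$ is acceptable with respect to any $v'$ iff $v'(S)\ge v'(\items)/2$, i.e. iff $S$ is a weakly-larger half; since $v'(S)+v'(\items\setminus S)=v'(\items)$, for the minimizing bundle $S^\star$ under truthful reporting at least one of $S^\star,\items\setminus S^\star$ is $v'$-acceptable, and either has true value at most $\ShareGfullv$ (the complement because $v(\items\setminus S^\star)\le v(\items)/2\le v(S^\star)$). This yields $\ShareG\le\ShareGvv$ for every $v'$, so the share is self-maximizing. For $n\ge 3$ I would give a counterexample family generalizing the $(5,4,4,2)$ versus $(4,4,4,3)$ example: true values assigning one item value $n$ and $n$ items value $n-1$ (so $\PSv=n$, and the value-$n$ singleton gives $\ShareGfullv=n$), against the report that lowers the special item to $n-1$ and spreads the freed unit of value thinly across the others (each gains $\frac1n$), keeping $\PSv=n$ but making every singleton unacceptable. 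Every acceptable bundle then has at least two items and hence true value at least $2n-2>n$ for $n\ge 3$, giving $\ShareG = 2n-2 > n = \ShareGvv$.

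The main obstacle I anticipate is not any single inequality but making the two negative families uniform in their parameters: I must choose item values that simultaneously admit a cheap acceptable bundle under truthful reporting and admit a misreport that leaves the proportional share (respectively the MMS) unchanged yet renders all cheap bundles unacceptable, for every $n$ and, in the $\rho$-MMS case, every $\rho\in(0,1)$. The delicate step is verifying that the misreport does not accidentally create a new acceptable bundle of low true value — the failure mode that arises when one naively pads a small instance with filler items — and it is this check that will require the most care in each family.
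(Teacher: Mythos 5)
Your proposal is correct, and it follows the same overall strategy as the paper's proof: a direct partition-transfer argument for the MMS (any $v'$-optimal MMS partition consists of $v'$-acceptable bundles and is itself a candidate partition for $v$, so $\ShareG \le \min_j v(A_j) \le \MMSv = \ShareGvv$), the same complement-based argument for the proportional share at $n=2$ (at least one of $S^\star$, $\items\setminus S^\star$ is $v'$-acceptable, and each has true value at most $\ShareGfullv$), and explicit misreporting counterexamples for the two negative claims. The only substantive difference lies in the counterexample families. For the $\rho$-MMS, the paper splits into three cases ($\tfrac12<\rho<1$, $0<\rho<\tfrac12$, and $\rho=\tfrac12$) with separate instances, whereas your single family ($n$ items of value $1$ plus one item of value $\tfrac{1+\rho}{2}$, misreported as $\tfrac{\rho}{2}$) handles all $\rho\in(0,1)$ uniformly: one checks that $\MMSv = \MMSnum{n}(v') = 1$ in both instances, that $\ShareGvv = \tfrac{1+\rho}{2} < 1$ since $\rho < \tfrac{1+\rho}{2} < 1$, and that under the report every acceptable bundle must contain a value-$1$ item (as $\tfrac{\rho}{2}<\rho$), so $\ShareG = 1$. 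This is a genuine, if modest, simplification over the paper's case analysis. For the proportional share at $n\ge 3$, your family (one item of value $n$ and $n$ items of value $n-1$, with the report lowering the special item to $n-1$ and adding $\tfrac1n$ to each other item) differs from the paper's (items of value $\tfrac{n-1}{n}$ plus two items of value $\tfrac12$) but operates by the identical mechanism: the report preserves $\PSv = n$ while rendering every singleton unacceptable, forcing every acceptable bundle to contain at least two items and hence have true value at least $2n-2 > n$. The verification you flag as delicate -- that the misreport creates no new cheap acceptable bundle -- indeed goes through in both of your families, since all bundle values in these instances take only finitely many forms that are easy to enumerate.
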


\subsubsection{Properties of Self-Maximizing Shares}\label{sec:prop-self}

In this section we show that any self-maximizing share {for class of valuations $\CC$} satisfies several natural properties: 

\begin{itemize}

\item {\em Scale Invariant:} 
{Assume that class of valuations $\CC$ is closed under multiplication by any positive constant ($v\in \CC$ implies $v'=c\cdot v$ is in $\CC$ for any $c>0$).}
A share function $s$ is \emph{scale invariant} {for $\CC$ if for every $c > 0$, scaling the valuation function $v\in \CC$} by a multiplicative factor of $c$, also scales the share guarantee by the same multiplicative factor $c$: 
{$\ShareGfullcv= c\cdot \ShareGfullv$ for every $c>0, v\in \CC$.}

    \item {\em Monotone:} A share function $s$ is \emph{monotone} {for $\CC$ if the share guarantee is a monotone function of the valuation function:} 
    {for every $v,v'\in \CC$ such that $v \ge v'$ (meaning that $v(S) \ge v'(S)$ for every $S \subseteq \items$) it holds that 
    $\ShareGfullv\geq \ShareGfullvp$.
    }
    
    \item {\em $1$-Lipschitz:} A share function $s$ is  {\em $1$-Lipschitz } {for $\CC$} if the following holds for every $\delta>0$ and valuation function $v\in \CC$:
    if the value of {every} 
    set changes by at most $\delta$ then the share guarantee increases by at most $\delta$. In other words,
    {for valuations $v,v'\in \CC$, if $|v'(S)- v(S)|\leq \delta$ for every set $S \in \items$,  then 
    $\ShareGfullvp\leq \ShareGfullv +\delta$.
    }
\end{itemize}

\begin{restatable}{proposition}{propSelfMaxProperties}
\label{prop:self-max-properties}
    Any share that is self maximizing {for $\CC$,} is monotone, $1$-Lipschitz and scale invariant {for $\CC$}. 
\end{restatable}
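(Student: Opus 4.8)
The plan is to derive all three properties from one structural observation together with the single self-maximizing inequality $\ShareGfullv = \ShareGvv \ge \ShareG$, which holds for every pair $v,v'\in\CC$. The observation is that the implied guarantee $\ShareG = \min_{\{S\,:\,v'(S)\ge \Sharefullvp\}} v(S)$ and the share guarantee $\ShareGvpvp = \ShareGfullvp = \min_{\{S\,:\,v'(S)\ge \Sharefullvp\}} v'(S)$ are minima over \emph{exactly the same} family of acceptable bundles $T = \{S : v'(S)\ge \Sharefullvp\}$, and differ only in whether a bundle $S$ is measured by $v$ or by $v'$. I would first record the elementary fact that if $f(S)\ge g(S)$ for every $S$ in a fixed finite family $T$, then $\min_{S\in T} f(S)\ge \min_{S\in T} g(S)$ (evaluate $f$ at its own minimizer, which dominates $g$ there, which in turn is at least $\min_T g$); finiteness of $T$ (there are only $2^{m}$ bundles) and realizability guarantee that the minima are attained and $T\neq\emptyset$.

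For monotonicity, assume $v\ge v'$ pointwise. Applying the fact with $f=v$ and $g=v'$ over $T$ gives $\ShareG\ge \ShareGvpvp=\ShareGfullvp$, and chaining with self-maximization $\ShareGfullv\ge \ShareG$ yields $\ShareGfullv\ge\ShareGfullvp$, as required. The $1$-Lipschitz property is the same argument carrying an additive slack: from $|v'(S)-v(S)|\le\delta$ I extract $v(S)\ge v'(S)-\delta$ on all of $T$, so the fact gives $\ShareG\ge \ShareGfullvp-\delta$, and self-maximization gives $\ShareGfullv\ge\ShareGfullvp-\delta$, i.e.\ $\ShareGfullvp\le\ShareGfullv+\delta$.

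Scale invariance is the only part that needs both directions of self-maximization, together with the closure of $\CC$ under positive scaling (so that $c\cdot v\in\CC$). Here I would use that rescaling the \emph{measuring} valuation by $c$ simply rescales a minimum over a fixed set by $c$. For the lower bound, apply self-maximization with true valuation $c\cdot v$ and report $v$: since $\ShareGtwo{c\cdot v}{v}$ minimizes $c\cdot v$ over the set $\{S : v(S)\ge\Sharefullv\}$, it equals $c\cdot\ShareGvv=c\cdot\ShareGfullv$, so $\ShareGfullcv=\ShareGcvcv\ge c\cdot\ShareGfullv$. For the upper bound, apply self-maximization with true valuation $v$ and report $c\cdot v$: since $\ShareGcvcv$ and $\ShareGvcv$ minimize $c\cdot v$ and $v$ over the common set $\{S:(c\cdot v)(S)\ge \Sharefullval{c\cdot v}\}$, we have $\ShareGvcv=\ShareGcvcv/c=\ShareGfullcv/c$, whence $\ShareGfullv=\ShareGvv\ge\ShareGvcv=\ShareGfullcv/c$. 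Combining the two bounds gives $\ShareGfullcv=c\cdot\ShareGfullv$.

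The step I expect to be the main (and essentially only) obstacle is scale invariance, precisely because the share \emph{function} $s$ is not assumed scale invariant: I have no control over how $\Sharefullval{c\cdot v}$ relates to $\Sharefullv$. The point of the argument is that this relationship is never needed — both self-maximizing inequalities are applied with the scaled valuation playing the role of a genuine report or true valuation, and the only algebra used is that scaling the objective of a fixed minimization problem scales its value by $c$. Everywhere else the reasoning is a routine ``same feasible set, compared objectives'' sandwich, so I would present the properties in the order monotone, $1$-Lipschitz, scale invariant, reusing the single min-monotonicity fact throughout.
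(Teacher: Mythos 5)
Your proposal is correct and follows essentially the same route as the paper's proof: monotonicity and the $1$-Lipschitz property both come from comparing the two minima taken over the common family of bundles acceptable for the report (measured by $v$ versus by $v'$), chained with the self-maximizing inequality, exactly as in the paper, which merely phrases the same comparisons contrapositively. For scale invariance your treatment is in fact slightly more complete than the paper's, which explicitly argues only the case $\ShareGcvcv > c\cdot\ShareGvv$ via the report $c\cdot v$ from true valuation $v$ (leaving the reverse case to an implicit symmetry under replacing $v$ by $c\cdot v$ and $c$ by $1/c$), whereas you derive both inequalities by applying self-maximization twice, once in each direction, using the same identity $c\cdot\ShareGvcv = \ShareGcvcv$.
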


{We present the proof in  \cref{app:prop-self-max-properties}.}

\subsection{The Dominant Feasible Share}\label{sec:dominant-feasible}

{Recall that a share is feasible if for any agents' valuation vector there exists an acceptable allocation. Clearly the share function that is identically zero is feasible, yet not very attractive as we want the feasible share to be ``as large as possible''. We next formalize what this means.}
One can naturally define a partial order over share functions. 
A share (function) $s$ \emph{dominates} share (function) $s'$ for a class of valuations functions $\CC$, if for every {$n$ (for which $s$ and $s'$ are defined) and every} valuation $v\in \CC$ it holds that $\Sharefullv\geq \Sharefullpv$. For a given class $\CC$ of valuations, is there a feasible share that is ``the best'' in the sense that it dominates all other feasible shares for this class? 
We say that a feasible share $s$ is a \emph{dominant feasible share} for a class of valuations functions $\CC$, if for the class $\CC$ {the share $s$} dominates any other feasible share $s'$. 
It is easy to see that any instance that demonstrates that the MMS is not feasible can be used to show that a dominant feasible share does not exist: Consider for example the instance $I$  presented in \cite{KurokawaPW18} for three agents with valuations denoted by $(v_1,v_2,v_3)$, for which it is not possible to give every agent her MMS. We can define three feasible shares $s_i$ for $i\in \{1,2,3\}$ in which $s_i(v_i,3)$ equals to the MMS of $v_i$, and for any $v\neq v_i$ we have $s_i(v,3)=0$. Each such share is clearly feasible. For instance $I$, any  share $s$ that dominates all these three shares must give every agent her MMS, but that is {known to be} impossible.  

The proof above feels very unsatisfactory as the three shares that are defined in the proof are clearly very artificial. Expecting a share to dominate even those artificial feasible shares seems too much to ask for. 
We thus 
only ask that $s$ dominate the subset of feasible shares that are also self maximizing. As being self maximizing is a strong requirement, it seems much more plausible that there might exist a feasible share that dominates only those  feasible shares that are self-maximizing.  
We say that a feasible share $s$ is a \emph{SM-dominant\footnote{SM stands for ''Self-Maximizing''. 
}
feasible share} for a class of valuations functions $\CC$, if for the class $\CC$ it dominates every other feasible and self-maximizing {(SM)} share $s'$.

Our main negative result is that for additive valuations (and beyond), even if we look for a feasible share that dominates only those {feasible} shares that are \emph{self maximizing}, such a share does not exist. {Note that even for SM-domination, while the dominating share $s$ needs to be feasible, we do \emph{not} require it to be self maximizing (making our negative result  stronger).} Before discussing the proof we discuss the {relation between the MMS and} 
the concept of SM-dominant feasible shares.

We first observe that for valuation classes for which the MMS is feasible, the MMS is the unique SM-dominant feasible share ({see proof in  \cref{app:prop-feasibleMMS-maximal}}):

\begin{restatable}{proposition}{propMMSSMdominating}
\label{prop:feasibleMMS-maximal} 
	Consider a setting with $n$ agents and a set of items $\items$.
{For any class of valuations $\CC$, the MMS 
	dominates every feasible share for $\CC$.
	{Moreover, if the MMS  
	is feasible, then it is the unique dominant feasible share for $\CC$,
	and the unique SM-dominant feasible share for $\CC$.}  }
\end{restatable}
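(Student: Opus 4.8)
The plan is to reduce everything to one central claim: \emph{the MMS dominates every feasible share for $\CC$}, regardless of whether the MMS happens to be feasible. Once this is in hand, both the dominant and the SM-dominant uniqueness statements follow from the antisymmetry of the domination order, together with the two facts that, under the hypothesis, the MMS is itself a feasible share and (by \cref{prop:shares-self}) a self-maximizing one. The only genuine idea is the proof of the central claim, and it comes from a single well-chosen instance.

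To prove the central claim, let $s$ be any feasible share, and fix $n$ and $v \in \CC$. The key step is to instantiate feasibility on the instance in which all $n$ agents report the same valuation $v$. By feasibility there is an acceptable allocation $(A_1,\ldots,A_n)$, so $v(A_j) \ge \Sharefullv$ for every $j$, and hence $\min_j v(A_j) \ge \Sharefullv$. Since $\MMSv = \max_{(A_1,\ldots,A_n)\in\mathcal{A}} \min_j v(A_j)$ is a maximum over all partitions, this particular partition already witnesses $\MMSv \ge \min_j v(A_j) \ge \Sharefullv$. As $v$ and $n$ were arbitrary, the MMS dominates $s$. I note that this argument uses feasibility alone (not self-maximization), and only the elementary observation that an acceptable bundle has value at least the share.

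It remains to assemble the consequences under the hypothesis that the MMS is feasible. The MMS is name-independent and realizable ($\MMSv \le v(\items)$ for goods), so it is a legitimate share; being feasible and, by the central claim, dominating every feasible share, it is a dominant feasible share. For uniqueness, any dominant feasible share $s$ must dominate the MMS (which is feasible) while the central claim gives that the MMS dominates $s$, so by antisymmetry $s$ agrees with the MMS on every $(v,n)$. For the SM statement, \cref{prop:shares-self} tells us the MMS is self-maximizing, so the MMS is a feasible, self-maximizing share that (by the central claim) dominates every feasible share, in particular every feasible self-maximizing one; hence it is an SM-dominant feasible share. For uniqueness here, if $s$ is an SM-dominant feasible share other than the MMS, then since the MMS is a distinct feasible self-maximizing share, $s$ dominates it, giving $\Sharefullv \ge \MMSv$ everywhere, while the central claim gives the reverse inequality, forcing $s$ to equal the MMS. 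I do not expect a real obstacle: the only care needed is to track in each step which share's feasibility (or the MMS's self-maximization) licenses which direction of domination.
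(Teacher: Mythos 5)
Your proof is correct and follows essentially the same route as the paper: the central claim that the MMS dominates every feasible share via the identical-valuation instance (which the paper states contrapositively), then uniqueness by antisymmetry of domination, and the SM-dominant statements via the self-maximization of the MMS from \cref{prop:shares-self}. Your write-up merely makes explicit the instantiation of feasibility on the all-$v$ instance that the paper compresses into one sentence.
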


Two immediate corollaries follow. First, for the class of unit-demand valuations, for which the MMS is the value of the $n$-th most valuable item, the MMS is feasible (and self maximizing), so it is the unique SM-dominant feasible share for these valuations. 
Second, for two agents and additive valuations the MMS is feasible (obtained by \emph{cut-and-choose}) so it is the unique SM-dominant feasible share for these valuations. 
Yet, beyond two agent the MMS is known to be infeasible for additive valuations \cite{KurokawaPW18} so the proposition {cannot be used to find an SM-dominant feasible share.} 
Yet, it does not rule out that maybe there is a different SM-dominant feasible share for additive valuations and more than two agents.  

In the next section we consider additive valuations as well as more general classes of valuations that contain additive valuations (e.g., submodular and sub-additive) and show that in contrast to the case of unit-demand valuations, SM-dominant feasible shares for those classes do not exist. 
This {nonexistence result} holds even if we do not require the dominating share to be self maximizing.   
{We prove this claim}
by presenting an explicit family of feasible shares that are self maximizing and have the following property: for any additive valuation there is a share in the family for which the share for this valuation equals to the MMS for this valuation. The result is then derived from the fact that  the MMS for additive valuations is not feasible.     

We remark that our {nonexistence result}
can also be derived as a corollary (Corollary \ref{cor:feasible-SM}) from our {result that shows} that any feasible share is dominated by a feasible share that is also self maximizing (\cref{thm:ICdominates}). Yet, for the class of additive valuations, while the shares defined in the first proof are poly-time computable, our transformation from a feasible share to a {self maximizing} feasible share that dominates it {(\cref{thm:ICdominates})} uses super-polynomial computation.

\section{No Dominant Feasible Share for Additive and Beyond}\label{sec:no-dominant}

Our first main result shows that for any class {of valuations} that contains additive valuations, there is no feasible share that dominates even only those feasible shares that are self-maximizing. Thus, there is no SM-dominant feasible share for {any of these}
classes of valuations. 

\thmMainImpossibility*

We note that this result shows that there is no SM-dominant feasible share, even if we do not restrict that dominating share itself to be self maximizing. 
We also highlight the fact  that the non-existence of an SM-dominant feasible  share does not rely on computational feasibility -- it holds even with unbounded computational power. 

The proof of the theorem uses shares that are based on picking orders {(also refered to as picking sequences)} that we define below. Thus, before presenting the proof of the theorem (Section \ref{sec:proof-main}), we present these shares (Section \ref{sec:picking}). 

\subsection{The $\omega$-Picking-Order Share}
\label{sec:picking}

{To prove \cref{thm:main} we need to show that for any class $\CC$ of valuation functions that contains all additive valuations, there is no feasible share that dominates every feasible self-maximizing share. To do so, for every such class $\CC$, we design a collection $\QQ$ of shares, where every share in $\QQ$ is feasible and self maximizing. The feasibility and self maximization properties need to hold for {all} valuations in $\CC$, not just for additive valuations. To prove that there is no SM-dominant feasible share we show that no feasible share $s$ dominates every share in $\QQ$: 
for every share $s$ that is feasible for $\CC$ (even if $s$ is not self maximizing), there is some share $s'$ in $\QQ$
that for some valuation function $v\in \CC$ offers higher value than $s$ does.}

The class $\QQ$ that we consider is 
based on picking orders, that is, sets that are created by the $n$ agents picking items according to some predefined order in which they pick. Such shares, mostly based on the round robin picking sequence, have been introduced also in previous work (see~\cite{CFS17, GLW21}), {though our treatment is concerned with arbitrary picking sequences.}
Such shares will be defined even for non-additive valuations, and will be feasible and {self maximizing} (although not necessarily poly-time computable {when the valuations are not additive}).

For a setting with $n$ agents and a set $\items$ or size $m$, a \emph{picking order $\omega$} is a list $\omega_1,\omega_2,\ldots, \omega_m$, with $\omega_j\in[n]$ for each $j\in [m]$. 
{The value $\omega_j$ determines which agent picks an item at the $j$-th step in the picking sequence.}

Fix a valuation $v$ and a picking order $\omega$. 
Let $x_k$ be the maximum value that agent $k\in [n]$ with valuation $v$ can ensure she gets when the picking order is $\omega$. That is, the maximum value when facing an adversary that makes all picks  when $\omega_j\neq k$ in order to minimize the value that agent $k$ gets. Note that 
computing the strategy of {the agent} and of such an adversary is easy for additive valuations (by greedily picking items from high to low value according to the agent's valuation), but not necessarily so for 
{more general classes of valuations.} 

Fix a picking order $\omega$.
The \emph{$\omega$-Picking-Order share} of an agent with valuation $v$ is $\min_{k\in [n]} x_k$. That is, the minimum over all possible identities of the agent, of the maximum she can ensures herself with that identity when the picking order is $\omega$. 
{Note that as these shares are defined to be a value of some bundle, for each picking order $\omega$ the share guarantee is the same as the share itself.}

To illustrate this share, let us consider the case that the valuation is additive. In this case, {both the agent and the adversary} will simply pick items {in order of decreasing value according to the agent's valuation (breaking ties arbitrarily)}.
Note that tie breaking (by the agent or the adversary) in picking has no effect on the final value the agent gets.
In this case, the share will simply be the minimal value that any of the $n$ agents get, when every agent has valuation $v$ (all valuations are the same), and simply pick items in non-increasing order of values. 
Consequently, when agents are additive, the picking order share is {a special case of a more general family of shares that we define in Section \ref{sec:additive-ordinal}, which we call \emph{ordinal maximin shares} (see Definition~\ref{def:ordinal}).
We show that all shares in this {general} family are  self maximizing (\cref{prop:ordinal}), and hence {picking order shares are self maximizing when valuations are additive.} }
{Moreover, we show that picking-order shares are self maximizing for all valuations (even non-additive).}

{We summarize our results for picking shares in the next proposition (see proof in  \cref{app:prop-picking}).}

\begin{proposition}\label{prop:picking}
Fix any class $\CC$ of valuations. 
For any picking order $\omega$, the $\omega$-Picking-Order share is feasible and self maximizing for $\CC$.  Moreover, if valuations are additive then an acceptable allocation can be computed in polynomial time. 
{Finally, for any additive valuation $v$ there exists a picking order $\omega_v$ for which the $\omega_v$-Picking-Order share for an agent with valuation $v$ equals to the MMS of that agent, that is, to $\MMSv$. }
\end{proposition}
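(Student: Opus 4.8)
The plan is to treat the four assertions in turn, with the self-maximization claim as the conceptual core. Throughout, write $x_k(v)$ for the max-min value of identity $k$ under valuation $v$ in the picking game with order $\omega$, so that the $\omega$-Picking-Order share is $\min_{k\in[n]} x_k(v)$. Since this share is the value of a bundle, its share guarantee coincides with the share itself, i.e. $\ShareGfullv=\Sharefullv=\min_k x_k(v)$; I will use this repeatedly. Let $g_1,\dots,g_m$ denote the items sorted so that $v(g_1)\ge\cdots\ge v(g_m)$.

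For feasibility, given a profile $(v_1,\dots,v_n)$ with each $v_i\in\CC$, run the picking order $\omega$ on the true valuations, assigning agent $i$ to identity $i$ and letting her play, at each of her steps, a max-min--optimal strategy for identity $i$ under $v_i$. Because a max-min strategy secures its value against every adversary, and the other agents' actual choices are just one particular adversary, agent $i$ ends with a bundle $A_i$ of value $v_i(A_i)\ge x_i(v_i)\ge \min_k x_k(v_i)=\Sharefullvi$; these strategies never conflict, since each agent acts only at her own steps, so the resulting allocation is acceptable. For additive valuations the max-min strategy is simply greedy (always take a remaining item of highest value), so the same run computes an acceptable allocation in polynomial time, and greedy play removes items in the order $g_1,\dots,g_m$, whence $x_k(v)=\sum_{j:\,\omega_j=k} v(g_j)$. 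This settles feasibility, poly-time computability, and the additive bookkeeping.

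The heart of the proof is self-maximization, and here the plan is a short minimax argument valid for arbitrary (possibly non-additive) valuations. Fix a true valuation $v$ and a report $v'$; since the share guarantee equals the share, it suffices to exhibit a bundle $S^\ast$ that is acceptable for $v'$ (i.e. $v'(S^\ast)\ge \Sharefullvp$) and satisfies $v(S^\ast)\le \Sharefullv$, for then $\ShareG=\min_{\{S:\,v'(S)\ge \Sharefullvp\}} v(S)\le v(S^\ast)\le \Sharefullv=\ShareGvv$. Let $k^\ast\in\arg\min_k x_k(v)$, so $x_{k^\ast}(v)=\Sharefullv$, and consider the finite perfect-information picking game for identity $k^\ast$. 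Let the agent fix the strategy $\tau$ that is max-min optimal under the report $v'$; by definition $\tau$ secures $v'$-value at least $x_{k^\ast}(v')\ge \min_k x_k(v')=\Sharefullvp$ against every adversary, which gives the first property. Now let the adversary respond to $\tau$ with a strategy minimizing the $v$-value of the agent's bundle, and let $S^\ast$ be the resulting bundle. Holding $\tau$ fixed and minimizing over adversaries can only yield $v$-value at most $\max_{\tau'}\min_{\mathrm{adv}} v(\cdot)=x_{k^\ast}(v)=\Sharefullv$, which is the second property. Hence the share is self-maximizing for all of $\CC$ (in particular recovering the additive case, which also follows from \cref{prop:ordinal}).

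Finally, for the MMS claim, take an additive $v$ and a partition $B_1,\dots,B_n$ attaining $\MMSv=\max_{\text{partition}}\min_k v(B_k)$. Define $\omega_v$ by letting $\omega_j$ be the index $k$ with $g_j\in B_k$. Under greedy play items are removed in the sorted order $g_1,\dots,g_m$, so position $j$ delivers item $g_j$ to identity $\omega_j$; hence $x_k(v)=\sum_{j:\,\omega_j=k} v(g_j)=v(B_k)$, and the $\omega_v$-Picking-Order share equals $\min_k v(B_k)=\MMSv$. The main obstacle is the self-maximization step for general valuations: one must avoid any closed form for $x_k$ and argue purely through the game value, the delicate point being that the adversary is forced to respond to the $v'$-optimal strategy $\tau$ (which need not be $v$-optimal), so that the bound $v(S^\ast)\le \Sharefullv$ falls out of the definition of the max-min value rather than from an explicit description of $S^\ast$.
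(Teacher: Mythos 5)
Your proposal is correct and takes essentially the same route as the paper's proof: feasibility by letting each agent play her own max-min strategy at her designated positions (the other agents constituting one particular adversary), self-maximization via the zero-sum picking game in which the $v'$-optimal strategy produces only $v'$-acceptable bundles while a $v$-minimizing adversary caps their true value by the game value under $v$, and the MMS claim via the picking order read off an optimal MMS partition together with the greedy characterization $x_k(v)=\sum_{j:\,\omega_j=k}v(g_j)$. Your explicit choice of the identity $k^\ast\in\arg\min_k x_k(v)$ \emph{before} fixing the $v'$-optimal strategy $\tau$ is a welcome precision over the paper's proof of \cref{prop:picking-well-behaved}, where the choice of identity is left implicit, but the underlying argument is the same.
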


\subsection{Proof of \cref{thm:main}}\label{sec:proof-main}

In this section we restate and prove \cref{thm:main}, our main negative result.

\thmMainImpossibility*

\begin{proof}
Fix any share $s$. 
By {\cref{prop:picking},}
for every picking order $\omega$ and any valuation (including all valuations in the class $\CC$), 
the $\omega$-Picking-Order share is self maximizing and feasible.
Thus, if $s$ dominates every self-maximizing  and feasible share it must dominate every $\omega$-Picking-Order share. 

As $\CC$ includes all additive valuations, $s$ must, in particular, dominate every self-maximizing and feasible share for additive valuations.
As for additive valuations, by {\cref{prop:picking},} 
for some picking order $\omega$ the $\omega$-Picking-Order share is the same as the MMS, we conclude that for additive valuations, for $f$ to dominate every self-maximizing and feasible share, it must dominate the MMS.   

Yet, 
{for agents with additive valuations, there are instances for which no allocation gives each agent her MMS~\cite{KurokawaPW18}}. We thus conclude that for class $\CC$, the share $s$ does not dominate every self-maximizing and feasible share.
\end{proof}

Our {nonexistence}
result shows in particular that an SM-dominant feasible share for additive valuations does not exist. Given this negative result, in Section \ref{sec:additive} we consider approximate domination, and present several positive results for additive valuations.  
Before focusing on additive valuations we present a general transformation from any feasible share to a dominating share that is feasible and self maximizing. {This result also implies an alternative proof of our nonexistence result.}

\section{Domination Relations among Shares}\label{sec:domination-ALL}
\subsection{Domination by a Self-Maximizing Share}\label{sec:domination-SM}

In this section we show that any feasible share can be transformed to a feasible share that dominates it, and is self-maximizing. Thus, in a sense, we can always assume that a feasible share is self-maximizing. 
Yet, this transformation is not done in polynomial time, thus, when computation is important, this proof does not show that  self maximization comes for free. 

\thmFeasibleToSMFeasible*

{The main idea in the proof of \cref{thm:ICdominates} is to construct $s'$ as follows. For every valuation $v$, the value of $s'(v)$ is equal to the implied guarantee of $s$ with respect to the report $v'$ that maximizes this implied guarantee. The details of how this idea is turned into a proof
are presented in \cref{app:thm:ICdominates}. }

{
From \cref{thm:ICdominates} we derive the following immediate corollary.
\begin{corollary}
    \label{cor:feasible-SM}
    Let $\CC$ be an arbitrary class of valuation functions, and let $s$ be an arbitrary feasible share for class 
    $\CC$. The share $s$ is a dominant feasible share for $\CC$ if and only if
    $s$ is a SM-dominant feasible share for $\CC$.
\end{corollary}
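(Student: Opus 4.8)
The plan is to observe that this corollary is an almost immediate consequence of \cref{thm:ICdominates} together with the transitivity of the domination relation. One direction requires no work at all: by definition, a dominant feasible share dominates \emph{every} feasible share for $\CC$, and in particular it dominates every feasible share that happens to be self-maximizing; hence every dominant feasible share is trivially also an SM-dominant feasible share. The content of the corollary lies entirely in the converse.

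For the converse, I would suppose that $s$ is an SM-dominant feasible share for $\CC$, and let $t$ be an arbitrary feasible share for $\CC$ (not assumed to be self-maximizing). The key move is to apply \cref{thm:ICdominates} to $t$: this produces a feasible share $t'$ for $\CC$ that is self-maximizing and dominates $t$, i.e. $t'(v,n) \ge t(v,n)$ for every $n$ and every $v \in \CC$. Since $s$ is SM-dominant and $t'$ is feasible and self-maximizing, the defining property of SM-dominance gives $s(v,n) \ge t'(v,n)$ for all such $v,n$. Chaining the two inequalities (domination is clearly transitive, being a pointwise comparison of real values) yields $s(v,n) \ge t(v,n)$ for every $n$ and every $v \in \CC$, so $s$ dominates $t$. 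As $t$ was an arbitrary feasible share, $s$ dominates all feasible shares and is therefore a dominant feasible share.

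The one point worth stating explicitly is a definitional subtlety rather than a genuine obstacle: the notion of SM-dominance requires the dominating share $s$ to be feasible but does \emph{not} require $s$ itself to be self-maximizing, and dually it only promises domination over those feasible shares that \emph{are} self-maximizing. This matches the argument exactly, because the share $t'$ obtained from \cref{thm:ICdominates} is guaranteed to be both feasible and self-maximizing, which is precisely what is needed in order to invoke the SM-dominance hypothesis on it. Beyond this bookkeeping there is no real difficulty: all the effort has already been spent in establishing \cref{thm:ICdominates}, and the corollary merely combines that transformation with the transitivity of $\ge$.
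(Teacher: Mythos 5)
Your proof is correct and is exactly the argument the paper intends: the paper labels this an immediate corollary of \cref{thm:ICdominates}, with the trivial direction being that domination of all feasible shares implies domination of the self-maximizing ones, and the converse obtained by upgrading an arbitrary feasible share to a dominating self-maximizing feasible share via \cref{thm:ICdominates} and chaining the two pointwise inequalities. Your explicit remark that SM-dominance does not require $s$ itself to be self-maximizing, and that the upgraded share is both feasible and self-maximizing as needed to invoke the hypothesis, is faithful to the paper's definitions and adds no gap.
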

The corollary combined with the simple observation that there is no dominant feasible share for additive valuations (implied by the fact the the MMS is not feasible, see Section \ref{sec:dominant-feasible}),  gives an alternative proof to \cref{thm:main} for any class that contains the additive valuations.}

We have shown that any feasible share can be transformed to a feasible share that dominates it, and is self-maximizing. 
Yet, this transformation is not done in polynomial time {({not} even for the class of additive valuations)}. 
This suggests the following open problem: find such a transformation that can be computed in polynomial time when valuations are additive and the original share is polynomial time computable, or prove that the problem is computationally hard. 

\subsection{$\rho$-Domination}\label{sec:domination-MMS}

An immediate corollary of Proposition \ref{prop:feasibleMMS-maximal} is that for any class of valuations $\CC$, a share $s$ that $\rho$-dominates the MMS also $\rho$-dominates every feasible share. This is stated in the next proposition, together with two stronger versions of the converse claim.  
\begin{proposition}\label{prop:dom-MMS-all} 
    For any class of valuations $\CC$, a share $s$ that $\rho$-dominates the MMS also $\rho$-dominates every feasible share.
    
    For any class of valuations $\CC$, a share $s$ that $\rho$-dominates every 
	feasible share that is self maximizing also $\rho$-dominates the MMS. 

	Furthermore, for the class of additive valuations, a share $s$ that $\rho$-dominates every 
	feasible share that is self maximizing and poly-time computable, also $\rho$-dominates the MMS. 
\end{proposition}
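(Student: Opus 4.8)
The plan is to prove the three claims separately, with all three reducing to the (generally infeasible) MMS via the domination facts already established. For the first claim, I would argue directly from \cref{prop:feasibleMMS-maximal}, which states that the MMS dominates every feasible share. Fix any feasible share $s'$ for $\CC$. Then for every $n$ and every $v\in\CC$ we have $\MMSv \ge \Sharefullpv$. If $s$ $\rho$-dominates the MMS, i.e. $\Sharefullv \ge \rho\cdot\MMSv$, then chaining the two inequalities gives $\Sharefullv \ge \rho\cdot\MMSv \ge \rho\cdot\Sharefullpv$, so $s$ $\rho$-dominates $s'$. Since $s'$ was an arbitrary feasible share, $s$ $\rho$-dominates every feasible share.

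For the second claim the target is $\Sharefullv \ge \rho\cdot\MMSv$ for every $v\in\CC$, knowing only that $s$ $\rho$-dominates every feasible \emph{self-maximizing} share. The key step I would carry out is: for each fixed $v_0\in\CC$, exhibit a feasible self-maximizing share $\tilde s$ whose value at $v_0$ is at least $MMS_n(v_0)$. Given such a $\tilde s$, $\rho$-domination yields $s(v_0,n)\ge \rho\cdot\tilde s(v_0,n)\ge \rho\cdot MMS_n(v_0)$, and since $v_0$ is arbitrary this finishes the part. To build $\tilde s$ I would first take the \emph{localized} share that equals $MMS_n(v_0)$ on all renamings of $v_0$ and is $0$ on every other valuation; this share is feasible by exactly the reasoning used for the artificial shares in \cref{sec:dominant-feasible} (agents reporting $v_0$ can all be served out of an MMS-partition of $v_0$, and all other agents are content with any leftover bundle). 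This localized share need not be self-maximizing, so I would then invoke \cref{thm:ICdominates} to replace it by a feasible self-maximizing share that dominates it; domination guarantees its value at $v_0$ is still at least $MMS_n(v_0)$, as required.

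For the third claim I would keep the same reduction, but I must now produce, for each additive $v_0$, a matching share that is additionally poly-time computable — and here \cref{thm:ICdominates} is unavailable, since its transformation is not polynomial. Instead I would use the picking-order shares of \cref{prop:picking}: for additive $v_0$ there is a picking order $\omega_{v_0}$ whose share equals $MMS_n(v_0)$ at $v_0$, and every picking-order share is feasible, self-maximizing, and (for additive valuations, by greedy simulation of the agent against the adversary) poly-time computable. Fixing $\omega_{v_0}$ as part of the definition of this particular share, I obtain a feasible, self-maximizing, poly-time computable share equal to $MMS_n(v_0)$ at $v_0$; $\rho$-domination of it then gives $s(v_0,n)\ge \rho\cdot MMS_n(v_0)$ for every additive $v_0$, which is the claim.

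The main obstacle I anticipate is the feasibility of the localized ``MMS-at-$v_0$'' share used in the second part: because a share must be name-independent, it is nonzero on the entire set of renamings of $v_0$, so feasibility must be verified on profiles in which different agents report different renamings of $v_0$ — this is precisely the delicate point underlying the ``clearly feasible'' assertion of \cref{sec:dominant-feasible}, and it is what forces the non-polynomial route through \cref{thm:ICdominates} in the general case while the additive, poly-time case is instead handled cleanly by \cref{prop:picking}. The arguments for the first and third parts are otherwise routine once \cref{prop:feasibleMMS-maximal} and \cref{prop:picking} are in hand.
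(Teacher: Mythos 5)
Your proposal matches the paper's proof essentially step for step: the first part chains $\rho$-domination of the MMS with \cref{prop:feasibleMMS-maximal}; the second part uses the localized share $s_{v_0}$ that equals the MMS at $v_0$ and is zero elsewhere, upgraded to a feasible self-maximizing share via \cref{thm:ICdominates}; and the third part uses the picking-order shares of \cref{prop:picking} exactly as the paper does. The renaming/feasibility subtlety you flag for the localized share is likewise left implicit in the paper, which simply asserts that this share is ``clearly feasible, by the MMS definition,'' so your attempt is no less complete than the published argument on that point.
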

\begin{proof}
By \cref{prop:feasibleMMS-maximal}, for any class of valuations $\CC$ the MMS dominates every feasible share for $\CC$. An immediate corollary is that a share $s$ that $\rho$-dominates the MMS also $\rho$-dominates every feasible share for $\CC$.

Next, we prove that for any class of valuations $\CC$, a share $s$ that $\rho$-dominates every feasible share that is self maximizing also $\rho$-dominates the MMS. Indeed, 
for any valuation $v\in \CC$, let $s_v$ the share that is defined as follows: it equals the MMS on $v$, and is zero otherwise. This share is clearly feasible for $\CC$, by the MMS definition.
By \cref{thm:ICdominates}, the share $s_v$ is dominated by some share $s'_v$ that is feasible and self maximizing. The share $s'_v$ must also be equal to the MMS on $v$, as it is at least as large as $s_v$ on $v$, and is no larger since it is feasible.  
Thus, in order for $s$ to $\rho$-dominate all  self-maximizing feasible shares, it must dominate  $s'_v$ for every $v\in\CC$, and thus it must $\rho$-dominate the MMS for $\CC$.

Finally, we show that for the class of additive valuations, a share $s$ that $\rho$-dominates every feasible share that is self maximizing and poly-time computable, also $\rho$-dominates the MMS. 
By \cref{prop:picking}, 
for any additive valuation $v$ there exists a picking order $\omega_v$ for which the $\omega_v$-Picking-Order share for an agent with valuation $v$ equals to the MMS of $v$.
Additionally, by \cref{prop:picking}, for any additive valuation $v$ the $\omega_v$-Picking-Order share is feasible, self maximizing and can be computed in polynomial time. Thus, in order for share $s$ to $\rho$-dominate every feasible share that is self-maximizing and poly-time computable, it must $\rho$-dominate the $\omega_v$-Picking-Order share for every additive valuation $v$, and thus it must $\rho$-dominates the MMS.
\end{proof}

Denote by $\alpha$-MMS the share that is defined to be an $\alpha$-fraction of the MMS (that is $\alpha\cdot MMS$ for positive $\alpha\leq 1$). 
By definition, for every valuation the $\alpha$-MMS $\alpha$-dominates the MMS. Thus, for any class of valuations $\CC$, by \cref{prop:dom-MMS-all}, the $\alpha$-MMS $\alpha$-dominates every feasible share for $\CC$.

Fix the number of agents $n\geq 2$.
For a class $\CC$ of valuations, let $\alpha_{n}^*(\CC)$ be the supremum\footnote{
For the class of additive valuations the maximum is actually obtained, although  there are infinitely many possibilities for the number of items, and infinitely many possible valuation functions.  {See~\cite{FN22} for details.}
} 
over all values of $\alpha$ for which the $\alpha$-MMS is feasible for class $\CC$ for $n$ agents (for any number of items $m$). 
For $n$ agents, the $\alpha_{n}^*(\CC)$-MMS share $\alpha_{n}^*(\CC)$-dominates every feasible share for $\CC$ (by \cref{prop:dom-MMS-all}). Yet, by \cref{prop:shares-self}, this share is not self maximizing, {not even} for the class of additive valuations. 
One can use \cref{thm:ICdominates} to turn the $\alpha_{n}^*(\CC)$-MMS to a feasible self-maximizing share that dominates it, thus $\alpha_{n}^*(\CC)$-dominating every feasible self-maximizing share for $\CC$. Yet, the transformation of \cref{thm:ICdominates} is not polynomial time, {not} even for the class of additive valuations.
Thus, in the next section we focus on the class of additive valuations over goods and we seek a self-maximizing feasible share that is poly-time computable and $\rho$-dominates the MMS, for  $\rho$ that is as large as possible.

\section{Results for Additive Valuations over Goods}\label{sec:additive}
{In this section we focus on the class of additive valuations over goods (which we will simply call \emph{additive valuations} below, for short).} {Given the discussion at the end of the prior section (\cref{sec:domination-MMS}),} we seek self-maximizing feasible shares that are poly-time computable and {$\rho$-dominate the MMS, for $\rho$ that is as large as possible.}  

We first present a family of shares, referred to as \emph{ordinal maximin shares}, and show that all members of this family are self-maximizing  (see \cref{sec:additive-ordinal}). We then present several ordinal maximin shares, all of them are feasible, self-maximizing and polynomial-time computable.
For general $n$, we present such a share that is $\frac{2n}{3n-1}$-dominating. 
Furthermore, {for {up to four} agents we show that this share is $\frac{4}{5}$-dominating.
For two agents
and any positive $\epsilon$ we present such a share 
that is $(1 - \epsilon)$-dominating (a PTAS). (As the MMS is NP-hard to compute even for $n=2$, we cannot hope the get the MMS exactly by a poly-time share.) Moreover, for each of those shares we present a polynomial time algorithm that computes  allocations that give every agent at least her share. 
{The PTAS for two agents appears in \cref{sec:additive-two}.}
The other shares are presented in \cref{sec:additive-n}.
}

\subsection{Ordinal Maximin Shares}\label{sec:additive-ordinal}

For additive valuation functions, we define the class of {\em ordinal maximin shares}, {and we show that} 
every share in this class is self maximizing. 
{We slightly abuse the notion of share and instead of thinking about one share that is defined for every $n$, we think of shares that are each defined for a specific value of $n$ (e.g., this enables us to define a PTAS-based share only for the case that $n=2$).}
We first define the notion of a family of partitions.

\begin{definition}
{Fix $n\geq 2$. For any integer $m$,}
an $(m,n)$-partition $P$ of the set $[m]$ into $n$ parts is a collection of $n$ disjoint bundles that together contain the set: $P = (B_1, \ldots, B_n)$ satisfying $\cup_j B_j=[m]$, $B_k\cap B_j=\emptyset$ for every pair  $k\neq j$. (Some of the bundles might be empty.) For given $n$ and $m$, a family $\FF_{m,n}$ of partitions is collection of $(m,n)$-partitions. 
{For $n\geq 2$, a \emph{family $\FF_n$ of partitions} is a collection of families $\FF_{m,n}$ of partitions, one such family for every $m$.}
\end{definition}

\begin{definition}\label{def:ordinal}
{Fix the number of agents $n\geq 2$. 
A share for $n$ agents with} additive valuation functions is {\em ordinal maximin} if for every value of $m$ (number of items)
it has an associated family of partitions $\FF_{m,n}$, and for every additive valuation function $v$ over $m$ items, the corresponding share value can be computed as follows. 
\begin{enumerate}
    \item Sort items from highest value to lowest value according to $v$ (breaking ties arbitrarily -- the share value is not affected by how ties are broken). 
    Without loss of generality, we denote the $j$-th highest value item according to this order by $e_j$ (thus, $v(e_j)\geq v(e_k)$ for $j<k$). 
    \item Relative to this order, {for a set $S\subseteq [m]$, let $v(S)=\sum_{j\in S} v(e_j)$.} 
    For every partition $P = (B_1, \ldots, B_n) \in \FF_{m,n}$, let
    $v(P) = \min_{j}[v(B_j)]$. 
    \item The value of the share for $v$ is $\max_{P \in \FF_{m,n}}[v(P)]$. {A partition $P$ is \emph{optimal} if $v(P)$ equals to the value of the share ($P$ belongs to the argmax).}  
\end{enumerate}
Hence, an ordinal maximin share {for $n$ agents} is completely characterized by its associated family {$\FF_n$ of collections of partitions $\FF_{m,n}$ (one for every $m$).
For any such family {$\FF_n$,} we denote the associated ordinal maximin share over the
additive valuations 
by \emph{{$\FF_n$}-maximin share}.}
\end{definition}

The MMS is an example of a share that is ordinal maximin  {for every $n$ (with $\FF_{m,n}$ being the family of all $(m,n)$-partitions)}. The following {\em round robin share}, based on the {\em round robin} allocation procedure, is also ordinal maximin. For every $(m,n)$, the family $\FF_{m,n}$ contains a single partition $P = \{B_1, \ldots, B_n\}$, where for every $j$, bundle $B_j$ contain the items $e_{j+kn}$ (for non-negative integers $k$, where items are sorted from highest to lowest value). The value of the round robin share is equal to the value of $B_n$. More generally, every {\em picking order} induces an ordinal maximin share, in which $\FF_{m,n}$ contains only a single partition. Given valuation function $v$, bundle $B_i$ in this partition contains those items whose index is those rounds in which agent $i$ gets to pick in the picking order.

{We observe that for two agents, for  any family $\FF_2$ the $\FF_2$-maximin share is feasible {(see \cref{obs:ordinal-feasible} in \cref{app:additive-ordinal})}. 
For more than two agents, the $\FF_n$-maximin share might not be feasible
(as in the case for the MMS with $n=3$), but specific  families of partitions yield feasible shares (as we prove in Section \ref{sec:additive-n}).}

{The main attractive property of ordinal maximin shares is that they are all self maximizing  
(see proof in \cref{app:prop-ordinal}).}
\begin{restatable}{proposition}{lemmaOrdinalSelfMaximizing}
\label{prop:ordinal}
For any family $\FF_n$ of partitions, its associated ordinal maximin share $\FF_n$-maximin is self maximizing.
\end{restatable}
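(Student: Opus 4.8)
The plan is to show that for every additive true valuation $v$ and every report $v'$ the implied guarantee obeys $\ShareG \le \ShareGvv$, which is exactly the self-maximizing inequality of \cref{def:ordinal}'s share. Since an ordinal maximin share is the value of a bundle (the least-valued part of an optimal partition), its share guarantee equals the share itself, so $\ShareGvv = \ShareGfullv = \Sharefullv$. Hence it suffices to exhibit, for every report $v'$, a single bundle $S$ that is acceptable for $v'$ (that is, $v'(S) \ge \Sharefullvp$) and satisfies $v(S) \le \Sharefullv$. The first thing to observe is that the naive candidate fails: taking the parts of an optimal partition $P'$ for $v'$ (read in the $v'$-sorted order) and choosing the part of least true value need not give value $\le \Sharefullv$, because $v$ and $v'$ may order the items differently, so the true values are distributed over the ranks of $P'$ in a non-sorted arrangement, and under an arbitrary arrangement the least part can exceed its sorted-arrangement value. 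The core of the argument is therefore a reduction to the \emph{aligned} case, where $v$ and $v'$ induce the same ordering of the items.

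For the reduction I would use an exchange argument. Given $v'$, let $\tilde v'$ be the report that assigns the multiset of $v'$-values to the items in the same order as $v$ (largest reported value to the $v$-largest item, and so on). Because $\tilde v'$ and $v'$ have the same sorted value vector, the share value $\Sharefullvp$ is unchanged. I would show that replacing $v'$ by $\tilde v'$ can only increase the implied guarantee, by repairing inversions one at a time: if items $a,b$ satisfy $v(a) > v(b)$ but $v'(a) < v'(b)$, swap their reported values to obtain $v''$ (so $\Sharefullvp$ is preserved). Letting $S$ be a minimizing acceptable bundle for $v''$, a short case analysis on whether $a,b\in S$ produces an acceptable bundle for the pre-swap report of no larger true value: if both or neither of $a,b$ lie in $S$ then $v'(S)=v''(S)$ and $S$ itself works; if $a\notin S,\,b\in S$ then $v'(S) > v''(S)\ge \Sharefullvp$ and again $S$ works; and in the only delicate case $a\in S,\,b\notin S$, replacing $a$ by $b$ yields a bundle whose $v'$-value equals $v''(S)\ge \Sharefullvp$ (hence acceptable for $v'$) and whose true value drops by $v(a)-v(b)\ge 0$. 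Iterating over inversions gives $\ShareG \le \hat{s}_v(\tilde v')$. This exchange step is the part I expect to require the most care.

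It then remains to treat the aligned report $\tilde v'$, and here the obstruction identified above disappears. Since $\tilde v'$ orders the items exactly as $v$ does, the $v$-sorted order and the $\tilde v'$-sorted order coincide, so an optimal partition $P'\in\FF_{m,n}$ for $\tilde v'$ is simultaneously a legal partition (in the same sorted order) for computing $\Sharefullv$. Each part of $P'$ is acceptable for $\tilde v'$, and the part of least true value satisfies $\min_j v(B'_j) \le \max_{P\in\FF_{m,n}} \min_j v(B_j) = \Sharefullv$, precisely because $P'$ is one of the partitions in the family $\FF_{m,n}$. Therefore $\hat{s}_v(\tilde v') \le \Sharefullv$.

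Combining the two parts gives $\ShareG \le \hat{s}_v(\tilde v') \le \Sharefullv = \ShareGvv$ for every report $v'$, which is the definition of self maximization for the $\FF_n$-maximin share. I would present the reduction (inversion repair) as a standalone lemma, since it is the only nonroutine ingredient; the aligned case and the reduction of the share guarantee to the share value are then immediate from the definitions.
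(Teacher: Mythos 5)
Your proof is correct, but it takes a genuinely different route from the paper's. The paper proves $\ShareG \le \ShareGvv$ in one shot: starting from an optimal partition $P \in \FF_{m,n}$ for the report $v'$, with $I = \{i_1,\ldots,i_k\}$ the index set of the part whose $v$-induced bundle $B_1$ has least true value, it greedily builds a single ``shadow'' bundle $S = \{s_{i_1},\ldots,s_{i_k}\}$ by taking, for each $j$, the lowest-$v$-value item not yet chosen from the prefix $E'_{i_j}$ of the $v'$-sorted order; a counting (majorization) claim --- for every $j$, the set $S$ contains at most $j-1$ items of $E_{i_j-1}$ --- then yields simultaneously $v'(S) \ge v'(B'_1)$ and $v(S) \le v(B_1)$, so $S$ is acceptable for $v'$ yet has true value at most the share. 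You instead factor the argument into (i) an inversion-repair exchange lemma showing the implied guarantee weakly increases as the report is aligned with $v$, and (ii) the aligned case, which is immediate because an optimal partition for $\tilde v'$ is a legal candidate partition for $v$ in the common sorted order. Your three-case analysis is sound: in the delicate case $a \in S$, $b \notin S$, the bundle $(S \setminus \{a\}) \cup \{b\}$ has $v'$-value exactly $v''(S) \ge s(v',n)$ and weakly smaller true value. The care points you flag all go through: the share value is invariant under each swap because an ordinal maximin share depends only on the sorted value vector (name independence); the repair terminates since transposing a strictly inverted pair (with $v(a) > v(b)$ and $v'(a) < v'(b)$, strict on both sides) strictly decreases the number of strict inversions even in the presence of ties; and after repair a common sorted order for $v$ and $\tilde v'$ exists (break $v$-ties by decreasing $\tilde v'$-value), which, combined with the fact that the share value is unaffected by tie-breaking in the sort, is all the aligned case needs. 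Your decomposition is more modular and isolates the one nonroutine step in a reusable lemma with a transparent local argument; the paper's proof buys an explicit, non-iterative description of the single adversarial bundle an allocator could hand a misreporting agent, at the cost of a less transparent global counting claim.
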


\subsection{A Self-maximizing PTAS for Two Agents}\label{sec:additive-two}
{For two agents with additive valuations, the MMS is a self-maximizing share that is feasible. However, computing the MMS is NP-hard {even for two agents, and thus we cannot hope the get the MMS exactly by a poly-time share (assuming $P\neq NP$)}. In this section, for the case of two agents with additive valuations and for every $\epsilon > 0$, we present a  share that we shall refer to as the $\PTAS$ share. For every additive valuation function, the value of the $\PTAS$ share is at least $(1 - \epsilon)$ times the value of the MMS.
Like the MMS, the $\PTAS$ share is self maximizing and feasible. However, unlike the MMS, for every $\epsilon > 0$, the $\PTAS$ share can be computed in polynomial time. Moreover, given the valuation functions of the two agents, an allocation giving each agent at least her $\PTAS$ share can be computed in polynomial time.

The $\PTAS$ share is based on the following principle. Recall that for $n=2$, the value of the MMS is a solution to the following {\em partition} problem. The input to the problem is a set $\items = \{e_1, \ldots, e_m\}$ of items and their values $\{x_1 = v(e_1), \ldots, x_m = v(e_m)\}$. The desired output is a partition of $\items$ into two disjoint bundles, $B_1$ and $B_2$, such that $\min[v(B_1), v(B_2)]$ {is maximized}, or equivalently,   $\min_{i\in \{1,2\}}\sum_{j \; | \; e_j \in B_i} x_j$, is maximized. As the partition problem is (weakly) NP-hard, so is the problem of computing the MMS. 

To overcome the NP-hardness issue, we use a polynomial time approximation algorithm, {referred to here as $\PTASALG$, for the partition problem. On every input instance, $PTAS_{ALG}$ returns a partition in which the value of the less valuable part of the partition is within a factor of $(1 - \epsilon)$ the value of the less valuable part in the optimal (most balanced) partition. We associate a share with $PTAS_{ALG}$, where for every additive valuation function, the value of the share is the value returned by $PTAS_{ALG}$. 
The $PTAS_{ALG}$ share is feasible, because on every allocation instance (for two agents), we can use the cut and choose procedure: consider the partition that $PTAS_{ALG}$ induces for agent~1, let agent~2 choose which part she takes, and agent~1 gets the remaining part. The $PTAS_{ALG}$ share is also polynomial time computable and approximates the MMS within a ratio of $(1 - \epsilon)$, by its definition. Hence it remains to address the issue of being self maximizing.} 

The partition problem has very efficient approximation algorithms: it has a fully polynomial time approximation scheme (FPTAS), that has polynomial dependence on $\frac{1}{\epsilon}$. However, known approximation algorithms for the partition problem to not appear to induce a share that is self maximizing. Instead, we design a new polynomial time approximation scheme (PTAS), with exponential dependence of $\frac{1}{\epsilon}$, which produces a share that is an ordinal maximin share, and hence self maximizing (by \cref{prop:ordinal}). 

With the above introduction in mind, we present the $\PTAS$ share in the proof of the following theorem. 

\begin{theorem}\label{thm:two-agents} 
	Consider a setting with two agents that have additive valuations over a set  of items $\items$.
	For any fixed $\epsilon>0$, the $\PTAS$ share (described in the proof of the theorem) is self-maximizing, computable in polynomial time, 
	and is at least a $1-\epsilon$ fraction of the MMS. 
    Additionally, the share is feasible, and {an acceptable allocation}  can be found in polynomial time.
\end{theorem}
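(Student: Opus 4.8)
The plan is to exhibit the $\PTAS$ share as an \emph{ordinal maximin share} for $n=2$, that is, to construct an explicit family $\FF_2=\{\FF_{m,2}\}_m$ of index-partitions and take the associated $\FF_2$-maximin share. This immediately buys three of the five requirements for free: self-maximization follows from \cref{prop:ordinal}, feasibility follows from \cref{obs:ordinal-feasible} (every $\FF_2$-maximin share is feasible), and the acceptable allocation is produced by cut-and-choose (agent~$1$ computes an optimal partition $(B_1,B_2)\in\FF_{m,2}$ for her own sorted valuation, so $v_1(B_1),v_1(B_2)\ge$ her share; agent~$2$ then takes the bundle she prefers, obtaining at least $v_2(\items)/2\ge$ her MMS $\ge$ her share; agent~$1$ keeps the other bundle). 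All the real content is therefore in choosing the family so that it is simultaneously of polynomial size (for poly-time computability) and rich enough to contain, for \emph{every} additive $v$, a partition whose smaller side has value at least $(1-\epsilon)$ times the MMS.

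Second, I would define the family. Fix $k=\lceil 1/\epsilon\rceil$ and $K=2k$. If $m\le K$, let $\FF_{m,2}$ be the family of \emph{all} $(m,2)$-partitions, so the share equals the MMS exactly; this is a constant-size family since $2^m\le 2^K=2^{O(1/\epsilon)}$. If $m>K$, call positions $1,\dots,K$ (in the sorted order) the \emph{head} and positions $K+1,\dots,m$ the \emph{tail}, and let $\FF_{m,2}$ consist of all partitions of the form
$$B_1 \;=\; S\cup\{K+1,\dots,K+r\},\qquad B_2=[m]\setminus B_1,$$
ranging over every head-subset $S\subseteq\{1,\dots,K\}$ and every tail-prefix length $r\in\{0,1,\dots,m-K\}$. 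This family has size at most $2^{K}\cdot(m-K+1)=2^{O(1/\epsilon)}\cdot m$, which is polynomial for fixed $\epsilon$, so evaluating $\max_{P}\min[v(B_1),v(B_2)]$ over it is poly-time. Note the family depends only on $m$ (and $\epsilon$), not on $v$, so it indeed defines an ordinal maximin share.

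Third comes the approximation guarantee, which is the heart of the argument. Writing $V=v(\items)$ and $M$ for the MMS, I would first prove the structural bound that the tail items are small relative to $M$: the round-robin partition (odd sorted positions versus even ones) has smaller side $\sum_{i\ge1}v(e_{2i})\ge\sum_{i=1}^{k}v(e_{2i})\ge k\cdot v(e_{2k})$, hence $M\ge k\cdot v(e_{2k})$ and $v(e_{K+1})\le v(e_{2k})\le M/k\le\epsilon M$. Now take an MMS-optimal partition $(B_1^*,B_2^*)$ with $v(B_1^*)=M\le v(B_2^*)$, and let $S=B_1^*\cap\{1,\dots,K\}$, so $v(S)\le M$. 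Among the family members with head-subset $S$, the value $v(S)+\sum_{j=K+1}^{K+r}v(e_j)$ of $B_1$ starts at $v(S)\le M$, increases in steps of size $v(e_{K+r+1})\le v(e_{K+1})\le\epsilon M$, and reaches at least $v(S)+v(\{K+1,\dots,m\})\ge v(B_1^*)=M$; hence some $r$ makes $v(B_1)\in[M-\epsilon M,\,M]$. For that partition $v(B_2)=V-v(B_1)\ge V-M=v(B_2^*)\ge M$, so $\min[v(B_1),v(B_2)]\ge M-\epsilon M=(1-\epsilon)M$. Since the family is a subcollection of all partitions, the share is also at most $M$, giving the claimed $(1-\epsilon)$-approximation.

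The step I expect to be the main obstacle is exactly the structural bound $v(e_{K+1})\le\epsilon M$ with a \emph{constant} number $K=O(1/\epsilon)$ of head positions. The naive makespan-style analysis only controls the additive error by $\epsilon V$, which is useless when a dominant item forces $M\ll V/2$; the round-robin lower bound $M\ge k\cdot v(e_{2k})$ is what converts ``small relative to the total'' into ``small relative to the MMS'', and thereby lets a constant-size head (and hence a polynomial family) suffice. The remaining bookkeeping---verifying realizability, that ties in sorting do not affect the value, and that the cut-and-choose allocation is computed in polynomial time---is routine.
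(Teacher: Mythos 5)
Your proof is correct, and its overall architecture coincides with the paper's: both define the $\PTAS$ share as an ordinal maximin share whose family pairs an arbitrary subset of an $O(1/\epsilon)$-size head of top positions with a consecutive block of the tail (your tail-prefix in $B_1$ is the paper's tail-suffix after swapping bundle labels), so that self-maximization follows from \cref{prop:ordinal}, feasibility from \cref{obs:ordinal-feasible} via cut-and-choose, and poly-time computability from the family size $2^{O(1/\epsilon)}\cdot m$. Where you genuinely diverge is the key estimate controlling the rounding error. The paper first reduces to instances with $m \ge 3/\epsilon$ and $\MMSi \ge v_i(\items)/3$ (setting $\PTAS$ equal to the exact MMS otherwise), bounds each tail item by $v_i(\items)/(k+1)$ by averaging against the head, and rounds the optimal partition around the target $v_i(\items)/2$, with a case split on whether both optimal bundles meet the tail. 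You instead prove $v(e_{K+1}) \le \epsilon M$ directly from the round-robin bound $M \ge k\, v(e_{2k})$, and then run a uniform discrete intermediate-value argument around $M$ itself: the value of $B_1$ starts at $v(S)\le M$, ends at $v(S)+v(\mathrm{tail})\ge v(B_1^*)=M$, and moves in steps of at most $\epsilon M$, so some $r$ lands in $[(1-\epsilon)M, M]$, after which $v(B_2) \ge V - M = v(B_2^*) \ge M$ closes the argument. Your route buys a cleaner analysis: it eliminates the preliminary assumption relating the MMS to $v_i(\items)$, needs no case analysis (the case where $B_1^*$ misses the tail is absorbed as $r=0$), and correctly isolates, as you note, that the round-robin bound is what converts ``small relative to the total'' into ``small relative to the MMS'' --- the step a naive makespan-style PTAS analysis would miss, and the reason the paper needs its $\MMSi \ge v_i(\items)/3$ reduction. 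The paper is slightly more careful on one bookkeeping point you only gesture at: it notes that the $2^k$ head-subsets can each be completed by binary search over tail blocks and that each arithmetic operation costs time linear in the bit-size of the values; but plain enumeration of the $2^{O(1/\epsilon)}m$ family members with prefix sums is also polynomial for fixed $\epsilon$, so nothing essential is missing from your argument.
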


\begin{proof}
{We may assume that $m \ge \frac{3}{\epsilon}$, as for $m < \frac{3}{\epsilon}$ we can set $\PTAS$ to be equal to the MMS. (The MMS can be computed in time exponential in $2^{O(1/\epsilon)}$ in this case, which is constant time when $\epsilon > 0$ is constant.)

We may also assume that $\MMSi \ge \frac{v_i(\items)}{3}$.
The only way that this assumption can be violated is if there is an item $e_1$ of value  $v_i(e_1) > \frac{2 v_i(\items)}{3}$, but then again we can take $\PTAS$ to be equal to the MMS (whose value in this case is $v_i(\items) - v_i(e_1)$).}

Given a desired approximation ratio of $1 - \epsilon$, let $k = \left \lceil \frac{3}{2\epsilon} \right \rceil$. 
{Note that $k \leq m$ as $m \ge \frac{3}{\epsilon}\geq \left \lceil \frac{3}{2\epsilon} \right \rceil = k$.}
Partition $\items$ into the set $\Prefix$ 
of the $k$ most valuable items (breaking ties arbitrarily), and the set $S$ of remaining items.  Order the items of $S$ in order of decreasing value (breaking ties arbitrarily),  
and let $e_m$ denote the last item in this order.
Consider the following family {$\FF_{m,2}$} of at most $2^k m$ partitions of $\items$. A partition of $\items$ into two bundles $(B_1,B_2)$ is a member of {$\FF_{m,2}$} if and only if $B_1 \cap S$ is a suffix of $S$. That is, bundle $B_1$ contains an arbitrary subset of $\Prefix$, but the set of items that $B_1$ contains from $S$ must be a (possibly empty) set of consecutive items, ending in $e_m$.  The value of the PTAS share for agent $i$ is $\PTAS_i = \max_{(B_1,B_2)\in {\FF_{m,2}}}\min[v_i(B_1),v_i(B_2)]$. That is, the definition of $\PTAS_i$ is similar to $\MMSi$, except that instead of maximizing (the minimum value of a bundle) over all possible partitions, one maximizes only over the partitions in ${\FF_{m,2}}$. For every item $e$ in $S$ 
we have that $v_i(e) \le \frac{v_i(\Prefix)+v_i(e)}{k+1}  \le \frac{v_i(\items)}{k+1}$. We claim that this implies that $\PTAS_i \ge \MMSi - \frac{v_i(\items)}{2(k+1)}$.
Recalling the assumption that $\MMSi \ge \frac{v_i(\items)}{3}$ and the choice of $k = \left \lceil \frac{3}{2\epsilon} \right \rceil$, the claim implies that $(1 - \epsilon)\cdot \MMSi \le \PTAS_i \le \MMSi$.

We now prove the claim {that $\PTAS_i \ge \MMSi - \frac{v_i(\items)}{2(k+1)}$.}
Let $(B^*_1, B^*_2)$ denote the optimal partition. Construct a partition $(B_1,B_2)\in {\FF_{m,2}}$ as follows. For items in $\Prefix$, we choose $B_i \cap \Prefix = B^*_i \cap \Prefix$ (for $i \in \{1,2\}$). If either $B^*_1$ or $B^*_2$ contain no items from $S$, then the partition  $(B^*_1, B^*_2)$ itself is one of the partitions in ${\FF_{m,2}}$.  Hence we may assume that both $B^*_1$ and $B^*_2$ contain items from $S$. Consequently, {$\max[v_i(B^*_1 \cap \Prefix),v_i(B^*_2 \cap \Prefix)] \le \frac{v_i(\items)}{2}$}. By choosing for $B_1$ either the longest suffix of $S$ for which $v_i(B_1) \le \frac{v_i(\items)}{2}$ or one additional item from $S$, we get that $\min[v_i(B_1), v_i(B_2)] \ge \frac{v_i(\items)}{2}-\frac{1}{2}\max_{i\in S}[v_i(e)] \ge \MMSi - \frac{v_i(\items)}{2(k+1)}$, as claimed.

{
$\PTAS_i$ can be computed by the following algorithm that uses only addition operations and comparison operations. Sort all items in non-increasing order of values. This can be done using $O(m \log m)$ comparison operations. The set $\Prefix$ is composed of the first $k$ items, and $S$ is composed of the remaining items. 
For every suffix of $S$, compute the sum of item values in the suffix. This can be done using $O(m)$ addition operations (starting from the last suffix, and extending the suffix one item at a time). 
Thereafter, there are $2^k$ possible partitions of $\Prefix$. For each such partition $(\Prefix_1,\Prefix_2)$, the sum of item values in $\Prefix_1$ can be computed using $O(k)$ addition operations. The optimal choice of suffix of $S$ to add to $\Prefix_1$ can be found using $O(\log m)$ operations, by binary search over the suffixes. Hence $\PTAS_i$ can be computed using $O(m\log m + 2^{k} (k + \log m))$ operations. As $k = O(\frac{1}{\epsilon})$, and the time to perform a single operation is at most linear in the input size (note that the number of input bits required in order to represent the value of an item might be much larger than $m$, and be the dominating factor in the input size), this running time is polynomial in the input size, for every fixed $\epsilon > 0$ (and also when $\frac{1}{\epsilon}$ is logarithmic in the input size).
}

{As $\PTAS$ is an ordinal maximin share, it is feasible for two agents by Observation \ref{obs:ordinal-feasible}, and self maximizing by \cref{prop:ordinal}.} 
\end{proof}

The approximation algorithm used in the proof of \cref{thm:two-agents} in order to compute the $\PTAS$ share can be extended to any number $n$ of agents. This requires, among other changes, to increase $k$ to $n \lceil \frac{1}{\epsilon} \rceil$ {in order for the approximation to still hold}. As the running time is exponential in $k$, it becomes exponential in $n$, but is still polynomial in $m$ for every fixed $n$. However, for $n \ge 3$ and $\epsilon \le \frac{1}{4}$, the resulting share is not feasible (unlike the case for $n=2$), {as it is the same as the MMS for instances in which it is impossible to give every agent her MMS. 
Indeed, recall that for} every $n \ge 3$,  there are allocation instances with additive valuations and $m \le 4n$ items in which no allocation gives every agent her MMS~\cite{KurokawaPW18}. In such instances and when $\epsilon \le \frac{1}{4}$, the value of $k$ is {at least $4n$, and as $4n \ge m$ the family} ${\FF_{m,n}}$ contains all possible partitions {of $m$ items to $n$ agents}. Consequently, in these instances, for every agent, $\PTAS_i = \MMSi$. Hence no allocation gives every agent her $\PTAS$ share.}

\subsection{The Class of Nested Shares}\label{sec:additive-n}

We introduce a class of ordinal maximin shares for additive valuations that we refer to as $NS$, standing for \emph{Nested Shares}. For $n$ agents, a share in this class {is parameterized by a parameter $q\in [n]$.  
It will be convenient to have both $n$ and $q$ explicit in the notation of the share, so we use $NS_{n,q}$ to denote this share for parameters $n$ and $q$.}

For $n$ agents, given parameter $q$, we next define the \emph{Nested Share} $NS_{n,q}$. 
Sort the items in decreasing value (breaking ties arbitrarily), and denote the $j$-th highest value item by $e_j$. Like the MMS, if the number of items $m$ satisfies $m < n$ then $NS_{n,q}(v) = 0$, and if $m = n$, then $NS_{n,q}(v) = v(e_n)$. Hence we present the definition for $m > n$. 
A \emph{consecutive partition} of a set of items $Z$ is defined by $n$ indexes $i_1\leq i_2\leq\ldots \leq i_n$ (we also denote $i_0=0$). These indexes define a partition of the items to $n$ bundles of consecutive items: for $j\in[n]$  bundle $Z_j$ contains items $\{e_{i_{j-1} +1}, \ldots, e_{i_j}\}$ ($Z_j$ is empty if $i_j=i_{j-1}$). 

The family of partitions that define the share $NS_{n,q}$ is defined as follows. Fix an integer variable $k$, whose value satisfies {$n-q < k \le m$.} 
For $k$,
consider the following family ${\FF_{m,n,q,k}}$ of {(at most $m^{n + q -2}$)}
partitions $(B_1, \ldots, B_n)$ of $\items$ that are defined as follows. 
Let $Z$ be the set of items $\{e_1,\ldots, e_k\}$. 
Fix any consecutive partition $(Z_1,\ldots,Z_n)$ of $Z$ to $n$ 
sets, such that $i_j=j$ for every $j \le n-q$ (that is, the $n-q$ first sets are singletons). 
Then, fix any consecutive partition {$(S'_1,\ldots, S'_n)$ of suffix $S'= S= \items\setminus Z$ to $n$ sets (possible empty), and for convenience denote $S_j=S'_{n-j+1}$ for $j\in [n]$.}
A partition of $\items$ into $n$ bundles $(B_1, \ldots, B_n)$ is a member of ${\FF_{m,n,q,k}}$ if and only if for some pair of partitions as defined above,
{$B_j= Z_j\cup S_j$}
for every $j\in[n]$. 

The value of the $NS_{n,q}$ share for valuation $v$ {over $m$ items} is defined to be \\ $NS_{n,q}(v) = \max_k \max_{(B_1,\ldots, B_n)\in {\FF_{m,n,q,k}}}\min_j[v_i(B_j)]$.
The definition of the {$NS_{n,q}$ share} is similar to that of the MMS, except that instead of maximizing (the minimum value of a bundle) over all possible partitions, 
one maximizes only over the partitions in $\FF_{n,q} = {\bigcup_{n-q < k \le m} \FF_{m,n,q,k}}$.\footnote{{Note that $\FF_{n,q}$ depends also on $m$. Yet, the only way in which we use $\FF_{n,q}$ is in the definition of  the share $NS_{n,q}$, which accepts a valuation as an argument. As 
$m$ could always be deduced from the valuation, to simplify notation we omit explicitly including $m$ in the notation.}
}
{We remark that $NS_{n,q}$ shares are called the \emph{Nested Shares} as for every $j\in[n-1]$ all items in $B_{j+1}$ are nested inside the items of $B_{j}$ (within the sorted order).}

Observe that for every $n$ and every $q > q'$ {(and every $m$)} it holds that $\FF_{n,q'} \subseteq \FF_{n,q}$, and consequently $NS_{n,q}(v) \ge NS_{n,q'}(v)$ for every valuation $v$.  In particular, $NS_{n,n}$ dominates $NS_{n,q}$ for every $q\in [n]$.

\begin{example}\label{example:NS}
To illustrate the $NS_{n,q}$ share consider the following example with four agents ($n=4$). The valuation is additive and item values are  (3, 3, 2, 2, 2, 2, 2, 2, 1, 1). Here, the MMS is~5 (the MMS partition contains two bundles of the form $(3,2)$ and two bundles of the form $(2,2,1)$). However, for every integer $q$ such that $1 \le q \le 4$, the value of the nested share $NS_{4,q}$ is~4. {To lower bound $NS_{4,q}$, it suffices to consider $q=1$.} The $NS_{4,1}$ partition is required to place the first four items in separate bins, and may reach a value of~4 by placing at least one additional item in each bin (e.g., creating bins $(3,1)$, $(3,1)$, $(2,2,2)$, $(2,2,2)$).
{To upper bound $NS_{4,q}$, it suffices to consider $q=4$.} The $NS_{4,4}$ partition allows for $Z_j$ to be of any size, but still it must create at least one bin of value at most~4. ({There are four items of odd value. For all bins to have value~5, the odd valued items must be in different bins. However, in every $NS_{4,4}$ partition, either there is a bin with only a single item, or two odd valued items are in the same bin.}) 
\end{example}

Though some of our results apply to $NS_{n,q}$ shares for arbitrary values of $n \ge 2$ and $1\le q \le n$, our most interesting current results apply to the case of $q=3$.

\begin{theorem}\label{thm:NS2} 
	Consider a setting with $n$ agents that have additive valuations over a set  $\items$  of goods.
    Then the {$NS_{n,3}$ share} is self-maximizing, poly-time computable, and is at least a $\frac{2n}{3n-1}$ fraction of the MMS, and at least a $\frac{4}{5}$ fraction of the MMS when {$n\leq 4$}.
    Additionally, the share is feasible, and an acceptable allocation can be found in polynomial time.
\end{theorem}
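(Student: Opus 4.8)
The statement bundles four claims, and I would establish them in increasing order of difficulty. Self-maximization is immediate: by construction $NS_{n,3}$ is an ordinal maximin share (its value is $\max_{P\in\FF_{n,3}}\min_j v(B_j)$ over a family of partitions depending only on the sorted order of $v$), so \cref{prop:ordinal} applies verbatim. Polynomial-time computability I would first get from the fact that each $\FF_{m,n,3,k}$ is determined by two internal cut points inside the flexible prefix block together with $n-1$ cut points inside the suffix, so for fixed $n$ there are at most $m^{n+1}$ partitions per $k$ and $O(m^{n+2})$ in total, each evaluated in $O(m)$ time. To handle $n$ as part of the input I would instead run a dynamic program that sweeps the prefix and suffix cut points in tandem (the pairing $B_j=Z_j\cup S_j$ with the reversed suffix assignment couples prefix part $j$ to suffix part $n+1-j$), maintaining the running minimum bundle value, computing $\max_P\min_j v(B_j)$ in time polynomial in $m$ and $n$.

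For the approximation bound I would mimic the template of the two-agent PTAS (\cref{thm:two-agents}): normalize $\MMSv=1$, sort $e_1\ge\cdots\ge e_m$, and exhibit a single nested partition whose minimum bundle value is at least $\frac{2n}{3n-1}$. The plan is to split the items into ``large'' ones, forced into the singleton prefix parts $Z_1,\dots,Z_{n-3}$ and the three flexible parts $Z_{n-2},Z_{n-1},Z_n$, and ``small'' ones, balanced across bundles via the consecutive suffix partition. Since each suffix item has value at most $\frac{v(\items)}{k+1}$ for prefix size $k$, a moving-knife choice of suffix cuts leaves every bundle short of its target by at most roughly half the largest remaining small item, and the three prefix degrees of freedom (exactly where $q=3$ enters) absorb the ``medium'' clusters that a rigid round-robin could not. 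Choosing $k$ to balance these losses against $\MMSv\ge v(\items)/n$ yields the $\frac{2n}{3n-1}$ ratio after a short calculation. For $n\le 4$ the number of forced singletons is $n-3\le 1$, so almost the whole prefix is flexible; I would redo the balancing with this extra slack (a small case analysis on the few large items) to push the guarantee up to $\frac45$, which for $n\le 4$ improves on $\frac{2n}{3n-1}$.

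The hard part will be feasibility: exhibiting, for arbitrary reported profiles $(v_1,\dots,v_n)$, one allocation giving every agent $i$ value at least $NS_{n,3}(v_i)$, and doing so in polynomial time. This does not follow from known MMS-approximation results: since $NS_{n,3}(v_i)\ge \frac{2n}{3n-1}\MMSi$ only lower-bounds the share, $NS_{n,3}(v_i)$ can exceed $\frac34\MMSi$, so the $\frac34$-MMS algorithm of \cite{GT20} need not output an acceptable allocation. I would instead design a dedicated bag-filling / valid-reduction algorithm tailored to the nested structure: repeatedly either hand a highly-valued single item to an agent for whom it already meets her share, or grow a bag of low-value items until some remaining agent accepts it, in each step removing one agent and her bundle and arguing that the residual instance keeps every surviving agent's share attainable. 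The crux is a structural invariant showing the process never stalls while agents remain; this is where $q=3$ is essential, since each agent's nested share can ``blame'' its value on at most three flexible blocks, which bounds how many large items an agent truly needs and lets a counting/potential argument certify that an acceptable bag always exists. Establishing this invariant, and checking that each reduction preserves the $NS_{n,3}$ targets of the remaining agents, is the main technical obstacle; the rest is bookkeeping.
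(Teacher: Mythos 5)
Your handling of self-maximization and poly-time computability matches the paper: the former is exactly \cref{prop:ordinal} applied to an ordinal maximin share, and your dynamic program over coupled prefix/suffix cut points is the same idea as the paper's \cref{lem:NS-poly-time}. The approximation bounds are where your plan diverges and breaks. First, you misattribute the role of $q=3$: the paper proves the $\frac{2n}{3n-1}$ bound already for $q=1$ (\cref{lem:twothirdsapprox}), with no prefix flexibility at all, by a short induction -- place $e_1,\ldots,e_n$ in separate bins, fill from the tail closing each bin at threshold $\frac{2n}{3n-1}\cdot MMS$, reduce to $n-1$ agents when $v(e_1)$ meets the threshold or when the last bin would have only two items, and otherwise deduce $v(e_{n+1}) < \frac{n}{3n-1}\cdot MMS$ so that no bin overshoots $\frac{3n}{3n-1}\cdot MMS$ and value $\frac{2n}{3n-1}\cdot MMS$ remains for $B_1$. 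Your PTAS-style balancing borrowed from \cref{thm:two-agents} controls errors additively in units of $v(\items)/(k+1)$, which yields $(1-\epsilon)$-type guarantees for large $k$ but has no mechanism for producing the specific ratio $\frac{2n}{3n-1}$; and the ``short calculation'' is not supplied. More seriously, the $\frac{4}{5}$ bound for $n\le 4$ is not ``a small case analysis on the few large items.'' For $n=3$ the paper needs the chain \cref{lem:rho21}, \cref{lem:rho31}, \cref{lem:rho32} (an extensive constraint-by-constraint analysis), and for $n=4$ the proof that $\rho_{4,3}=\frac{4}{5}$ (\cref{lem:rho43-ub}) is \emph{computer assisted}: a mixed integer linear program over sorted item values, with a separate reduction (\cref{lem:14}) showing instances with at most~14 items suffice. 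Nothing in your sketch accounts for this, and it is genuinely the bottleneck -- the paper notes this result gives the first known polynomial-time $\frac{4}{5}$-MMS allocation for four agents.

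On feasibility, your bag-filling/valid-reduction outer loop is essentially the paper's \cref{lem:fixqlargen}, which reduces feasibility of $NS_{n,3}$ to feasibility of $NS_{3,3}$ (after passing to ordered instances via \cref{lem:BL}); your observation that feasibility cannot be inherited from $\rho$-MMS algorithms is also correct and matches the paper's motivation. But your sketch leaves open precisely the base case, and that is where the real difficulty sits: for three agents with distinct valuations, bag filling cannot certify the $NS_{3,3}$ values, which can equal the full nested-maximin optimum. The paper's key idea is structural and absent from your proposal: on an ordered instance, every $NS_{3,3}$ partition is determined by four cut points, so two such partitions carve the item sequence into nine intervals, of which the first and last meet the \emph{same} bundle pair; hence at most $8$ of the $9$ bundle pairs intersect, i.e., no two $NS_{3,3}$ partitions are fully intersecting (\cref{lem:not-fully}). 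A separate case analysis (\cref{lem:no-inter-feasible}) then converts a disjoint bundle pair into an acceptable allocation via cut-and-choose subroutines. Your proposed invariant (``each agent's share blames at most three flexible blocks, so a counting/potential argument certifies an acceptable bag'') is not developed, and there is no evident route from it to the three-agent case; note also that without the reduction to a common item ordering across agents, the interval-intersection argument is unavailable, so \cref{lem:BL} is not optional bookkeeping but a prerequisite.
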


We outline the main ideas of the proof of \cref{thm:NS2}, and refer the reader to \cref{app:thm-NS2} for further details. 
The fact that $NS_{n,3}$ is self maximizing is an immediate consequence of the fact that it is an ordinal maximin share (\cref{prop:ordinal}). Polynomial time computability of the $NS_{n,q}$ share value (for every $1 \le q \le n$) and an associated optimal partition is demonstrated by an algorithm based on dynamic programming. See Lemma~\ref{lem:NS-poly-time}. The $\frac{2n}{3n-1}$ approximation holds already for $q=1$, and the proof of this fact can be derived from results in~\cite{GMT19}, showing that a related allocation algorithm provides allocations that give every agent a $\frac{2n}{3n-1}$ fraction of her MMS. In our case, we only need to prove this approximation ratio when all agents have the same valuation function, making the proof simpler. See Lemma~\ref{lem:twothirdsapprox}. 
We also complement this result by showing that even for $q=3$, as $n$ grows, the approximation ratio of $NS_{n,3}$ converges (from above) to $\frac{2}{3}$. See Proposition~\ref{pro:NEWnegative}. {The $\frac{4}{5}$ approximation for $n=4$ uses in an essential way the choice of $q=3$, and does not hold for $q \le 2$. (Likewise, the $\frac{4}{5}$ approximation for $n=3$ requires $q \ge 2$, and does not hold for $q = 1$.) The $\frac{4}{5}$ approximation {crucially depends on considering} 
partitions in which the first $n$ items are not all placed in separate bins. 
See Lemma~\ref{lem:rho32} (for $n=3$) and \cref{lem:rho43-ub} (for $n=4$). We remark that the proof of \cref{lem:rho43-ub} is computer assisted, see Appendix \ref{sec:computer}.

\subsubsection{Proof that $NS_{n,3}$ is a feasible share}

Here we prove part of  \cref{thm:NS2}, showing that the $NS_{n,q}$ share is feasible for $q = 3$ {(thus also for $q=2$ and $q=1$).}  Moreover, we show that an acceptable allocation can be computed in polynomial time.
We do not know whether the $NS_{n,q}$ share is feasible 
for larger values of $q$.}

{For our feasibility proof, we shall make use of the notion of \emph{ordered allocation instances}.

\begin{definition}
We say that an allocation instance is {\em ordered} if for every {$j$ and $k$ such that} $1 \le j < k \le m$ and every agent $i$, it holds that $v_i(e_j) \ge v_i(e_k)$.
\end{definition} 

The following lemma is an immediate corollary of a result of~\cite{bouveret2016characterizing}.

\begin{lemma}
\label{lem:BL}
Consider the class of additive valuations, any share for this class (e.g., MMS, $NS_{n,q}$) and any value of $\rho$ (including $\rho = 1$). Then any (polynomial time) allocation algorithm that gives every agent at least a $\rho$ fraction of her share {for} ordered instances, can be transformed into a (polynomial time) algorithm that produces such an allocation for arbitrary instances.
\end{lemma}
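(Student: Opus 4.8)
The plan is to invoke the reduction of~\cite{bouveret2016characterizing} from arbitrary additive instances to \emph{ordered} ones, together with the observation that every share we care about here (the MMS and $NS_{n,q}$) is an \emph{ordinal} share: its value $\Sharefullvi$ depends only on the sorted multiset of item values of $v_i$, not on the identities of the items. The first step is therefore to record that the instance transformation we apply leaves every agent's share value unchanged, so that a $\rho$-fraction guarantee on the transformed instance is a $\rho$-fraction guarantee with respect to the \emph{same} targets on the original instance.

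For the construction of the ordered instance, given an arbitrary instance with additive valuations $v_1,\ldots,v_n$, I would, for each agent $i$, sort her item values in non-increasing order and attach the $j$-th largest value of $v_i$ to a common abstract position $e_j$; formally, define $v'_i(e_j)$ to equal the $j$-th largest value of $v_i$. The resulting instance $(v'_1,\ldots,v'_n)$ is ordered, since $v'_i(e_1)\ge v'_i(e_2)\ge\cdots\ge v'_i(e_m)$ holds for every $i$ simultaneously. Because each $v'_i$ has exactly the same sorted value profile as $v_i$, ordinality of the share gives $\Sharefullvip=\Sharefullvi$ for every $i$. I then run the assumed (polynomial-time) algorithm on this ordered instance to obtain an allocation $A'$ with $v'_i(A'_i)\ge \rho\cdot\Sharefullvip=\rho\cdot\Sharefullvi$ for every $i$.

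The heart of the argument is to convert $A'$ into an allocation $A$ of the \emph{original} items that does at least as well for every agent. I would process the positions in the common sorted order $e_1,e_2,\ldots,e_m$; at step $j$, letting $i=\pi(j)$ be the agent who receives $e_j$ under $A'$, I hand agent $i$ her most valuable remaining original item (under $v_i$). The key claim (the invariant) is that the item agent $i$ receives at step $j$ has $v_i$-value at least $v'_i(e_j)$, i.e.\ at least her $j$-th largest value. Indeed, before step $j$ exactly $j-1$ items have been removed from the pool, so the pool contains $m-j+1$ items; since only $j-1$ items are missing, at least one of agent $i$'s top $j$ items survives, and such an item has value at least her $j$-th largest value $=v'_i(e_j)$. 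Summing over the steps at which agent $i$ is served yields $v_i(A_i)\ge \sum_{j:\pi(j)=i} v'_i(e_j)=v'_i(A'_i)\ge \rho\cdot\Sharefullvi$, as required.

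Finally, all three phases (building the ordered instance, running the given algorithm, and the greedy reassignment) run in polynomial time, so the transformed algorithm is polynomial-time whenever the original is. The only place that requires care is making the counting argument for the invariant precise---establishing that pruning exactly $j-1$ items cannot destroy all of an agent's top $j$ items---but this is a one-line pigeonhole observation rather than a genuine obstacle, which is why the lemma follows immediately from~\cite{bouveret2016characterizing}.
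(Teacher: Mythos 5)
Your proposal is correct and is essentially the same argument the paper relies on: the paper's proof is exactly the Bouveret--Lemaître reduction it cites, namely sorting each agent's values to build the ordered instance (share values are preserved since, for additive valuations, name independence makes every share depend only on the sorted value profile) and then turning the ordered allocation into a picking order where the agent holding position $e_j$ takes her best remaining real item, with your pigeonhole invariant justifying the value bound. Since you have filled in precisely the steps the paper defers to the citation, there is nothing to add.
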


The proof of Lemma~\ref{lem:BL} is based on defining a picking order for the arbitrary instance, based on the allocation for an ordered version of the instance. See~\cite{bouveret2016characterizing} for further details.
}

\begin{lemma}
\label{lem:fixqlargen}
For the class of additive valuations and every $q \ge 1$, if $NS_{q,q}$  is feasible, then so is $NS_{n,q}$ for every $n > q$. Moreover, if an acceptable allocation for $NS_{q,q}$ can be found in polynomial time, so can an acceptable allocation for $NS_{n,q}$. 
\end{lemma}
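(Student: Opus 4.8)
The plan is to reduce to ordered instances via Lemma~\ref{lem:BL} and then induct on the number of agents, peeling one agent at a time while guaranteeing that no surviving agent's nested share ever drops, until only $q$ agents remain; at that point feasibility of $NS_{q,q}$ finishes the job. So by Lemma~\ref{lem:BL} assume the instance is ordered, i.e. $v_i(e_1) \ge v_i(e_2) \ge \cdots \ge v_i(e_m)$ for every agent $i$ under the common labelling $e_1, \ldots, e_m$. The inductive claim is that an ordered instance with $n' > q$ agents and item set $\items'$ can be reduced to one with $n'-1$ agents in a way that (R1) the peeled agent gets a bundle worth at least her current share, and (R2) every remaining agent's reduced $NS_{n'-1,q}$ share (on the smaller item set) is at least her $NS_{n',q}$ share.

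For the reduction step, for each agent $i$ compute, using Lemma~\ref{lem:NS-poly-time}, an optimal $NS_{n',q}$ partition $(B_1^{(i)}, \ldots, B_{n'}^{(i)})$ of $\items'$ for $v_i$. Since $n' > q$, the first $Z$-part is forced to be the singleton $\{e_1\}$, so by the nested structure $B_1^{(i)} = \{e_1\} \cup T_i$, where $T_i$ is the least valuable suffix block, which is precisely the bottom $|T_i|$ items of $\items'$. Let $i^*$ minimize $|B_1^{(i)}|$ (equivalently $|T_i|$), set $B^* = B_1^{(i^*)} = \{e_1\} \cup T_{i^*}$, give $B^*$ to $i^*$, and delete both. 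Fact (R1) is immediate: $v_{i^*}(B^*) \ge \min_j v_{i^*}(B_j^{(i^*)}) = NS_{n',q}(v_{i^*})$. For (R2), the point is that since all $T_i$ are bottom blocks and $|T_{i^*}|$ is smallest, $T_{i^*} \subseteq T_i$ and hence $B^* \subseteq B_1^{(i)}$ for every remaining agent $i$. Thus deleting $B^*$ from $i$'s optimal partition affects only $B_1^{(i)}$, shrinking it to the consecutive bottom block $T_i \setminus T_{i^*}$; merging this leftover into $B_2^{(i)}$ and discarding the emptied first bundle produces $n'-1$ bundles that form a valid $NS_{n'-1,q}$ partition of $\items' \setminus B^*$ (the singleton prefix $\tilde Z_j = Z_{j+1}^{(i)}$ shifts down by one, and the enlarged block $(T_i\setminus T_{i^*})\cup S_2^{(i)}$ remains the least valuable suffix block, preserving nestedness). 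Its minimum bundle value is at least $\min_{j\ge 2} v_i(B_j^{(i)}) \ge NS_{n',q}(v_i)$, giving $NS_{n'-1,q}(v_i,\items'\setminus B^*) \ge NS_{n',q}(v_i,\items')$, which is (R2).

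Iterating from $n$ down to $q$, every surviving agent's share stays at least its original $NS_{n,q}$ value (shares only increase along the chain of reductions), and each peeled agent received, by (R1), a bundle worth at least its original share. When $q$ agents remain, feasibility of $NS_{q,q}$ yields an acceptable allocation of the remaining items, and since these surviving $NS_{q,q}$ values dominate the original $NS_{n,q}$ values, every agent ends with at least her $NS_{n,q}$ share. For the ``moreover'': Lemma~\ref{lem:BL} is polynomial, each of the at most $n-q$ reductions is polynomial since Lemma~\ref{lem:NS-poly-time} supplies the optimal partitions (hence $i^*$ and $B^*$), and the final step invokes the assumed polynomial algorithm for $NS_{q,q}$.

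I expect the main obstacle to be fact (R2), and in particular the insight that one cannot simply hand the top item $e_1$ to a single satisfied agent: when items are nearly equal and $m \gg n$, no agent's share is met by $e_1$ alone (each needs roughly $m/n$ items), so naive single-item peeling fails. The resolution is to peel the agent whose top bundle is \emph{smallest}, which forces the removed bundle $B^*$ to be nested inside every other agent's top bundle $B_1^{(i)}$; this containment is exactly what makes the partition surgery legal and keeps every remaining share from dropping. Verifying that the surgically modified partition indeed lies in the family $\FF_{n'-1,q}$ (correct singleton prefix length, and the merged block still forming a consecutive, least-valuable suffix block) is the step that needs the most care.
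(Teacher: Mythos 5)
Your proof is correct and follows essentially the same route as the paper: reduce to ordered instances via Lemma~\ref{lem:BL}, peel one agent at a time with a bundle of the form $\{e_1\}$ plus a cheapest suffix, preserve the remaining agents' shares by merging the leftover of $B_1^{(i)}$ into $B_2^{(i)}$ (yielding a valid $\FF_{n'-1,q}$ partition), and finish with the assumed feasibility of $NS_{q,q}$. The only, immaterial, difference is the rule for selecting the peeled bundle: the paper bag-fills from $e_m$ upward until the first moment some agent desires the bin, and minimality of that stopping point forces exactly the containment $B^* \subseteq B_1^{(i)}$ for every remaining agent that you obtain directly by giving the agent with the smallest optimal first bundle her entire first bundle.
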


\begin{proof}
Consider an instance with $n > q$ agents, and the associated $NS_{n,q}$ partitions for the $n$ agents.
{We assume that the instance is ordered. This assumption can be made without loss of generality, by Lemma~\ref{lem:BL}.}

Create $n$ bins. Place items $e_1, \ldots, e_{n-q}$ in bins $B_1, \ldots, B_{n-q}$, respectively. 
Now insert the minimum number of items in reverse order (starting at $e_m$, the minimum may be~0) into $B_1$ until there is an agent that {\em desires} $B_1$, {where we say that an agent \emph{desires a bin} if the agent's total value for the items in the bin is at least as large as her $NS_{n,q}$ share value}. Give $B_1$ to the respective agent (breaking ties arbitrarily if there is more than one such agent), and remove the agent and the bin. Observe that for each of the remaining agents, the $NS_{n-1,q}$ share with respect to the remaining items is at least as large as her original $NS_{n,q}$ share with respect to all items. (The bins $B_2, \ldots, B_n$ in the partition in $\FF_{n,q}$ that gave rise to $NS_{n,q}$ can serve as a partition in $\FF_{n-1,q}$ with respect to the remaining items. 
If some items of the original $B_1$ also remain, they can be moved into $B_2$, while still giving a partition in $\FF_{n-1,q}$.)  Continue in the same fashion with bin $B_2$, and more generally with the remaining bins, up to and including bin $B_{n-q}$. At this point only $q$ agents remain, and for each of them the $NS_{q,q}$ share over the remaining items has value at least as high as the original $NS_{n,q}$ share. {After computing the respective $NS_{q,q}$ partitions over these remaining items, the assumption that $NS_{q,q}$ {is feasible} implies that one can allocate to each of the $q$ remaining agents a bundle that she values at least as much as her share.} 

{The above algorithm runs in polynomial time (if an acceptable allocation for $NS_{q,q}$ can be found in polynomial time). 
This follows from \cref{lem:NS-poly-time} that shows that the $NS_{n,q}$ value of each of the agents can be computed in polynomial time, together with a respective $NS_{n,q}$ partition.}
\end{proof}

\begin{definition}
Consider two partitions of $\items$ into three disjoint bundles, $P^i = (B_1^i, B_2^i, B_3^i)$ and $P^j = (B_1^j, B_2^j, B_3^j)$. We say that $P^i$ and $P^j$ are {\em fully intersecting} if every bundle of $P^i$ intersects every bundle of $P^j$.
\end{definition}

\begin{lemma}\label{lem:no-inter-feasible}
Consider $n=3$, and for every agent $i \in \{1,2,3\}$ a partition of $\items$ into three disjoint bundles $P^i = (B_1^i, B_2^i, B_3^i)$. If no two partitions $P^i$ and $P^j$ are fully intersecting, then there is an allocation $A = (A_1, A_2, A_3)$ such that for every agent $i$ with an additive valuation function {$v_i$ it holds that} $v_i(A_i) \ge \min[v_i(B_1^i), v_i(B_2^i), v_i(B_3^i)]$. Moreover, given the partitions $P^i$ for $i \in \{1,2,3\}$ and the additive valuations {$(v_1,v_2,v_3)$,} 
such an allocation $A$ can be computed in polynomial time.
\end{lemma}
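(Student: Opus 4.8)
The plan is to produce an explicit allocation that gives each agent a bundle of value at least $t_i := \min_j v_i(B_j^i)$, its own worst bundle value. The single most useful fact is that, since $B_1^i,B_2^i,B_3^i$ partition $\items$ and each has value at least $t_i$, additivity gives $v_i(\items)=\sum_j v_i(B_j^i)\ge 3t_i$; hence every agent needs only (roughly) a third of its total. Two elementary sufficient conditions will be used repeatedly: agent $i$ is satisfied by any bundle of value $\ge t_i$, and in particular by any bundle that contains one of its own complete bundles $B_j^i$. The hypothesis enters through its contrapositive: for every pair of agents there is a disjoint pair of their bundles. Since the hypothesis asserts that \emph{no} two partitions are fully intersecting, I obtain such a disjoint pair for \emph{all three} agent pairs simultaneously, and exploiting this extra slack is what makes the argument go through.

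The first move is a clean reduction. Fix a disjoint pair for agents $1$ and $2$, say $B_a^1\cap B_b^2=\emptyset$, and split $\items$ into the three regions $X=B_a^1$, $Y=B_b^2$, and $Z=\items\setminus(X\cup Y)$; these are disjoint and cover $\items$ precisely because $X$ and $Y$ are disjoint. Let agent $3$ take whichever of $X,Y,Z$ it values most; since $v_3(X)+v_3(Y)+v_3(Z)=v_3(\items)\ge 3t_3$, that region has value at least $t_3$, so agent $3$ is satisfied. If agent $3$ took $Z$, I assign $X$ to agent $1$ and $Y$ to agent $2$; as $v_1(X)=v_1(B_a^1)\ge t_1$ and $v_2(Y)=v_2(B_b^2)\ge t_2$, the allocation is complete. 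This disposes of the generic case.

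The residual case is when agent $3$ prefers $X$ or $Y$, say $X=B_a^1$. Then agents $1$ and $2$ must share $S:=\items\setminus B_a^1=B_{a'}^1\cup B_{a''}^1$, on which agent $1$ owns two complete bundles (each of value $\ge t_1$) and agent $2$ owns the complete bundle $B_b^2\subseteq S$. I would test the short list of candidate splits: give agent $2$ its bundle $B_b^2$ and agent $1$ the complement $S\setminus B_b^2$; or give agent $1$ one of $B_{a'}^1,B_{a''}^1$ and agent $2$ the other. The main obstacle is exactly here: each of these candidates can individually fail (for instance if $B_b^2$ equals all of $S$ while agent $2$ spreads its value thinly across both of agent $1$'s bundles, no single candidate works), so the two-agent residual need \emph{not} be solvable in isolation. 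The resolution is to refuse to commit to the first disjoint pair: because all three pairs are non-fully-intersecting, I have the freedom to choose which ordered agent pair seeds the three regions and which disjoint bundle pair is used. The proof is then completed by a finite case analysis on the positions and multiplicities of the zeros (disjoint bundle pairs) in the three $3\times 3$ intersection matrices, showing that at least one such choice avoids the degenerate overlap and makes one candidate split succeed, with $v_i(\items)\ge 3t_i$ used to bound the problematic cross-valuations. Coordinating these choices consistently is the crux of the argument.

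Finally, the construction is manifestly algorithmic and polynomial time. There are only a constant number of agent-pair/disjoint-pair seeds, each yielding a constant number of candidate allocations, and checking whether a given candidate gives every agent value at least its efficiently computable threshold $t_i$ takes polynomial time. Since the case analysis guarantees that at least one candidate is acceptable, the algorithm simply outputs such a candidate.
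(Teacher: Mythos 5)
Your generic case is sound: seeding with a disjoint pair $B_a^1\cap B_b^2=\emptyset$, letting agent~3 take the best of $X=B_a^1$, $Y=B_b^2$, $Z=\items\setminus(X\cup Y)$, and finishing when she takes $Z$ is correct, and the averaging bound $v_i(\items)\ge 3t_i$ is indeed the engine of the paper's proof as well. But there is a genuine gap exactly where you place ``the crux'': the residual case (agent~3 prefers $X$ or $Y$) is not resolved, only deferred to an unspecified ``finite case analysis on the positions and multiplicities of the zeros'' in the intersection matrices. That deferral is not a routine afterthought, for two reasons. First, the intersection pattern alone (which bundle pairs are disjoint) does not determine whether any of your candidate splits succeeds --- success depends on the cross-valuations, and you yourself exhibit a configuration ($B_b^2=S$ with $v_2$ spread thinly across $B_{a'}^1$ and $B_{a''}^1$) that is consistent with the hypothesis and defeats your entire candidate menu for the seed $(1,2)$; nothing in the sketch shows that re-seeding from the $(1,3)$ or $(2,3)$ disjoint pairs repairs it. Second, your construction only ever hands agent~3 one of three fixed regions determined by the seed, and this restriction is itself part of the obstruction.

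The paper escapes the residual case with two tools absent from your sketch. One is an acceptability lemma from averaging: for every $k\neq i$, some bundle of $P^k$ has $v_i$-value at least $v_i(\items)/3\ge t_i$, so each agent accepts at least one bundle from \emph{each other agent's} partition; the case split is then on how many bundles of $P^2$ agent~1 accepts, not on which pairs are disjoint. The other is a cut-and-choose subroutine: after removing a bundle $B^2_j$ of $P^2$ that agent~1 does not accept and that is disjoint from some $B^1_{j'}$, agent~1 cuts the remainder into $B^1_{j'}$ and $\items\setminus(B^1_{j'}\cup B^2_j)$ (both acceptable to her since $v_1(B^2_j)<t_1$), and another agent \emph{chooses}; the choose step is guaranteed because the removed set is a bundle of agent~2's own partition (so value at least $2t_2$ remains for her), or because $v_3(\items\setminus B^2_j)>2t_3$ when $v_3(B^2_j)<t_3$. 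Note that here the chooser may receive a piece that is none of your $X,Y,Z$. The paper then reduces to the situation where each partition contains a unique bundle acceptable to both other agents and invokes the non-fully-intersecting hypothesis, pair by pair, to force these three bundles to be pairwise disjoint, which yields the allocation. To complete your proof you would need either to actually carry out the coordinated case analysis --- and your own failure mode suggests a purely combinatorial analysis of the zero pattern cannot suffice --- or to import an adaptive choose-step of this kind. The polynomial-time claim is fine, but only conditional on the missing existence argument.
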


\begin{proof}
Consider three agents with additive valuation functions, $v_i$ for each agent $i$, and associated partitions $P^i = (B_1^i, B_2^i, B_3^i)$, for every agent $i \in \{1,2,3\}$.

For every two agents $i$ and $k$ (it can be that $i = k$) we define $v_i(P^k) = \min_{j \in \{1,2,3\}}[v_i(B_j^k)]$. We refer to $v_i(P^i)$ as the {\em share} of {agent $i$.}
We show that there is an allocation that gives every agent at least her share.

We say that a bundle $B$ is {\em acceptable}
for agent $i$ if 
$v_{i}(B) \ge v_i(P^i)$. 
Every bundle in her $P^i$ partition is acceptable for $i$. Moreover, for every $k \not= i$, at least one bundle in the $P^k$ partition is acceptable for $i$ (as $\max_{j \in \{1,2,3\}}[v_i(B_j^k)] \ge \frac{1}{3}\sum_{j \in \{1,2,3\}} v_i(B_j^k) = \frac{1}{3}\sum_{j \in \{1,2,3\}} v_i(B_j^i) \ge v_i(P^i)$). 

Consider two of the agents, say~1 and~2. Suppose that there are two bundles in the $P^2$ partition that are acceptable for agent~1. Then we can give agent~3 the bundle that is acceptable for her in the $P^2$ partition, give agent~1 a different bundle from the $P^2$ partition that is acceptable for agent~1, and give the remaining bundle to agent~2. This is an acceptable allocation. Hence we may assume that at most one bundle in $P^2$ is acceptable for agent~1.

{As at most one bundle in $P^2$ is acceptable for agent~1, there is a bundle $B^2_j$ in the $P^2$ partition that is not acceptable for agent~1. Suppose that there is a bundle $B^1_{j'}$ in the $P^1$ partition that is disjoint from $B^2_j$. 
}
In this case, if $B^2_j$ is removed, agent~1 can still make two bundles each of value at least $v_i(P^i)$, where one bundle is $B^1_{j'}$ and the other is $\items \setminus (B^1_{j'} \cup B^2_{j})$. Hence if $v_3(B^2_j) \ge v_3(P^3)$, we can give $B^2_j$ to agent~3, and use cut and choose (any of the agents~1 or~2 can serve as the cutter) to give each of the other two agents a bundle of value at least her share. Alternatively, if $v_3(B^2_j) < NS_{3,3}(v_3)$, we can give $B^2_j$ to agent~2, and let agent~3 choose among the two bundles, $B^1_{j'}$ or $\items \setminus (B^1_{j'} \cup B^2_{j})$, the one with higher $v_3$ value. This value is at least $PS_3(v_3)$, and hence at least her share {$v_3(P^3)$}. As argued above, the remaining bundle gives agent~1 at least her share $v_i(P^i)$. In any case, we get an acceptable allocation. 

Hence we can assume that each of the two bundles in the $P^2$ partition that are not acceptable for agent~1 intersect all three bundles in the $P^1$ partition.

As partitions $P^1$ and $P^2$ are assumed not to be fully intersecting, there must be two bundles (one from each partition) that do not intersect. {Moreover, we may assume that there is exactly one pair of bundles that do not intersect, and that  each of these bundles is the only bundle from that partition that is acceptable for the other agent. (In every other case, the above arguments show that an acceptable allocation exists.)} In other words, we {may assume} that {the pair of} acceptable bundles do not intersect.
A similar assumption holds for every pair of two different agents.

In a given input instance, if two different bundles in the $P^i$ partition of agent $i$ are acceptable, {each for a different other agent}, then we have an acceptable allocation. (For example, if $B^1_1$ is acceptable for agent~2 and $B^1_2$ is acceptable for agent~3, then $(B^1_3, B^1_1, B^1_2)$ is an acceptable allocation.) Hence each agent $i$ has in her $P^i$ partition a unique bundle that is acceptable for the other two agents. As acceptable bundles do not intersect, these three bundles are disjoint. Hence they form an acceptable allocation (each agent gets the acceptable bundle from her own partition). 
\end{proof}

\begin{lemma}\label{lem:not-fully}
{Consider an ordered allocation instance with three agents, and} let $P^i$ and $P^j$ be two $NS_{3,3}$ partitions of $\items$. Then $P^i$ and $P^j$ are not fully intersecting.
\end{lemma}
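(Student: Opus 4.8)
The plan is to first make completely explicit the ``nested'' shape of every partition in the family $\FF_{3,3}$, and then, assuming $P^i$ and $P^j$ were fully intersecting, to derive a contradiction by locating a disjoint pair of bundles — which will turn out to be the two middle bundles. Since the instance is ordered, all agents rank the items in the common decreasing order $e_1,\ldots,e_m$ (ties broken consistently), so both $P^i$ and $P^j$ are $NS_{3,3}$ partitions with respect to this single order. Unwinding the definition of $\FF_{3,3}$ (here $n=q=3$, so there are no forced leading singletons), a partition in $\FF_{3,3}$ chooses an arbitrary consecutive split $(Z_1,Z_2,Z_3)$ of a prefix $Z=\{e_1,\ldots,e_k\}$ and a consecutive split of the complementary suffix whose order is reversed; collecting the cut points shows that such a partition is governed by four breakpoints $0\le p_1\le p_2\le p_3\le p_4\le m$ via
\[ B_1=\{e_1,\ldots,e_{p_1}\}\cup\{e_{p_4+1},\ldots,e_m\},\qquad B_2=\{e_{p_1+1},\ldots,e_{p_2}\}\cup\{e_{p_3+1},\ldots,e_{p_4}\}, \]
\[ B_3=\{e_{p_2+1},\ldots,e_{p_3}\}. \]
So $B_3$ is an innermost block, $B_2$ the surrounding shell, and $B_1$ the outermost prefix-plus-suffix. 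I write $p_1\le p_2\le p_3\le p_4$ for the breakpoints of $P^i$ and $q_1\le q_2\le q_3\le q_4$ for those of $P^j$.

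Assuming for contradiction that $P^i$ and $P^j$ are fully intersecting, I would use only three of the nine required intersections. From $B_3^i\cap B_3^j\ne\emptyset$ (the cores overlap) I get $\max(p_2,q_2)<\min(p_3,q_3)$, and in particular $q_2<p_3$ and $p_2<q_3$. From $B_3^i\cap B_1^j\ne\emptyset$, the core $(p_2,p_3]$ of $i$ must reach an extreme position of $j$ (where $B_1^j$ lives, namely positions $\le q_1$ or $>q_4$), forcing $p_2<q_1$ or $p_3>q_4$. Symmetrically, $B_1^i\cap B_3^j\ne\emptyset$ forces $q_2<p_1$ or $q_3>p_4$. (If either core is empty then $B_3^i\cap B_3^j=\emptyset$ already and we are done, so both cores may be assumed nonempty.)

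The core of the argument is the case split over these two binary alternatives. Two of the four combinations are immediately impossible by monotonicity of the breakpoints: $p_2<q_1$ with $q_2<p_1$ yields $q_2<p_1\le p_2<q_1\le q_2$, and $p_3>q_4$ with $q_3>p_4$ yields $q_3\le q_4<p_3\le p_4<q_3$. In each surviving ``opposite-side'' combination — say $p_2<q_1$ together with $q_3>p_4$ — I combine these with the core-overlap inequality $q_2<p_3$ to obtain the three facts $p_2<q_1$, $q_2<p_3$, $p_4<q_3$, and then check the four elementary intersections making up $B_2^i\cap B_2^j=\big((p_1,p_2]\cup(p_3,p_4]\big)\cap\big((q_1,q_2]\cup(q_3,q_4]\big)$ (read as position ranges); each is empty under these three inequalities. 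Hence $B_2^i\cap B_2^j=\emptyset$, contradicting full intersection; the remaining combination is handled by swapping the roles of $i$ and $j$.

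I expect the only delicate part to be the bookkeeping of the breakpoint inequalities and off-by-one boundary effects, so I would write out the four interval-disjointness checks explicitly rather than rely on a figure, and treat empty bundles (empty core or empty shell) as harmless since they only make the relevant intersection vacuously empty. Conceptually, the content is that once the two cores overlap and each core pokes out past the other into that partition's outer region, the two middle shells are squeezed into disjoint position ranges — which is exactly the obstruction to full intersection.
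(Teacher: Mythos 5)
Your proof is correct, including its handling of the edge cases (an empty core or shell immediately defeats full intersection, and the four interval-disjointness checks in each of the two surviving cases all go through: in the case $p_2<q_1$, $q_2<p_3$, $p_4<q_3$, each of the four constituent intersections of $B_2^i\cap B_2^j$ is indeed empty, and the other case is its mirror image). It shares the paper's starting point --- in an ordered instance every $NS_{3,3}$ partition is determined by four cutting points $0\le c_1\le c_2\le c_3\le c_4\le m$, with $B_3$ the inner interval, $B_2$ the surrounding shell, and $B_1$ the outer prefix-plus-suffix --- but the combinatorial core is genuinely different. The paper uses a pigeonhole count: the eight cutting points of $P^i$ and $P^j$ cut the item sequence into at most nine intervals, each nonempty interval lies in a single bundle of each partition and thus witnesses an intersecting pair, every intersecting pair is witnessed by some interval, yet the first and last intervals both witness the same pair $(B_1^i,B_1^j)$, so at most $9-1=8$ of the required nine pairs can intersect. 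You instead assume full intersection, extract from the three intersections $B_3^i\cap B_3^j$, $B_3^i\cap B_1^j$, $B_1^i\cap B_3^j$ a disjunctive system of breakpoint inequalities, eliminate two of the four combinations by monotonicity of the breakpoints, and in each surviving combination show that the two middle shells are squeezed into disjoint position ranges, contradicting $B_2^i\cap B_2^j\neq\emptyset$. The paper's count is shorter and scales more gracefully (bounding intersecting pairs by the number of intervals would adapt directly to partitions with more cutting points), but it is purely existential; your case analysis costs more bookkeeping yet is more informative, as it pinpoints \emph{which} pair of bundles must be disjoint (the two middle shells, absent an earlier contradiction), which is precisely the kind of disjointness witness consumed by the allocation argument of \cref{lem:no-inter-feasible}.
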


\begin{proof}
Let $e_1, \ldots, e_m$ be {an} 
ordering of the items (from most to least valuable) in the ordered instance. Given this order, an $NS_{3,3}$ partition is fully specified by four integral {\em cutting points} $0 \le c_1 \le c_2 \le c_3 \le c_4 \le m$, where $B_1 = \{e_1, \ldots, e_{c_1}\} \cup \{e_{c_4}, \ldots e_m\}$, $B_2 = \{e_{c_1}, \ldots, e_{c_2}\} \cup \{e_{c_3}, \ldots e_{c_4}\}$, and $B_3 = \{e_{c_2}, \ldots, e_{c_3}\}$. That is, $B_3$ is composed of a single interval, whereas each of $B_1$ and $B_2$ is a union of two intervals. (If there are equalities in the chain  $0 \le c_1 \le c_2 \le c_3 \le c_4 \le m$ of weak inequalities, then some of the intervals are empty.)

Each of the partitions $P^i$ and $P^j$ has four cutting points, and together these cutting points create $2*4 +1 = 9$ intervals (some of which might be empty). Each of these intervals {(when non-empty)} represents a consecutive set of items that in each of the two partitions belong to a single bundle. Hence there is a mapping between these {non-empty} intervals and pairs of bundles of $P^i$ and $P^j$ that intersect: every {non-empty} 
interval can be mapped to such an intersecting pair, and every intersecting pair is mapped to by some {non-trivial} interval. The mapping is surjective (onto) but cannot be one-to-one. In particular, both the first {interval} and the last interval {(when non-empty)} map to the same pair $(B_1^i, B_1^j)$. Hence the number of intersecting pairs is at most $9 - 1 = 8$, proving that $P^i$ and $P^j$ are not fully intersecting. 
\end{proof}

\begin{theorem}
\label{thm:NS3feasible}
For the class of additive valuations and every $n \ge 3$, the $NS_{n,3}$ share is feasible.
Moreover, an acceptable allocation can be found in polynomial time.
\end{theorem}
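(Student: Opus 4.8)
The plan is to prove the statement in two stages: first establish the base case $n = 3$, which is exactly the feasibility of $NS_{3,3}$, and then bootstrap to all $n > 3$ using the reduction already established in \cref{lem:fixqlargen}.

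For the base case, I would begin by invoking \cref{lem:BL} to reduce to ordered instances, so that it suffices to produce a polynomial-time acceptable allocation under the assumption $v_i(e_1) \ge v_i(e_2) \ge \cdots \ge v_i(e_m)$ for every agent $i$. For each of the three agents, let $P^i = (B_1^i, B_2^i, B_3^i)$ be an \emph{optimal} $NS_{3,3}$ partition, so that by the definition of an ordinal maximin share $\min_j v_i(B_j^i) = NS_{3,3}(v_i)$; both this value and such a partition are computable in polynomial time by \cref{lem:NS-poly-time}. The crucial structural input is \cref{lem:not-fully}, which asserts that in an ordered instance any two $NS_{3,3}$ partitions fail to be fully intersecting. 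Applying it to each of the three pairs $(P^1,P^2)$, $(P^1,P^3)$, $(P^2,P^3)$ verifies the hypothesis of \cref{lem:no-inter-feasible} simultaneously for all pairs.

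With that hypothesis in place, \cref{lem:no-inter-feasible} yields an allocation $A = (A_1, A_2, A_3)$ satisfying $v_i(A_i) \ge \min_j v_i(B_j^i) = NS_{3,3}(v_i)$ for every $i$, i.e. an acceptable allocation, and produces it in polynomial time given the partitions and the valuations. Combined with the polynomial-time computation of the optimal partitions themselves, this establishes that $NS_{3,3}$ is feasible and that an acceptable allocation can be found in polynomial time. For $n > 3$ I would then simply apply \cref{lem:fixqlargen} with $q = 3$: since $NS_{3,3} = NS_{q,q}$ is feasible with a polynomial-time acceptable allocation, the lemma immediately gives that $NS_{n,3}$ is feasible for every $n > 3$, again with a polynomial-time acceptable allocation.

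The genuinely hard work is contained in the auxiliary lemmas that we are permitted to assume, above all the case analysis of \cref{lem:no-inter-feasible} and the cutting-point/interval-counting argument of \cref{lem:not-fully} that guarantees non-full-intersection. The theorem itself is an assembly of these pieces, and the only points demanding care are to use \emph{optimal} $NS_{3,3}$ partitions for each agent, so that $\min_j v_i(B_j^i)$ equals the share value rather than merely lower-bounding it, and to confirm that \cref{lem:not-fully} is applied to all three pairs at once, which is legitimate since it holds for every pair of $NS_{3,3}$ partitions in an ordered instance.
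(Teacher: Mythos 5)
Your proposal is correct and follows essentially the same route as the paper's proof: reduce to ordered instances via \cref{lem:BL}, take an optimal $NS_{3,3}$ partition for each agent (computable in polynomial time by \cref{lem:NS-poly-time}), combine \cref{lem:not-fully} with \cref{lem:no-inter-feasible} to obtain an acceptable allocation for $n=3$, and lift to all $n>3$ via \cref{lem:fixqlargen} with $q=3$. Your explicit remarks on using \emph{optimal} partitions and on \cref{lem:NS-poly-time} make precise two points the paper leaves implicit, but the argument is otherwise identical.
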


\begin{proof}
We prove the theorem for $n=3$, namely, for $NS_{3,3}$. By \cref{lem:fixqlargen} this will imply the theorem for all $n > 3$. {Moreover, we assume that the instance is ordered. This assumption can be made without loss of generality, by Lemma~\ref{lem:BL}.}

Let $P^i = (B_1^i, B_2^i, B_3^i)$ for $i\in \{1,2,3\}$ be an $NS_{3,3}$ partition of $\items$ for $v_i$. 
By \cref{lem:not-fully}, $P^i$ and $P^j$ are not fully intersecting for any $i,j\in \{1,2,3\}$.
By \cref{lem:no-inter-feasible} there is an allocation $A = (A_1, A_2, A_3)$ such that for every agent $i$ with an additive valuation function {$v_i$ it holds that} $v_i(A_i) \ge \min[v_i(B_1^i), v_i(B_2^i), v_i(B_3^i)] = NS_{3,3}(v_i)$. Thus, the share $NS_{3,3}$ is feasible. Moreover, {\cref{lem:no-inter-feasible} also shows that} given the partitions $P^i$ for $i \in \{1,2,3\}$ and the additive valuations {$(v_1,v_2,v_3)$,} such an allocation $A$ can be computed in polynomial time.
\end{proof}

\section{Discussion}\label{sec:discussion}

In this paper we have focused on fair allocation of indivisible \emph{goods} when agents have \emph{equal} entitlements to the goods, and have 
suggested notions for
share-based fairness.  

While we focus on \emph{equal} entitlements, the notions of a share, a share guarantee, and of a self-maximizing share, can naturally be extended to \emph{arbitrary} (non-equal) entitlements. 
In the definition of a share and of a share guarantee, one should simply replace the number of agents parameter $n$, by the entitlement of the agent.  
In \cref{app:ArbitraryEntitlements} we discuss some shares that were defined in prior work for arbitrary entitlements, and whether they are self maximizing or not. Note that our non-existence results for SM-dominant feasible shares clearly extend to this more general set of problems. 
{An interesting direction for future work is of extending our work to settings where agents have arbitrary entitlements.} 

One can also consider items that are  chores (bads) instead of goods. Our definitions naturally extend to chores as well. Our positive results (Section \ref{sec:additive})
are for agents with additive valuations over goods. The positive results are not based on picking orders, since it is impossible to obtain such results using picking order shares when items are goods. That is, for every $\rho > 0$, for sufficiently large $n$, no picking order share has value larger than $\rho$-MMS when valuations are additive over goods. This follows from results in~\cite{ABM2016}. However, for chores, there are picking orders that guarantee that every agent with an additive valuation gets a bundle of value no more than $\frac{5}{3}$ times her MMS. See~\cite{ALW20}. As our results that picking order shares are feasible, self maximizing and poly-time computable for additive valuations (see Section \ref{sec:picking}) do not rely on the items being goods, we get that for additive valuations over chores, there is a feasible share that is  $\frac{5}{3}$-dominating, self maximizing and poly-time computable.

Finally, we note that we leave open the problem of finding (for additive valuations, and also beyond this class), the largest value of $\rho$ for which there exists a feasible share that is  $\rho$-dominating, self maximizing and poly-time computable.

\bibliographystyle{alpha}

\bibliography{bib}

\appendix

\section{Notes on Auxiliary Goals}\label{app:auxiliary}

It is important to clarify the following point. Though it is interesting to know what the highest possible value of $\rho$ is in Question~\ref{Q2}, this does not necessarily imply that one would want to actually use a share that has the highest possible value of $\rho$.  It is true that shares with high value of $\rho$ improve the guarantees given to agents. However, it is also true that the improved guarantees limit the flexibility given to the allocator -- he has fewer {acceptable allocations}  to choose from. This might hurt other measures of performance that the allocator cares about, and possibly also the agents care about. We refer to these other measures of performance as {\em auxiliary goals}, as they go beyond the goal of giving every agent at least her share. Here are some examples of auxiliary goals. 

\begin{itemize}

    \item Auxiliary goals such as maximizing welfare (the sum of values received by all agents) or maximizing Nash Social Welfare (the product of values received by all agents) attempt to choose a ``best possible'' allocation (according to the auxiliary goal), subject to the constraint that every agent gets at least her share. 
  
  \item {\em Ex-ante guarantees}. Allocation processes are often randomized. For example, when allocating a single indivisible item to one of two agents, having a lottery to decide who receives the item is common practice. 
  The lottery offers the agents an ex-ante guarantee that is valuable (for example, that the agent has probability $\frac{1}{2}$ of winning the item), whereas no ex-post guarantee can be given to both agents. 
  Hence an allocator may have the auxiliary goal of selecting a distribution over allocations that offers some desirable ex-ante guarantee. The ``contracts'' with the agents then require that this distribution be supported only over allocations that give each agent at least her share. 
  
  \item Mechanisms with money. Though the interest in this paper is in settings without money, the notion of a share may be relevant also to mechanisms with money. Suppose for example that the items are health services (such as doctor appointments, or medical drugs) and the agents are members of a health maintenance organization (HMO) that provides these services. Every member, as a function of her needs (her valuation function), is entitled to at least some minimum service (her share). Suppose now that members can ask for additional services beyond their share, as long as they are willing to pay for these additional services, and as long as the shares of other members are respected. In this case, the auxiliary goal of the HMO when running the allocation mechanism might be to maximize revenue (choosing the allocation of resources that generates maximum revenue from the members), subject to the constraint that every member gets at least her share, and can be charged only for services that are beyond her share. 

\end{itemize}

As the examples above show, there may be some tension between the desire to have shares of high value and the desire to accomplish additional auxiliary goals. 
High share values protect each individual agent against the possibility that the auxiliary goals will drive her value to be too low. High share values also reduce the motivation of agents to try to manipulate the allocation mechanism, as there is less to gain by manipulation, and potentially more to lose. However, {share values that are too high} 
might limit the extent to which the auxiliary goals can be achieved, and hence might also not be desirable. If auxiliary goals are present, then they should be taken into account when deciding on a choice of share function, so as to strike a good balance between the share guarantees and the auxiliary goals. In the current paper we are only concerned with determining what values of $\rho$ are achievable by $\rho$-dominating shares, and leave aspects concerning tradeoffs with other auxiliary goals to future work. 

{We end this section by noting that the standard economic efficiency auxiliary goal of producing a Pareto optimal allocation is never in conflict with the selection of a feasible share.} For every feasible share, there always  an acceptable allocation that is also Pareto optimal. This is because replacing an acceptable allocation by an allocation that Pareto dominates it preserves acceptability.

\section{Statement of Intent}\label{sec:declaration}

Our focus in this paper is on the \emph{guarantee} that a feasible share provides to each agent. This should be distinguished from what we shall refer to as a \emph{non-binding ``statement of intent"}. Such a statement might be, for example, an intent to implement an allocation that gives every agent her maximin share. Often this statement can be met, but sometimes not (either because no maximin allocation exists, or because of computational difficulties of finding such an allocation even when it does exist, or because of computational difficulties of computing the maximin share values of the agents and verifying whether a given allocation satisfies them).
The statement of intent is non-binding and is perceived as such by the agents. 

Our paper concerns with a \emph{guarantee} (not just a non-binding statement of intent) that the implemented allocation will give the agent at least her share, for some share that is different than the maximin share and is feasible. Let us call our share the FSMP-share, where FSMP stands for feasible. self maximizing, and polynomial time computable. The actual value of the FSMP-share can be computed in polynomial time (so the agent can verify that the guarantee is met), and the share is self maximizing (reporting the true valuation function maximizes the value of the guarantee). Moreover, the FSMP-share is at least a $\rho$-fraction of the MMS. (Our current value of $\rho$ for every $n$ is $2/3$, but this ratio might be improved in the future.)

Using the above terminology, one might wonder what would an agent prefer, a statement of intent that is ``simple-to-understand" (e.g., concerning the maximin share), or a guaranteed FSMP-share? Our answer is that the agent would prefer not to be forced to choose between the two, but rather to get both. This is exactly what we offer the agent. The FSMP-share does not dictate which allocation algorithm is to be used. (On the contrary - the property of a share being self maximizing is more closely connected to incentive compatibility if the agent does not know which allocation algorithm will be run. See discussion on incentive compatibility in \cref{sec:contracts}.) The polynomial time allocation algorithm that we offer that guarantees every agent her FSMP-share only serves as a ``safety" allocation algorithm. However, in practice, the social planner can (and is advised to) additionally run any number of allocation algorithms of his choice (possibly, including allocation algorithms that search for a maximin allocation). For every allocation produced in any of the runs, the social planner can check in polynomial time whether every agent got at least her FSMP-share. Among the allocations that survive this test (the safety algorithm guarantees that at least one allocation survives), the social planner is allowed to choose any surviving allocation. If the social planner made a statement of intent with respect to MMS allocations and determines that some surviving allocations is an MMS allocation, he may choose such an allocation (fulfilling the statement of intent).
In any case, whichever allocation is chosen, it will give every agent at least her FSMP-share (satisfying the guarantee). 

We believe that in practice agents will appreciate getting an FSMP-share guarantee. (The actual FSMP-share of choice may be from the $NS_{n,q}$ family of shares introduced in this paper, or some other FSMP-share that is yet to be introduced.) This guarantee is easy to understand, in a sense that there is a polynomial time algorithm that translates it to a guaranteed value. In addition, the agent may also appreciate getting a ``statement of intent" of some sort (e.g., that an effort will be made to find a maximin fair allocation and such an allocation will often be implemented). Both FSMP-share guarantees and statements of intent have value in practice, and they can be taken into account simultaneously.

\section{Self-Maximizing Shares for Arbitrary Entitlements}
\label{app:ArbitraryEntitlements}

Two share notions (among others) defined for arbitrary entitlements are the  
Truncated-Proportional Share (TPS)~\cite{BEF2021c} and 
the AnyPrice Share (APS)~\cite{BEF2021b}. We next observe that while the APS is self maximizng, the TPS is not. 

\begin{proposition}\label{prop:shares-self-unequal}
    The APS is self-maximizing. 
    In contrast, the TPS is not self-maximizing, not even for additive valuation functions.
\end{proposition}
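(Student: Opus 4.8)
The plan is to treat the two halves of \cref{prop:shares-self-unequal} separately, using a direct minimax argument for the APS and an explicit counterexample for the TPS. For the APS I would work with its adversarial-price definition: for an agent of entitlement $b$ (with $b=\frac1n$ under equal entitlements), $APS(v)=\min_{p}\max_{\{S \,:\, p(S)\le b\}} v(S)$, where $p$ ranges over price vectors with $p\ge 0$ and $\sum_j p_j=1$, and $p(S)=\sum_{j\in S}p_j$. The crucial first step is to fix a price vector $p^*$ that \emph{minimizes} this expression \emph{for the true valuation $v$}. By optimality of $p^*$, every $p^*$-affordable bundle $S$ (i.e.\ with $p^*(S)\le b$) satisfies $v(S)\le APS(v)$; this will be the source of the desired upper bound on true values.

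Next I would take an arbitrary report $v'$ and show that its acceptable set contains a bundle of low true value. Since $APS(v')$ is a minimum over all price vectors, plugging in the particular vector $p^*$ gives $APS(v')\le \max_{\{S\,:\,p^*(S)\le b\}} v'(S)$. Letting $\bar S$ denote a maximizer, $\bar S$ is simultaneously $p^*$-affordable and satisfies $v'(\bar S)\ge APS(v')$, so $\bar S$ is acceptable under the report $v'$. Combining the two observations, $\bar S$ is acceptable for $v'$ yet has true value $v(\bar S)\le APS(v)$. Hence the implied guarantee satisfies $\hat{s}_v(v')=\min\{v(S):v'(S)\ge APS(v')\}\le v(\bar S)\le APS(v)$. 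Since truthfully every acceptable bundle has value at least $APS(v)$, we have $\hat{s}_v(v)\ge APS(v)\ge \hat{s}_v(v')$, which is exactly self-maximization. I would emphasize that this argument invokes only the minimax structure, so it applies to whatever valuation class the APS is defined on, not just additive valuations.

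For the TPS I would exhibit a single additive counterexample; since equal entitlements are a special case of arbitrary entitlements, this suffices. Take $n=3$ and four items whose true values are $(5,4,4,2)$, so the proportional share is $5$ and, since the top item exactly equals the proportional share, the truncated proportional share is $TPS(v)=5$ as well (indeed $\sum_j\min(v(e_j),5)=15=3\cdot 5$). Truthfully, the singleton consisting of the value-$5$ item is acceptable, so $\hat{s}_v(v)=5$. Now let the agent report $v'=(4,4,4,3)$ on the same items (item of true value $5$ reported as $4$, item of true value $2$ reported as $3$); then $TPS(v')=5$ as well, but no singleton is acceptable under $v'$, so every acceptable bundle has at least two items and hence true value at least $2+4=6$. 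Thus $\hat{s}_v(v')=6>5=\hat{s}_v(v)$, violating self-maximization. The underlying reason, which I would state explicitly, is that like the proportional share the TPS need not equal the value of any bundle, so its share guarantee strictly exceeds the share value and creates room for a profitable misreport.

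The main obstacle is the APS direction, and within it the key conceptual point is that the adversarial price vector must be chosen with respect to the \emph{true} valuation $v$ rather than the report $v'$: it is precisely this choice that forces every affordable bundle to have small true value while leaving enough affordable bundles to keep the report's acceptable set nonempty. The TPS direction is routine once one observes that its share guarantee strictly exceeds its share value, mirroring the proportional-share example in \cref{prop:shares-self}; the only care needed is to assign the reported values item-by-item so that the truthful report admits a low-value acceptable singleton while the misreport does not.
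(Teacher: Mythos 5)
Your proof is correct and follows essentially the same route as the paper: for the APS you make explicit the paper's one-line minimax argument (fix the price vector $p^*$ certifying $APS(v)$, note the $v'$-optimal $p^*$-affordable bundle is acceptable for $v'$ yet has true value at most $APS(v)$), and for the TPS you use the same strategy of reusing a proportional-share counterexample on an instance where $TPS=PS$ because no item value exceeds the proportional share. The only cosmetic difference is that the paper points to the $(n+2)$-item instance from \cref{prop:shares-self}, while you use the $(5,4,4,2)$ illustration from \cref{sec:notion-self-max}; both work for exactly the reason you state.
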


\begin{proof}
The APS is self-maximizing: given any price vector, the affordable bundle of highest value according to $v'$ is among the bundles considered when the APS chooses the bundle with highest value according to $v$, and hence no better than that bundle.

The example presented in the proof of Proposition \ref{prop:shares-self} which shows that the proportional share is not self-maximizing also proves that the TPS is not self-maximizing, as the TPS and PS are the same for that example. 
\end{proof}

\section{No More than the Share Guarantee}\label{sec:only-guaranteed}

{In this section we show that for most valuation classes of interest (e.g., unit demand, additive, submodular, subadditive), it is indeed the case that an agent might only get her guaranteed share (and no more) when the allocation and valuations of others are worse case. }

For the purpose of Proposition~\ref{pro:SM} below, we define the notion of a hereditary class of valuation functions. 

\begin{definition}
We say that a class $\cal{C}$ of valuation functions is {\em hereditary} if for every two sets of items, $\items' \subset \items$, and for every valuation function $v' \in \cal{C}$ for $\items'$, the valuation function $v$ for $\items$, defined as $v(S) = v'(S \cap \items')$ {for every $S\subseteq\items$}, is also in $\cal{C}$.
\end{definition}

Most valuation classes of interest (e.g., unit demand, additive, submodular, subadditive) are hereditary. An example of a valuation function that is not hereditary is an additive valuation function in which the value of every item is required to be a strictly positive integer.

\begin{proposition}
\label{pro:SM}
    Let $\cal{C}$ be a hereditary class of valuation functions, and let $s$ be a feasible share for $\cal{C}$, with associated share guarantee $\hat{s}$. Then for every set $\items$ of items and every number $n \ge 2$ 
    of agents, for every valuation $v \in \cal{C}$ defined over $\items$, there  is a feasible allocation mechanism $AM_v$ for $\cal{C}$ and there are valuations in $\cal{C}$ for the remaining $n-1$ agents, such that $AM_v$ allocates to the agent holding valuation function $v$ a bundle of items of value exactly $\ShareGfullv$.
\end{proposition}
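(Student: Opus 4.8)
The plan is to exhibit a single allocation that $AM_v$ will output on the designated profile, verify that it is acceptable (so it can be completed to a feasible mechanism), and check that it hands the agent holding $v$ a bundle of value exactly $\ShareGfullv$. First I would let $B^*$ be a least‑valuable acceptable bundle for $v$, so that $v(B^*)=\ShareGfullv$ and $v(B^*)\ge \Sharefullv$. Such a $B^*$ exists: by realizability $v(\items)=\max_S v(S)\ge \Sharefullv$, so $\items$ is acceptable and the minimum defining $\ShareGfullv$ is attained over a nonempty finite family. Set $\items'=\items\setminus B^*$. The goal of the construction is to push all of $\items'$ onto the other $n-1$ agents, leaving exactly $B^*$ for the agent holding $v$.

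The other agents' valuations should be ``blind'' to $B^*$. I would fix any $u\in\CC$ over the item set $\items'$ and let $w$ be its hereditary extension to $\items$, namely $w(S)=u(S\cap\items')$; by the hereditary property $w\in\CC$. (When $\items'=\emptyset$ this is the trivial valuation over the empty set, whose extension is the zero valuation, which is likewise in $\CC$ and handles the degenerate case $B^*=\items$.) Assign $w$ to each of the remaining $n-1$ agents. Since $w$ gives no value to items of $B^*$, we have $w(\items')=w(\items)$, so any bundle contained in $\items'$ of $w$‑value at least $s(w,n)$ is acceptable for $w$; in particular $\items'$ itself is acceptable by realizability.

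Next I would split $\items'$ acceptably among the $n-1$ agents. Applying feasibility of $s$ to the $n$‑agent profile in which all $n$ agents hold $w$ yields a partition of $\items$ into $n$ bundles, each of $w$‑value at least $s(w,n)$. Intersecting each of these with $\items'$ does not change its $w$‑value (as $w$ ignores $B^*$), giving $n$ possibly empty parts of $\items'$, each acceptable for $w$. Merging any two of them only increases $w$‑value, since items are goods, so I obtain $n-1$ bundles $C_2,\ldots,C_n$ that partition $\items'$ with each acceptable for $w$. Finally, define $AM_v$ to output, on the designated profile (agent $1$ holding $v$ and the rest holding $w$), the allocation giving $B^*$ to agent $1$ and $C_j$ to agent $j$, and on every other profile to run any fixed feasible mechanism for $s$ (one exists because $s$ is feasible). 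The designated allocation is acceptable, since $v(B^*)\ge \Sharefullv$ and each $C_j$ is acceptable for $w$; hence $AM_v$ is feasible, and on this profile agent $1$ receives value exactly $v(B^*)=\ShareGfullv$.

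I expect the main obstacle to be the construction of the other agents' valuations: they must simultaneously lie in $\CC$ and be indifferent to $B^*$, and the leftover items must then be distributable acceptably. This is precisely where the three hypotheses come together --- the hereditary property keeps the extension $w$ inside $\CC$, feasibility of $s$ supplies an acceptable split of $\items'$, and monotonicity of goods lets me merge the resulting parts down to the $n-1$ agents. A secondary point worth stating carefully is that a mechanism may hard‑code its output on a single profile, which is legitimate here only because that prescribed output is itself acceptable.
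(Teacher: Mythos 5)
Your proposal is correct and takes essentially the same approach as the paper's proof: both choose a least-valuable acceptable bundle $B^*$ with $v(B^*)=\ShareGfullv$ and exploit the hereditary property to give the remaining $n-1$ agents valuations extended from $\items\setminus B^*$, hence blind to $B^*$, so that handing $B^*$ to the agent holding $v$ costs the others nothing. The only difference is in the completion step --- the paper modifies an arbitrary feasible mechanism's output on the designated instance itself (replacing it by $(B^*, A_2\setminus B^*,\ldots,A_n\setminus B^*)$), whereas you invoke feasibility on an auxiliary all-$w$ instance and merge two parts; your variant is in fact slightly cleaner, since it yields a genuine partition of all of $\items$ rather than leaving the items of $A_1\setminus B^*$ implicitly unassigned.
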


\begin{proof}
Let $B\subseteq \items$ be such that $v(B) = \ShareGfullv$. 
{Without loss of generality we need to prove the claim for agent~1.} 
For agent~1, let $v_1 = v$, and for every other agent $i > 1$, let $v'_i\in \cal{C}$ be an arbitrary valuation function defined over $\items' = \items \setminus B$, and let $v_i$ be such that $v_i(S) = v'_i(S \cap \items')$ {(note that as $\cal{C}$ is a hereditary class of valuation functions there exist such $v_i$ in $\cal{C}$)}. 
Let $AM$ be an arbitrary feasible allocation mechanism for $\cal{C}$, and let $(A_1,\ldots, A_n)$ be its allocation on this given instance. Modify $AM$ to $AM_v$, as follows. Only in this given instance, replace the allocation to $(B, A_2 \setminus B, \ldots, A_n \setminus B)$.
{As for any $i>1$ it holds that $v_i(A_i \setminus B) = v_i(A_i)$,}
the allocation still gives every agent at least her share value, and hence is feasible. Thus $AM_v$ is a feasible allocation mechanism, and on this instance, it gives the agent holding $v$ a bundle of value exactly $\ShareGfullv$.
\end{proof}

\begin{remark}
The condition that the class of valuation functions is hereditary is not strictly necessary for Proposition~\ref{pro:SM}. Weaker conditions also suffice, though may require a different proof. On the other hand, if conditions stronger than just being hereditary are imposed, the proposition can be strengthened by reversing the order of quantifiers -- choosing one universal allocation mechanism $AM$ that applies for all $v$. For example, if $\cal{C}$ is the class of additive valuation functions, than $AM$ can be chosen as an allocation mechanism that maximizes welfare, conditioned on giving every agent at least her fair share. Then, for every additive valuation function $v$, one can choose additive valuation functions for the remaining agents, such that the allocation mechanism will necessarily give the agent holding $v$ a bundle of value exactly equal to the corresponding share guarantee. Further details are omitted. 
\end{remark}

\section{Missing Proof from Section \ref{sec:maximal}}\label{app:maximal}

\subsection{Missing Proof from Section \ref{sec:notion-self-max}}\label{app:notion-self-max}

\subsubsection{Proof of \cref{prop:shares-self}}\label{app:prop-shares-self}

We restate and prove  \cref{prop:shares-self}.

\propSelfMaxShares*

\begin{proof}
Fix any class $\CC$ of valuations. 
The MMS is self-maximizing {for $\CC$: for any valuation $v\in \CC$,} any MMS partition that is computed with respect to $v'\in\CC$ is also a legitimate candidate for an MMS partition for $v$, and hence no better than the actual best MMS partition with respect to $v$. 

{The proportional share is self maximizing for additive valuations if $n=2$: for additive valuation function $v$, the proportional share {guarantee} is $\min_{\{B \; | \; v(B) \ge \frac{1}{2}v(\items)\}}[v(B)]$. Let $B^*$ be a bundle achieving the minimum in this expression. Then for every additive valuation function $v'$, either $B^*$ or $\items \setminus B^*$ gives the proportional share {with respect to $v'$} (because $n=2$), {so at least one of these sets is considered acceptable when reporting $v'$}. 
As $v(\items \setminus B^*) \le v(B^*)$, the {implied} guarantee cannot improve by reporting $v'$ instead of $v$.}
{Thus, for any $v'$: $\ShareGvv = \ShareGfullv = v(B^*)=\max \{v(B^*), v(\items \setminus B^*)\}\geq 
\ShareG$ }

The proportional share (PS) is not self-maximizing for {additive valuations and} $n \ge 3$: let $n \ge 3$ and $\items = \{e_1, \ldots, e_{n+2}\}$, with $v(e_j) = \frac{n-1}{n}$ for $j \le n$, and $v(e_{n+1}) = v(e_{n+2}) = \frac{1}{2}$. The proportional share (PS) and the PS guarantee (the bundle $\{e_{n+1}, e_{n+2}\}$) are both~1. However, reporting $v'(e_j) = \frac{2n-1}{2n}$ for $j \le n$, and $v'(e_{n+1}) = v'(e_{n+2}) = \frac{1}{4}$, 
still have that the proportional share {with respect to $v'$} is $1$. However, the bundle $\{e_{n+1}, e_{n+2}\}$ no longer gives the PS-guarantee with respect to 
$v'$ (as its value with respect to $v'$ is less than $1$), and any acceptable bundle with respect to $v'$ contains at least one item $e_j$ with $j \le n$, and has at least two items. 
Consequently, {by reporting $v'$}
the PS-guarantee with respect to $v$  
improves to $\frac{n-1}{n} + \frac{1}{2} > 1$, where the inequality holds for $n \ge 3$.
Formally, for $v$ and $v'$ it holds that  $\ShareGvv = \ShareGfullv = 1<\frac{n-1}{n} + \frac{1}{2}=
\ShareG$, {and thus the proportional share for $n\geq 3$ is not self maximizing}. 

$\rho$-MMS is not self-maximizing for {additive valuations and} any $\rho$ such that $0 < \rho < 1$. We consider different values of $\rho$ {and for each present an instance with a beneficial manipulation}:
\begin{itemize}
    \item $\frac{1}{2} < \rho < 1$: There are $n$ agents and $m = n+1$ items, $v_i(e_j) = 1$ for $1 \le j \le n-1$,  $v_i(e_n) = \rho$, and $v_i(e_{n+1}) = 1 - \rho$. The MMS is~1, and the $\rho$-MMS guarantee may give the agent only item $e_n$, of value $\rho < 1$. 
    {By reporting value of $\frac{1}{2}$ for each of the items $e_n$ and $e_{n+1}$, }
    the MMS remains~1, but neither $e_n$ alone nor $e_{n+1}$ alone have reported value at least~$\rho$. Every bundle that has reported value at least $\rho$ has true value at least~1, which is larger than $\rho$. Hence, reporting $v_i$ does not maximize the {implied} guarantee of agent $i$.
    \item $0 < \rho < \frac{1}{2}$: The above example extends to every $0 < \rho < \frac{1}{2}$, by reporting values of $\frac{\rho}{2}$ and $1 - \frac{\rho}{2}$ instead of $\rho$ and $1 - \rho$, for items $e_n$ and $e_{n+1}$, respectively. ({By misreporting,} the agent improves the {implied} guarantee from $\rho < \frac{1}{2}$ to $1 - \rho > \frac{1}{2}$.) 
    \item For the case $\rho = \frac{1}{2}$ and $n \ge 2$, consider an example with $m = n+2$ items, $v_i(e_j) = 1$ for $1 \le j \le n-2$,  $v_i(e_{n-1}) = v_i(e_{n}) = \frac{3}{4}$, and $v_i(e_{n+1}) = v_i(e_{n+2}) = \frac{1}{4}$. By reporting values $\frac{1}{8}$ instead of $\frac{1}{4}$ and $\frac{7}{8}$ instead of $\frac{3}{4}$, the {implied} guarantee improves from $\frac{1}{2}$ to $\frac{3}{4}$.
\end{itemize} 
\end{proof}

\subsubsection{Proof of \cref{prop:self-max-properties}}\label{app:prop-self-max-properties}
We restate and prove  \cref{prop:self-max-properties}.
\propSelfMaxProperties*
\begin{proof}
Suppose that $s$ is a share that is not scale invariant {for $\CC$}. Then there is a valuation function $v\in \CC$ and a multiplicative factor $c>0$ for which the share guarantee satisfies
$\ShareGcvcv=\ShareGfullcv\neq c\cdot \ShareGfullv = c\cdot \ShareGvv$. 
Assume that $\ShareGcvcv> c\cdot \ShareGvv$.
Then $s$ is not self maximizing, because for an agent with valuation function $v$, by reporting $c \cdot v$ (which is in $\CC$ by assumption) the agent guarantees to herself a bundle of higher value (according to her original $v$) than by reporting $v$. Formally, 
$c\cdot \ShareGvcv  = \ShareGcvcv> c\cdot \ShareGvv$.

Suppose that $s$ is a share that is not monotone {for $\CC$}. Then there are two valuation functions, $v,v'\in \CC$, such that $v \ge v'$ but 
$\ShareGvv= \ShareGfullv< \ShareGfullvp =\ShareGvpvp$. 
Then $s$ is not self maximizing, because for an agent with valuation function $v\in \CC$, by reporting $v'\in \CC$ she guarantees to herself a bundle of higher value (according to $v'$) than the value guaranteed (according to $v$) by reporting $v$. As $v > v'$, this strict inequality also holds when considering the value of bundles according to $v$ when the report is $v'$. 
{Formally: if $v,v'\in \CC$ such that $v \ge v'$ but $\ShareGvv < \ShareGvpvp$, then we get a contradiction to $s$ being self maximizing. An agent with value $v\in \CC$ can increase her share guarantee by reporting $v'\in \CC$ as $\ShareGvv < \ShareGvpvp\leq \ShareG$, where the second inequality follow from monotonicity ($v(S)\geq v'(S)$ for any set $S$).}

Suppose that $s$ is a share that is not $1$-Lipschitz {for $\CC$}. Then there is a value $\delta > 0$ and two valuation functions, $v,v'\in \CC$, such that for every bundle $S\in \items$ it holds that $|v(S) - v'(S)| \le \delta$, but $\ShareGvpvp = \ShareGfullvp> \ShareGfullv +\delta = \ShareGvv+\delta$.  
Then $s$ is not self maximizing, because for an agent with valuation function $v\in \CC$, by reporting $v'\in \CC$ she guarantees to herself a bundle of value  (according to $v'$) higher by more than $\delta$ than the value guaranteed (according to $v$) by reporting $v$. As for every bundle the values of $v'$ and $v$ differ by at most $\delta$, then when values are measured by respect to $v$, by reporting $v'$ the agent guarantees to herself a bundle of value  strictly higher than the value guaranteed by reporting $v$.
Formally, if $|v(S) - v'(S)| \le \delta$ but $\ShareGvpvp > \ShareGvv+\delta$ then $\ShareG + \delta\geq \ShareGvpvp > \ShareGvv+\delta$, and thus $\ShareG > \ShareGvv$, in contradiction to the share being self maximizing {for $\CC$}. 
\end{proof}

\subsection{Missing Proof from Section \ref{sec:dominant-feasible}}\label{app:dominant-feasible}

\subsubsection{Proof of \cref{prop:feasibleMMS-maximal}}\label{app:prop-feasibleMMS-maximal}
We restate and prove \cref{prop:feasibleMMS-maximal}.
\propMMSSMdominating*
\begin{proof}
The MMS dominates every feasible share $s$ for $\CC$, as if $s$ is strictly larger than the MMS for some valuation $v$ in $\CC$, it cannot be feasible because the MMS is defined to be 
the largest value that can be concurrently given to each of the $n$ agents when all have that valuation $v$. Thus, the MMS is a dominant share for $\CC$. If the MMS is also feasible for class $\CC$, then it is a dominant feasible share (and thus also an SM-dominant feasible share) for $\CC$. 

We next prove that {when the MMS is feasible, it}  
is the unique dominant feasible share, and also the unique SM-dominant feasible share. 
Consider any  dominant feasible share $s$ for class $\CC$. We have seen that the MMS dominates $s$ (as $s$ is feasible), and as $s$ dominates the MMS (as $s$ is dominant and the MMS is feasible by assumption), $s$ must equal the MMS, and thus the MMS is the unique dominant feasible share for $\CC$.   

As the MMS is self maximizing {for any $\CC$} (\cref{prop:shares-self}), if the MMS is feasible, then every SM-dominant feasible share for $\CC$ must dominate the MMS, and thus must equal the MMS. Thus, the MMS is the unique SM-dominant feasible share for $\CC$.  
\end{proof}

\section{Missing Proof from Section \ref{sec:no-dominant}}\label{app:no-dominant}

\subsection{Missing Proof from Section \ref{sec:picking}}\label{app:picking}
\subsubsection{Proof of \cref{prop:picking}}\label{app:prop-picking}

\cref{prop:picking} directly follows from the following three lemmas.

{When valuations are not additive, picking order shares need not be an ordinal maximin share, but nevertheless, we show that they are self maximizing. }

\begin{lemma}\label{prop:picking-well-behaved} 
Fix any class $\CC$ of valuations. 
For any picking order $\omega$, the $\omega$-Picking-Order share is self maximizing for $\CC$.
\end{lemma}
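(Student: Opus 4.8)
The plan is to exploit the fact that, for picking-order shares, the share guarantee coincides with the share itself (as already noted in the text): the share $s(v,n)=\min_k x_k(v)$ is the true value of an actual bundle, namely the bundle that the minimizing identity $k^*$ walks away with under optimal play. Hence $\hat{s}_v(v)=\ShareGfullv=s(v,n)$, and proving self-maximization reduces to showing that for every report $v'$ we have $\ShareG=\hat{s}_v(v')\le s(v,n)$. Unfolding the definition $\hat{s}_v(v')=\min\{v(S)\mid v'(S)\ge s(v',n)\}$, it is enough to exhibit \emph{one} bundle $S$ that is acceptable under the reported valuation (that is, $v'(S)\ge s(v',n)$) yet has small true value $v(S)\le s(v,n)$. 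The least valuable acceptable bundle is then no larger than this $S$.

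The idea for producing such an $S$ is to run a single play of the picking game for a well-chosen identity, in which the two players optimize with respect to \emph{two different} valuations. First I would fix an identity $k^*$ attaining $s(v,n)=\min_k x_k(v)=x_{k^*}(v)$. Since the picking game for identity $k^*$ is a finite perfect-information game, its value is realized in the minimax sense, and I would extract two strategies from it: viewing the objective as the \emph{true} value $v$, the adversary controlling the positions $\omega_j\neq k^*$ has a strategy $\sigma$ that forces the agent's final bundle to have $v$-value at most $x_{k^*}(v)$ against \emph{every} agent strategy; viewing the objective as the \emph{reported} value $v'$, the agent controlling the positions $\omega_j=k^*$ has a strategy $\tau$ guaranteeing her bundle $v'$-value at least $x_{k^*}(v')$ against \emph{every} adversary strategy.

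I would then simply play $\tau$ against $\sigma$. Because the game is deterministic once both strategies are fixed, this produces one concrete bundle $S$ for agent $k^*$. The cap provided by $\sigma$ holds in particular when the opponent plays $\tau$, giving $v(S)\le x_{k^*}(v)=s(v,n)$; and the lower bound provided by $\tau$ holds in particular against $\sigma$, giving $v'(S)\ge x_{k^*}(v')\ge \min_k x_k(v')=s(v',n)$, so $S$ is acceptable under $v'$. This yields exactly $\hat{s}_v(v')\le v(S)\le s(v,n)=\hat{s}_v(v)$, which is the self-maximization inequality. Note this argument never uses additivity, so it covers arbitrary $\CC$.

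The step I expect to be the main obstacle — really the only delicate point — is the legitimacy of combining guarantees drawn from two \emph{separate} single-valuation games (the $v$-game supplying $\sigma$, the $v'$-game supplying $\tau$) into a single ``hybrid'' play in which the two players are not optimizing a common objective. I would stress that no minimax theorem for the hybrid game is needed: each of the two bounds is a universally quantified ``against every opponent strategy'' statement, so each survives when the opponent is instantiated to the strategy coming from the other game. I would also point out that using the \emph{same} identity $k^*$ for both bounds is harmless, since $x_{k^*}(v')\ge\min_k x_k(v')=s(v',n)$ holds for every identity, so $k^*$ need only be chosen to be optimal for the true valuation $v$.
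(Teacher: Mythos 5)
Your proof is correct, and it rests on the same engine as the paper's: the picking game for an identity $k^*$ chosen optimally for the \emph{true} valuation, together with the observation that $x_{k^*}(v') \ge \min_k x_k(v') = s(v',n)$, so that every bundle the agent can end up with while playing a $v'$-optimal strategy is acceptable under the report. Where you diverge is in how the true-value cap is obtained. You instantiate an explicit adversary strategy $\sigma$ that holds \emph{every} agent strategy to $v$-value at most $x_{k^*}(v)$, and play the hybrid match $\tau$ versus $\sigma$ to produce a single witness bundle $S$; this needs determinacy of the finite perfect-information zero-sum game (so that the adversary's minimax strategy achieves the maximin value $x_{k^*}(v)$), which you correctly supply via backward induction. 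The paper never needs the adversary's strategy at all: it takes the agent's $v'$-optimal strategy $\sigma'$, notes that its set of \emph{feasible} bundles (those reachable under some adversary play) consists only of acceptable bundles, and bounds $\min_{S \in \mathrm{feasible}(\sigma')} v(S)$ by $x_{k^*}(v)$ simply because $\sigma'$ is one of the strategies over which the maximum defining $x_{k^*}(v)$ is taken --- determinacy is never invoked. So your route buys a concrete single-bundle witness at the price of one extra (harmless, correctly justified) game-theoretic fact, while the paper's is marginally more elementary; and the point you flagged as delicate --- combining guarantees from the $v$-game and the $v'$-game --- is indeed sound for exactly the reason you give, since each guarantee is universally quantified over the opponent's strategies. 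Incidentally, your explicit handling of the identity (fixing $k^*$ optimal for $v$ and using $x_{k^*}(v') \ge s(v',n)$ for acceptability) is more careful than the paper's terse write-up, which fixes an identity $i$ at the outset and leaves this quantification implicit.
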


\begin{proof}
{Given a set $\items$ of items, a picking order $\omega$ and an identity $i$ for the agent {that has valuation $v_i$}, consider the zero-sum game with two players: the agent and the adversary. The agent gets to pick items in her turns according to $\omega$, and the adversary picks items in the remaining turns. The {outcome} of the game for the agent is the value (according to $v_i$) of the bundle that she picked.

As the game is finite, there are only finitely many strategies for the players. Every strategy $\sigma$ for the agent defines a collection of feasible bundles, namely, those bundles for which there is some strategy of the adversary that leads to the agent getting the bundle. The value of strategy $\sigma$ for the agent is the smallest value of any of the respective feasible bundles. The value  of {this zero-sum}  
game is the highest value among the values of the strategies of the agent, and is equal to the value of the share.

Suppose that the agent reports a valuation function $v'_i$ instead of the true $v_i$. Then the {implied guarantee} 
with respect to $v'_i$ is determined by some specific strategy $\sigma'$ in the above game. All {acceptable}  
bundles with respect to $\sigma'$ are bundles that an agent with valuation function $v'_i$ values at least as much as her share {(with respect to $v'_i$)}, and hence might be allocated to agent $i$ by an allocation algorithm that guarantees the share value. Hence the value that this guarantees to agent $i$ is not larger than the minimum value according to $v_i$ of a bundle that is {acceptable}  
for $\sigma'$. As $\sigma'$ is among the strategies considered when reporting $v_i$, and the share value for $v_i$ is computed according to the best strategy, it follows that the {implied guarantee} 
when reporting the true $v_i$ is at least as large as when reporting $v'_i$.}
\end{proof}

\begin{lemma}\label{prop:picking-feasible} 
{Fix any class $\CC$ of valuations. 
For any picking order $\omega$, the $\omega$-Picking-Order share is feasible for $\CC$}.  
Moreover, if valuations are additive then an acceptable allocation can be computed in polynomial time.  
\end{lemma}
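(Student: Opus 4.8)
The plan is to prove feasibility by running the picking sequence $\omega$ on the true profile $(v_1,\ldots,v_n)$, with each agent playing a strategy that is optimal for \emph{her own} position in the order, and then to observe that for additive valuations this optimal strategy is the greedy one and is poly-time computable. First I would recall the game-theoretic object behind the share. Fix an agent with valuation $v_i$; for each identity $k\in[n]$ the quantity $x_k$ (I write $x_k^{(i)}$ to stress the dependence on $v_i$) is the value of the finite, perfect-information, zero-sum game in which the agent holds identity $k$, picks at the turns $j$ with $\omega_j=k$, and a single adversary controls all other turns so as to minimize the agent's value. Being finite, this game has an optimal (maximin) strategy $\sigma_k$ that secures the agent a bundle of value at least $x_k^{(i)}$ against \emph{every} adversary play, and the $\omega$-Picking-Order share of $v_i$ is $\min_{k\in[n]} x_k^{(i)}$.

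Next I would construct an acceptable allocation directly. Run the single sequential process defined by $\omega$, letting each agent $i$ pick at every turn $j$ with $\omega_j=i$ according to her maximin strategy $\sigma_i$ for identity $i$. Each agent observes only which items remain, so $\sigma_i$ is well defined along the actual play, and the strategies jointly determine one run and hence one allocation $A=(A_1,\ldots,A_n)$ of $\items$ to $\agents$. From agent $i$'s viewpoint, the combined behaviour of the other $n-1$ agents is merely one particular behaviour of the single adversary in her game; since $\sigma_i$ guarantees value at least $x_i^{(i)}$ against all adversary behaviours, we get $v_i(A_i)\ge x_i^{(i)}\ge \min_k x_k^{(i)}$, which is exactly her share. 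Because picking-order shares are defined as bundle values, the share guarantee equals the share, so $A$ is acceptable and the share is feasible for $\CC$.

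For the additive, polynomial-time claim I would specialize the optimal strategy. When $v_i$ is additive, the optimal play in the above game is greedy (as already noted when defining the share, both the agent and the adversary pick items in non-increasing order of value), so $\sigma_i$ may be taken to be ``take a remaining item of highest $v_i$-value,'' which needs only a sort of the items by $v_i$. Running $\omega$ with every agent greedily taking her most valued remaining item uses $m$ picks, each a maximum over the remaining items, and is therefore polynomial; by the previous paragraph the resulting allocation is acceptable.

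The point to get right is the robustness step in the second paragraph: that an optimal strategy in the per-agent zero-sum game secures its value against \emph{arbitrary} opponent play, not only against the worst-case adversary, and that the joint play of the other $n-1$ agents is a legitimate adversary strategy in agent $i$'s game. Once this is isolated, the remainder is bookkeeping, and the only ingredient specific to additive valuations is that greedy realizes the optimal strategy and is efficiently computable.
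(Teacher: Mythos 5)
Your proof is correct and takes essentially the same approach as the paper's (much terser) proof: each agent plays the maximin strategy for her own identity in the picking order, the joint play of the other $n-1$ agents is one admissible adversary behaviour, so agent $i$ secures $v_i(A_i)\ge x_i \ge \min_{k\in[n]} x_k$, and for additive valuations this strategy is greedy and hence polynomial-time. The only difference is that you make explicit the robustness step (an optimal maximin strategy guarantees its value against \emph{every} adversary play) that the paper's two-line proof leaves implicit.
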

\begin{proof}
{Every} agent simply follows the same {strategy} that defines her picking-order share. {Thus all agents simultaneously are guaranteed that each receives an acceptable bundle. }
When {an agent} has an additive valuation, that {strategy} simply picks a highest value available item at each picking point, 
{thus the allocation is poly-time computable.}   
\end{proof}

Crucial to the proof of our main theorem is the next claim, showing that for additive valuations,  for some picking order the share is as large as the MMS.

\begin{lemma}\label{prop:picking-MMS} 
	Consider a setting with $n$ agents and a set of items $\items$. 
	{For any additive valuation $v$ there exists a picking order $\omega_v$ for which the $\omega_v$-Picking-Order share for an agent with valuation $v$ equals to the MMS of that agent, that is, to $\MMSv$. }
\end{lemma}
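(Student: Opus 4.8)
The plan is to exploit the simplification, already noted earlier in the paper, that for an additive valuation with all $n$ agents holding the same valuation $v$, the $\omega$-Picking-Order share reduces to a purely combinatorial quantity. Since both the agent and the adversary greedily take a highest-remaining item, items are consumed in non-increasing order of value, so the item in position $j$ of the sorted order, $e_j$, is always allocated at step $j$ to agent $\omega_j$. Consequently $x_k = v(\{e_j \mid \omega_j = k\})$, and the $\omega$-Picking-Order share equals $\min_{k\in[n]} v(B_k^\omega)$, where $B_k^\omega = \{e_j \mid \omega_j = k\}$ is the bundle assigned to agent $k$. The crucial observation is that \emph{every} partition of the sorted items into $n$ bundles is realizable in this way by a suitable picking order.

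First I would fix an MMS-optimal partition $A = (A_1, \ldots, A_n)$ of $\items$ for $v$, so that $\min_{k} v(A_k) = \MMSv$ by \cref{def:pes}. I would then sort the items as $e_1, \ldots, e_m$ in non-increasing order of value and define the picking order $\omega_v$ by setting $(\omega_v)_j = k$, where $A_k$ is the unique bundle of $A$ that contains the item in position $j$. By the reduction above, agent $k$ then receives the bundle $B_k^{\omega_v}$ consisting of the items at the positions assigned to $A_k$.

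Next I would argue that $v(B_k^{\omega_v}) = v(A_k)$ for every $k$, which immediately yields $\min_k v(B_k^{\omega_v}) = \MMSv$ and hence that the $\omega_v$-Picking-Order share equals $\MMSv$, as desired. (Feasibility of the share together with \cref{prop:feasibleMMS-maximal}, which states that the MMS dominates every feasible share, already guarantees that the share cannot exceed $\MMSv$; so exhibiting a picking order that attains $\MMSv$ is enough.)

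The main obstacle, and the only nontrivial point, is the presence of ties in item values: when several items share the same value, greedy picking may hand agent $k$ different physical items than those in $A_k$, so $B_k^{\omega_v}$ need not equal $A_k$ as a set. I would resolve this by the value-invariance of tie-breaking noted in the paper. The items of any fixed value occupy a consecutive block of positions in the sorted order, and over the steps corresponding to that block each picker removes an item of exactly that value; hence agent $k$ receives exactly as many items of that value class as $A_k$ contains from it, a count that depends only on $\omega_v$ and not on how ties are broken. Summing over value classes gives $v(B_k^{\omega_v}) = v(A_k)$ regardless of tie-breaking, which completes the argument.
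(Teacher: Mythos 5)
Your proposal is correct and follows essentially the same route as the paper's proof: both construct $\omega_v$ from an MMS-optimal partition by assigning each position in the non-increasing value order to the agent whose MMS bundle contains the item at that position, and then argue that greedy picking gives each agent a bundle of value equal to her MMS bundle. Your explicit value-class argument for tie-breaking makes rigorous a point the paper handles only implicitly (``tie breaking \dots has no effect on the final value''), but it is a refinement of the same argument, not a different one.
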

\begin{proof} 
Consider any partition of the items to $n$ sets (parts) such that the value of the lowest-value part equals to {$\MMSv$.} 
For the given additive valuation, order the items in non-increasing value order, breaking ties arbitrarily. 
Name each item by its place in the order. Then define a picking order {$\omega_v$} 
using the MMS partition with each part associated with one of the agents, and she gets to pick at the rounds that are defined by the names of the items in that part.  Every time an agent needs to pick, she picks an available item that is of highest value, and due to additivity, she gets a value at least as high as the value of her part. Thus the agent that gets the lowest value gets at least her MMS (and no more, by the MMS definition).   
\end{proof}

\section{Missing Proofs from Section \ref{sec:domination-SM}}\label{app:dominatio}
\subsection{Proof of \cref{thm:ICdominates}}\label{app:thm:ICdominates}
\thmFeasibleToSMFeasible*

\begin{proof}
All valuation functions throughout the proof are assumed to be from class $\CC$, and domination, feasibility and self-maximization of shares is with respect to valuation function in class $\CC$.

As $s$ is feasible, for every $m$ and $n$ there is an allocation algorithm $A_{m,n}$ (though not necessarily a computationally efficient one) that for every instance with $m$ items and $n$ agents provides an {acceptable allocation: one} that gives every agent $i$ with valuation $v_i$ a bundle of value at least $\Sharefullvi$. Fix such an algorithm, $A_{m,n}$. Without loss of generality, we shall assume that given the names of the agents and the items, algorithm $A_{m,n}$ is deterministic. (For example, if $A_{m,n}$ was randomized, we fix all its coin tosses in advance.)  Algorithm $A_{m,n}$ partitions the valuation functions of $\CC$ into at most $2^{2^m}$ sub-classes as follows. Each sub-class is represented by a binary vector {$x\in \{0,1\}^{2^m}$} of length $2^m$, where each entry in this vector is indexed by a distinct bundle {$B$} of items in $\items$. Given such a binary vector $x$, valuation function $v$ belongs to the subclass associated with $x$ if and only if the following condition holds:

\begin{itemize}
    \item For every bundle $B \subset \items$, the corresponding entry $x_B$ in $x$ is~1 if and only if there is some setting of valuation functions for all agents in which at least one agent (say agent $i$) has valuation function $v$, and $A_{m,n}$ gives this agent $i$ the bundle $B$. 
\end{itemize}

The collection of all subclasses is denoted by $\cal{S}$. For valuation function $v$, we let $x^v$ denote the subclass that contains $v$.

For a valuation function $v$ and a subclass represented by $x$ (regardless if $v$ belongs to this subclass or not), we define $v(x) = \min_{\{B \; | \; x_B = 1\}}[v(B)]$. We define the share $s'$ as
$\Sharefullpv = \max_{x \in {\cal{S}}}[v(x)]$.
{Note that as $s'$ is defined to be a value of some bundle, its share guarantee $\hat{s}'$ is the same as the share itself.}

We need to show three properties for $s'$.

\begin{enumerate}

\item $s'$ is feasible. This is established by the following allocation algorithm $A'_{m,n}$ (which is not necessarily computationally efficient, not even if $A_{m,n}$ is computationally efficient). Given valuation functions $v_1, \ldots, v_n$, for every valuation function $v_i$, algorithm $A'_{m,n}$ determines the class $x$ that maximizes $v_i(x)$ (breaking ties arbitrarily), and replaces $v_i$ by an arbitrary valuation function $v'_i$ from this class $x$ ($x$ is the class that gives rise to the share value $\Sharefullpv$). Thereafter, $A'_{m,n}$ runs $A_{m,n}$ on inputs $v'_1, \ldots, v'_n$.

\item $s'$ is self maximizing. Suppose that one reports $v'$ instead of the true $v$. Then there is some subclass $z$  such that $\Sharefullpvp = v'(z) = \min_{\{B \; | \; z_B = 1\}}[v'(B)]$. {All bundles $B$ for which $z_B = 1$ are acceptable bundles for $v'$.} Hence {the implied guarantee for $v$ with respect to the report $v'$} is not more than $\min_{\{B \; | \; z_B = 1\}}[v(B)] = v(z)$. As $\hat{s}'_{v}(v) =\Sharefullpv {=\max_{y \in {\cal{S}}}[v(y)]}\ge v(z) \ge \hat{s}'_{v}(v')$ {(the last inequality is not necessarily an equality, as there might be bundles $B$ acceptable for $v'$ beyond those for which $z_B = 1$)},
reporting $v'$ instead of $v$ does not increase the implied guarantee.

    \item $\Sharefullpv \ge \Sharefullv$ for every valuation function $v$. This is established by the following chain of inequalities:

$$\Sharefullpv = \max_{x \in {\cal{S}}}[v(x)] \ge v(x^v) \ge \Sharefullv$$ 
where the last inequality holds because $s$ is feasible.
\end{enumerate}
\end{proof}

\section{Missing Proof from Section \ref{sec:additive}}\label{app:additive}
\subsection{Missing Proof from Section \ref{sec:additive-ordinal}}\label{app:additive-ordinal}

\begin{observation}
\label{obs:ordinal-feasible}
{For two agents with additive valuations, for any family $\FF_2$,
its associated ordinal maximin share $\FF_2$-maximin is feasible. Moreover, if an
optimal partition in $\FF_2$ can be computed in polynomial time for every additive valuation $v$, then acceptable allocations can also be computed in polynomial time.}
\end{observation}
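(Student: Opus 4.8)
The plan is to establish feasibility through the classical \emph{cut-and-choose} protocol, using one agent's optimal $\FF_2$-partition as the cut. The first step I would carry out is an elementary but crucial upper bound: for \emph{any} partition $(B_1,B_2)$ of $\items$ and any additive valuation $v$, we have $\min[v(B_1),v(B_2)]\le \tfrac12\big(v(B_1)+v(B_2)\big)=\tfrac{v(\items)}{2}$. Maximizing over partitions $P\in\FF_{m,2}$ shows that every $\FF_2$-maximin share is bounded above by the proportional share $\tfrac{v(\items)}{2}$. This bound is exactly what will let me handle the ``chooser'' in the protocol.

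The second step is the allocation itself. Given additive valuations $v_1,v_2$, let $P=(B_1,B_2)\in\FF_{m,2}$ be an optimal partition for $v_1$, so that agent~1's share equals $\min[v_1(B_1),v_1(B_2)]$. I would let agent~2 take whichever of $B_1,B_2$ she values more under $v_2$, and give the remaining bundle to agent~1. Agent~1 is then guaranteed her share, because both $v_1(B_1)$ and $v_1(B_2)$ are at least their minimum, which is precisely her share; hence whichever bundle she is left with is acceptable. Agent~2 receives $\max[v_2(B_1),v_2(B_2)]\ge\tfrac12\big(v_2(B_1)+v_2(B_2)\big)=\tfrac{v_2(\items)}{2}$, and by the upper bound from the first step this is at least agent~2's own $\FF_2$-maximin share. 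Thus the resulting allocation is acceptable for both agents, which proves feasibility.

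For the ``moreover'' part, I would simply observe that the protocol is constructive: under the hypothesis that an optimal partition in $\FF_2$ is poly-time computable, one computes $P$ for $v_1$, evaluates the two bundles under the additive valuation $v_2$ (each in time linear in $m$), and selects the larger — all in polynomial time. I do not expect a genuine obstacle here; the only point requiring care (and easy to overlook) is that the chooser is protected not by anything specific to her own family of partitions, but solely by the generic inequality (share $\le$ proportional share) established in the first step. Everything else is routine.
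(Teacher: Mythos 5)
Your proposal is correct and matches the paper's proof essentially verbatim: the paper also uses cut-and-choose with agent~1's optimal $\FF_{m,2}$ partition, and protects the chooser via the same chain $\max\{v_2(B^1_1), v_2(B^1_2)\}\geq \frac{v_2(\items)}{2} \geq \min\{v_2(B^2_1), v_2(B^2_2)\}$, i.e., exactly your observation that the share is bounded above by the proportional share. The poly-time claim is handled identically in both arguments.
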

\begin{proof}
Given $m$, let $\FF_{m,2}$ be the family of partitions from $\FF_2$ that is associated with $m$.
{let $P^i = (B^i_1, B^i_2) \in \FF_{m,2}$ be an optimal  partition for agent $i$: a partition for which the $\FF_2$-maximin share of agent $i$ with valuation $v_i$ equals to $\min\{v_i(B^i_1), v_i(B^i_2)\}$. }
Let $PS_2=v_2(\items)/2$ be the proportional share for agent $2$.
Since $\max\{v_2(B^1_1), v_2(B^1_2)\}\geq PS_2 \geq \min\{v_2(B^2_1), v_2(B^2_2)\}$, the share is feasible using the partition $(B^1_1, B^1_2)$ in which agent $2$ picks her preferred bundle. {If $P^1$ is computed in polynomial time, then the above allocation algorithm also runs in polynomial time.}
\end{proof}

\subsubsection{Proof of \cref{prop:ordinal}}\label{app:prop-ordinal}

\lemmaOrdinalSelfMaximizing*
\begin{proof}
{For the family $\FF_n$ of partitions and parameter $m$, 
let $\FF_{m,n}$ be the family of partitions associated with the ordinal maximin share $\FF_n$-maximin, a share which we denote by $s$.}
Let $v, v'$ be two arbitrary additive valuation functions. Recall that $\ShareG$
denotes the guarantee of $s$ when the true valuation function is $v$, and the reported valuation function is $v'$. To prove that $\FF_n$-maximin is self maximizing we need to show that $\ShareGvv\geq \ShareG$. 

Let $P = (P_1, \ldots, P_n) \in \FF_{m,n}$ be the partition (of indices) giving $\ShareGvpvp$. That is, when items are indexed in decreasing order of $v'$ value, $P$ induces a partition $(B'_1, \ldots, B'_n)$ of $\items$, and $v'(B'_j) \ge \ShareGvpvp$ for every $j\in [n]$. In contrast, when items are indexed in decreasing order of $v$ value, $P$ induces a possibly different partition of $\items$, and we refer to this partition as $(B_1, \ldots, B_n)$. W.l.o.g., let $B_1$ be such that $v(B_1)$ is smallest. Then $v(B_1) \le \ShareGvv$, because $\ShareGvv$ is determined by the best partition $\FF_{m,n}$, and $P$ is not necessarily the best partition. 

We shall use the following notation. For every $1 \le j \le m$, recall that $e_j$ denotes the $j^{th}$ largest item according to $v$ (here and elsewhere, ties can be broken arbitrarily, without affecting correctness of the proof), similarly, $e'_j$ denotes the $j^{th}$ largest item according to $v'$, $E_j$ denotes the set $\{e_1, \ldots, e_j\}$, and $E'_j$ denotes the set $\{e'_1, \ldots, e'_j\}$.

Let $I = \{i_1, \ldots i_k\}$ be the set of indices in $P_1$, the first part of the partition $P$.  That is, $B_1 = \{e_{i_1}, \ldots, e_{i_k}\}$ and $B'_1 = \{e'_{i_1}, \ldots, e'_{i_k}\}$.  

Let $s_{i_1}$ denote the item of highest index (lowest $v$ value) among the items of $E'_{i_1}$. For every $j$ such that $2 \le j \le k$, let $s_{i_j}$ denote the item of highest index (lowest $v$ value) among the items of $E'_{i_j} \setminus \{s_{i_1}, \ldots, s_{i_{j-1}}\}$. Consider the bundle $S = \{s_{i_1}, \ldots, s_{i_k}\}$.

We claim that $v'(S) \ge v'(B'_1)$ and that $v(S) \le v(B_1)$. Assuming the claim, we infer that $S$ is a bundle {acceptable}  
under $v'$ (recall that $v'(B'_1) \ge \ShareGvpvp ={\ShareGfullvp}$), and that $v(S)$ is not larger than the $\FF_n$-maximin share for $v$ (recall that $v(B_1) \le \ShareGvv$). By reporting $v'$ instead of the true $v$ the agent might get the bundle $S$, and hence her {implied} guarantee does not improve, {as $v(S) \le v(B_1) \le \ShareGvv$}. 
Consequently, {as this holds for any additive $v,v'$,} the ordinal maximin share $\FF_n$-maximin is self-maximizing.

It remains to prove the claim, {showing that $v'(S) \ge v'(B'_1)$ and $v(S) \le v(B_1)$}.

Recall that $S = \{s_{i_1}, \ldots, s_{i_k}\}$ and that $B'_1 = \{e'_{i_1}, \ldots, e'_{i_k}\}$. 
As $v'(s_{i_j})\ge v'(e'_{i_j})$ for every $j$, we get that $v'(S) \ge v'(B'_1)$, as desired.

{It remains to show that $v(S) \le v(B_1)$.} Recall that $B_1 = \{e_{i_1}, \ldots, e_{i_k}\}$, and the notation $E_j$ above. 
{We show that for every $j\in [k]$ the set $S$ can have at most $j-1$ items from $E_{i_j - 1}$.}
{First, observe that $S$ cannot have any item from $E_{i_1 - 1}$, or equivalently, for every $j\in [k]$, the item $s_{i_j}$ does not belongs to $E_{i_1 - 1}$. For $j=1$ this holds because $|E'_{i_1}| = |E_{i_1-1}| + 1$, and hence $E'_{i_1} \setminus E_{i_1 - 1}$ contains at least one item. As every item in $E'_{i_1} \setminus E_{i_1 - 1}$ has larger index (smaller $v$ value)  than every item in $E_{i_1 - 1}$, the item $s_{i_1}$ will be selected from $E'_{i_1} \setminus E_{i_1 - 1}$. For $j$ such that $2 \le j \le k$ the claim that $s_{i_j}\not\in E_{i_1 - 1}$ holds because the set $E'_{i_j}$ contains at least $j$ items not from $E_{i_1 - 1}$ {(because $|E'_{i_j}| \ge i_1 + j - 1$)}, and exactly $j-1$ items of $E'_{i_j}$ are selected in $S$ as $\{s_{i_1}, \ldots, s_{i_{j-1}}\}$. Hence at least one item not from $E_{i_1 - 1}$ remains in $E'_{i_j}$, and can be selected as $s_{i_j}$.}
Hence, the item $e_{i_1}$ of largest $v$ value in $B_1$ has at least as large value of the item of largest $v$ value in $S$. Likewise, $S$ can have at most one item from $E_{i_2 - 1}$, because for every $1 \le j \le k$, the set $E'_{i_j}$ contains at least $j-1$ items not from $E_{i_2 - 1}$. More generally, for every $j\in [k]$ the set $S$ can have at most $j-1$ items from $E_{i_j - 1}$. Hence for every $j$, {as $v$ is additive}, 
$v(\{e_{i_1}, \ldots, e_{i_j}\})$ is at least as large as the value of the $j$ items of highest $v$ value in $S$. As $|S| = |B_1|$, 
this implies that $v(S) \le v(B_1)$, as desired.
\end{proof}

\subsection{Proof of \cref{thm:NS2} }\label{app:thm-NS2}

{We next present the claims that were missing in \cref{sec:additive-n} to complete the proof of \cref{thm:NS2}.}

\begin{lemma}\label{lem:NS-poly-time}
For the class of additive valuations and for every values of $n$ and $1 \le q \le n$, an optimal partition in $\FF_{n,q}$ (that gives the value of the $NS_{n,q}$ share) can be computed in time polynomial in $m$ and $n$.
\end{lemma}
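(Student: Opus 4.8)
The plan is to encode every partition in $\FF_{n,q}$ by a small set of monotone parameters and then optimize over them by dynamic programming. Recall that a partition in $\FF_{m,n,q,k}$ is obtained by cutting the sorted list $e_1, \ldots, e_m$ into a prefix $Z = \{e_1, \ldots, e_k\}$ and a suffix $S = \{e_{k+1}, \ldots, e_m\}$, consecutively partitioning each into $n$ blocks (with the first $n-q$ prefix blocks forced to be the singletons $\{e_1\}, \ldots, \{e_{n-q}\}$), and setting $B_j = Z_j \cup S'_{n-j+1}$. Since the prefix block $Z_j$ moves to the right and the suffix block $S'_{n-j+1}$ moves to the left as $j$ increases, the bundles form \emph{nested intervals}: the items of $B_{j+1}$ lie inside those of $B_j$. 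Hence a partition is fully described by a prefix pointer and a suffix pointer that are advanced as the bundles $B_1, \ldots, B_n$ are revealed from the outside in.

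First I would set up the state $(j, a, b)$, meaning that bundles $B_1, \ldots, B_j$ have already been formed using the prefix items $\{e_1, \ldots, e_a\}$ and the suffix items $\{e_{b+1}, \ldots, e_m\}$, so that the unused items form the contiguous middle block $\{e_{a+1}, \ldots, e_b\}$ (requiring $a \le b$). Writing $V(x) = \sum_{i \le x} v(e_i)$ for the prefix sums of the sorted values (computable once in $O(m \log m)$ time), a transition from $(j, a, b)$ to $(j+1, a', b')$ with $a \le a'$ and $b' \le b$ assigns $B_{j+1}$ the prefix block $\{e_{a+1}, \ldots, e_{a'}\}$ and suffix block $\{e_{b'+1}, \ldots, e_b\}$, of value $V(a') - V(a) + V(b) - V(b')$. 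The singleton constraint on the first $n-q$ bundles is enforced by requiring $a' = a+1$ (equivalently $a' = j+1$) whenever $j + 1 \le n - q$; for $j+1 > n-q$ both blocks may be arbitrary, possibly empty, consecutive sets.

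I would then run a maximin dynamic program, defining $f(j, a, b)$ to be the largest value of $\min_{1 \le l \le j} v(B_l)$ achievable over all valid ways of reaching the state, via $f(j+1, a', b') = \max \min\bigl(f(j,a,b),\, V(a') - V(a) + V(b) - V(b')\bigr)$, the maximum taken over admissible predecessors. Because the objective is a minimum and $\min$ is monotone, maximizing the partial minimum at each step is optimal for the global minimum, so the recursion is correct. A completed partition corresponds to a state in which the two pointers meet, $a = b = k$, with the middle block empty, so the $NS_{n,q}$ share value is $\max_{n - q < k \le m} f(n, k, k)$, and an optimal partition is recovered by the standard backtracking of the argmax choices. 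The state space has size $O(n m^2)$ and each state is updated from $O(m^2)$ predecessors, giving running time polynomial in $m$ and $n$ (this is easily sharpened using partial maxima, but polynomiality is all that is required).

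The main obstacle I anticipate is purely the bookkeeping of correctness: one must verify that the two-pointer encoding is a faithful bijection with $\FF_{n,q}$ — that every state sequence reaching some $(n, k, k)$ yields a legal member of $\FF_{m,n,q,k}$ (respecting the nesting, the forced singletons, and $k > n-q$), and conversely that every member of $\FF_{n,q}$ arises from exactly one such sequence — together with the validity of the maximin recursion. The substantive point underlying all of this is the observation that the nesting structure lets the suffix be consumed from its low-value end while the prefix is consumed from its high-value end, so that the unused items always form a single contiguous middle block; once this is in hand, the dynamic program and its polynomial running time follow routinely.
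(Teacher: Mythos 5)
Your proof is correct, and it reaches the lemma by a genuinely different dynamic-programming route than the paper's. Both arguments exploit the same structural fact — in any partition in $\FF_{n,q}$ the unallocated items always form a contiguous middle block, so building $B_1,\ldots,B_n$ from the outside in is Markovian — but the paper does not fold the maximin objective into the DP as you do. Instead, it first observes that the share value must equal the value of some bundle consisting of a consecutive prefix chunk plus a consecutive suffix chunk, giving $O(m^4)$ candidate values; it binary-searches over these, and for each candidate $c$ runs a \emph{feasibility} DP whose entry $T_{i,j}$ records the \emph{minimum} number $t$ of suffix items needed so that bins $B_1,\ldots,B_i$, each of value at least $c$, can be carved from the first $j$ items and the last $t$ items. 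Storing only the minimal $t$ compresses your two-pointer state $(j,a,b)$ down to $(i,j)$, via the implicit exchange argument that consuming fewer suffix items can never hurt future bins; your version keeps both pointers and tracks the maximin value $f(j,a,b)$ directly, and you correctly justify the Bellman recursion by monotonicity of $\min$ together with the fact that the set of feasible completions depends only on the current state. The trade-off: your DP is more self-contained (no candidate-value enumeration or threshold search), at the price of a larger state space — $O(nm^2)$ states with $O(m^2)$ transitions each, versus an $O(nm)$ table per candidate value — and both are comfortably polynomial in $m$ and $n$, which is all the lemma demands. One small overstatement to fix: the correspondence between your pointer sequences and members of $\FF_{n,q}$ is \emph{not} a bijection — for instance, the boundary $k$ can slide within the contiguous run of $B_n$'s items around the prefix/suffix seam (both $Z_n$ and the innermost suffix block land in $B_n$), and empty blocks create further multiplicity — but this is harmless, since correctness needs only that every valid state sequence yields a legal member of $\FF_{n,q}$ and that every member arises from at least one sequence, both of which hold in your encoding.
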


\begin{proof}
Consider a valuation function $v$ over $m$ items. For simplicity of the presentation, we present a polynomial time algorithm for computing an optimal $NS_{n,q}(v)$ partition only in the most difficult case, that of $q=n$. The algorithm can be adapted in a straightforward way to compute optimal $NS_{n,q}(v)$ partitions for values of $q$ that are smaller than $n$ (in the table $T$ described below, for every $i < n - q$, only fill up entries $T_{i,j}$ with $j =i$), and its running time only improves as $q$ becomes smaller. 

Sort the items by their $v$ values. This takes time $O(m \log m)$. We assume for simplicity that the value of  $NS_{n,n}(v)$ is known, and the goal is only to compute a partition that certifies this value. This assumption can be made without loss of generality. For example, observe that $NS_{n,n}(v)$ must be equal to the value of a bundle $B$ that is composed of a set of consecutive items from the prefix $Z$ and a set of consecutive items from the suffix $S$. As there are at most $O(m^4)$ possible such bundles, there are at most $O(m^4)$ candidate values for $NS_{n,n}(v)$. We can run binary search on these candidate values so as the find the highest candidate value for which an  $NS_{n,n}(v)$ partition exists.

Given a candidate value $c$ for $NS_{n,n}(v)$, to check whether there is a partition in $\FF_{n,n}$ of value at least $c$, run the following dynamic programming algorithm. The algorithm maintains a table $T$ with $n$ rows (numbered~1 to $n$) and $m+1$ columns (numbered~0 to $m$), whose entries are integers in the range $[-1,m]$.  Entry $T_{i,j}$ is the smallest legal value value $t$, where $t$
is legal if $t + j \le m$ and the following holds: one may create bins $B_1, \ldots, B_i$ that contain a consecutive partition $(Z_1, \ldots Z_i)$ of the first $j$ items, and a consecutive partition $(S_i, \ldots, S_1)$ 
of the last $t$ items, so that every bin $B_i = Z_i \cup S_i$ has value at least $c$. If there is no such legal value of $t$, then this is denoted by setting $T_{i,j} = -1$. 

The entries of $T$ can be filled from row~1 to row~$n$, and within each row from column~0 to column $m$. Entry $T_{1,j}$ has value $t$ with the smallest legal value of $t$ for which $v(\{e_1, \ldots, e_j\} \cup \{e_{m-t+1}, e_m\}) \ge c$. For $i > 1$, entry $T_{i,j}$ is the smallest legal value of $t$ over all choices of $0 \le j' \le j$ for which $v(\{e_{j'+1}, \ldots, e_j\} \cup \{e_{m-t+1}, \ldots, e_{m - T_{i-1,j'}}\}) \ge c$. The process of filling $T$ takes polynomial time, as there are $O(mn)$ entries and the value of each entry can be computed in time $O(m)$.

Once we fill the table, then we may check whether there is a partition in $\FF_{n,n}$ of value at least $c$ by checking whether row $n$ in the table has an entry that is not $-1$. Finding such an entry $T_{n,j}$, we can also find a corresponding $NS_{n,n}$ partition, by backtracking the steps of the algorithm that gave $T_{nj}$ its value.
\end{proof}

{We shall use $\rho_{n,q}(v)$ to denote the ratio between the value of the $NS_{n,q}$ share and the MMS for additive valuation function $v$, and $\rho_{n,q}$ to denote the infimum ratio over all additive $v$. }

\begin{lemma}
\label{lem:twothirdsapprox}
For the class of additive valuations, for  
{every $n$ and} every $1 \le q \le n$
the value of the $NS_{n,q}$ share is at least a $\frac{2n}{3n-1}$ fraction of the MMS. {In other words, $\rho_{n,q} \ge \frac{2n}{3n-1}$.}
\end{lemma}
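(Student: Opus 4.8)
The plan is to prove the bound for $q=1$ and lift it to all $q$ for free: since $\FF_{n,1}\subseteq\FF_{n,q}$ for every $q\ge 1$, we have $NS_{n,q}(v)\ge NS_{n,1}(v)$ for every additive $v$, so it suffices to show $NS_{n,1}(v)\ge\frac{2n}{3n-1}\,\MMSv$. I would normalize so that $\MMSv=1$, write $\rho=\frac{2n}{3n-1}$, and reduce the goal to exhibiting a \emph{single} partition in $\FF_{n,1}$ all of whose bundles have value at least $\rho$ (being a lower bound on the maximum over $\FF_{n,1}$, such a partition certifies the share value). Two facts drive the constant and should be recorded first: since $\MMSv\le\PSv=\frac{v(\items)}{n}$ we have $v(\items)\ge n$; and, setting $\sigma=\frac{\rho}{2}=\frac{n}{3n-1}$, the defining value $\rho=\frac{2n}{3n-1}$ is exactly the one for which the identity $n(1-\rho)=(n-1)\sigma$ holds. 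I would also dispose of very large items by a valid reduction handled by induction on $n$: while some item has value at least the MMS, place it alone in a bundle (a legal singleton ``start'' under $q=1$) and recurse on $n-1$ agents; this preserves $\MMSv\ge 1$, the peeled singleton already has value $\ge 1\ge\rho$, and since $\frac{2n}{3n-1}$ is decreasing in $n$ the recursive target $\frac{2(n-1)}{3(n-1)-1}$ only exceeds $\rho$. Hence we may assume every item has value below $1$, so in particular every singleton start has value below $\rho+\sigma=\frac{3n}{3n-1}>1$.

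The construction builds the certifying partition by bag filling inside the nested family. Recall that a member of $\FF_{n,1}$ forces $e_1,\dots,e_{n-1}$ to be singleton starts of $B_1,\dots,B_{n-1}$, while $B_n$ may start with a prefix $e_n,\dots,e_k$; the remaining (suffix) items are distributed as consecutive chunks, the highest-valued chunk going to $B_n$ and the lowest to $B_1$. I would choose the prefix $Z=\{e_1,\dots,e_k\}$ to be precisely the items of value at least $\sigma$ (padding $k$ up to $n$ if there are too few, as $q=1$ requires $k\ge n$), so that all the ``medium-or-large'' items land in $B_n$ through $Z_n=\{e_n,\dots,e_k\}$, and the entire suffix $S=\items\setminus Z$ consists of ``small'' items of value below $\sigma$. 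I would then fill each bundle $B_1,\dots,B_{n-1}$ whose singleton start is below $\rho$ with small suffix items, stopping as soon as its value reaches $\rho$; a bundle whose start is already at least $\rho$ needs nothing. Because the last small item added has value below $\sigma$ and the bundle was below $\rho$ beforehand, each of $B_1,\dots,B_{n-1}$ ends with value in $[\rho,\rho+\sigma)$. This routing respects the nested (anti-sorted) shape: $B_n$, which carries the smallest start $e_n$, absorbs the large items, while the higher-start bundles receive only the small tail.

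The payoff of the identity is that the last bundle is then automatic: $v(B_n)=v(\items)-\sum_{j<n}v(B_j)>n-(n-1)(\rho+\sigma)=n-(n-\rho)=\rho$, where the middle equality is exactly $(n-1)(\rho+\sigma)=(n-1)\rho+n(1-\rho)=n-\rho$. Thus the whole difficulty collapses to one point: that the small items suffice to bring each of $B_1,\dots,B_{n-1}$ up to $\rho$. I expect this to be the main obstacle, precisely because the rigid $q=1$ routing sends every item of value at least $\sigma$ into $B_n$ and forbids using such an item to repair a deficient low-index bundle; so one must argue that a shortage of small-item value cannot occur. I would close this by the same counting against $v(\items)\ge n$: a bundle stuck below $\rho$ would force the small-item supply to fall short, which, combined with $v(\items)\ge n$ and the concentration of the remaining value in $B_n$, contradicts the identity $n(1-\rho)=(n-1)\sigma$ (and any $\rho$ larger than $\frac{2n}{3n-1}$ breaks that identity, so the constant is forced). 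Since the bound is established at $q=1$ and $\FF_{n,1}\subseteq\FF_{n,q}$, it holds for all $1\le q\le n$, giving $\rho_{n,q}\ge\frac{2n}{3n-1}$; for identical valuations this bookkeeping is exactly the (simpler) specialization of the allocation analysis of~\cite{GMT19}, since here a single partition serves all agents at once.
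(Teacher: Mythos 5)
Your skeleton matches the paper's (reduce to $q=1$ via $\FF_{n,1}\subseteq\FF_{n,q}$, normalize the MMS to~$1$, set $\rho=\frac{2n}{3n-1}$ and $\sigma=\frac{\rho}{2}=\frac{n}{3n-1}$, use $v(\items)\ge n$, and close with the counting $v(B_n)>n-(n-1)(\rho+\sigma)=\rho$), but the construction is broken at exactly the point you flagged, and the repair you sketch does not exist. Routing \emph{every} item of value at least $\sigma$ into $B_n$ and hoping the small suffix suffices to lift $B_1,\ldots,B_{n-1}$ cannot be rescued by counting against $v(\items)\ge n$, because $B_n$ can hoard unbounded value. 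Concretely, take $n=3$ and six items of value $\frac{1}{2}$ each: the MMS is~$1$, $\rho=\frac{3}{4}$, $\sigma=\frac{3}{8}$. Every item has value $\ge\sigma$, so your prefix $Z$ is the whole instance, the suffix is empty, $B_1=\{e_1\}$ and $B_2=\{e_2\}$ are stuck at $\frac{1}{2}<\rho$ with nothing to fill them, and $B_3$ holds value~$2$. The totals satisfy $\sum_j v(B_j)=3=n$, so no contradiction with $v(\items)\ge n$ ever arises --- your construction simply fails, even though $NS_{3,1}=1$ on this instance (e.g., $B_3=\{e_3,e_4\}$, $B_2=\{e_2,e_5\}$, $B_1=\{e_1,e_6\}$ is a valid nested partition). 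The culprit is items with value in $[\sigma,\rho)$: your scheme concentrates all of them in $B_n$, yet they may be the only resource capable of lifting the low-index bundles.

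The paper's proof supplies the missing idea via induction on $n$ with a second peeling reduction. After peeling a large $e_1$ (your first peel, with threshold $1$ rather than $\rho$, is harmless), it fills $B_n$ greedily starting from $e_n$ and closes it upon reaching $\rho$; the dangerous case is when $B_n$ would close with exactly two items $\{e_n,e_{n+1}\}$. In that case the pair is peeled off as an entire bundle and one recurses on $n-1$ agents: by pigeonhole, some bundle of the MMS partition contains two of the top $n+1$ items, whose total value is at least $v(e_n)+v(e_{n+1})$, so removing the pair does not decrease the MMS, and since $\frac{2n}{3n-1}$ is decreasing in $n$ the inductive guarantee is strong enough. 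This is precisely the mechanism that distributes medium items across many bundles instead of dumping them into $B_n$ (in the counterexample above it peels $\{e_3,e_4\}$ and recurses). Only after this reduction does one know $v(e_n)+v(e_{n+1})<\rho$, hence $v(e_{n+1})<\sigma$, so \emph{every} subsequent fill item is genuinely small; then reverse-order bag filling closes each of $B_2,\ldots,B_n$ strictly below $\rho+\sigma=\frac{3n}{3n-1}$, no bundle hoards (so the items cannot run out without forcing the total below $n$), and your final counting goes through, leaving value at least $\rho$ for $B_1$. Without the pair-peeling induction, no single-shot choice of the prefix $Z$ plus a global counting argument yields the bound.
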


\begin{proof}
It suffices to prove the lemma for $q=1$, as then the value of the $NS_{n,q}$ share is smallest. We may also assume that $m > n$, as otherwise the value of the $NS_{n,1}$ share is equal to the MMS.

{The proof proceeds by induction on $n$. For the base case the lemma holds, because $NS_{n,1}$ is equal to the $\MMSi$ when $n=1$. For the inductive step, we assume that the lemma holds for $n-1$, and prove it for $n$.}

For every additive valuation function $v_i$ we show a partition in $\FF_{n,1}$ in which every bundle has value at least $\frac{2n}{3n-1}\MMSi$. The items are ordered from largest to smallest value. We may assume that $v_i(e_1) < \frac{2n}{3n-1}\MMSi$, as otherwise we can set $B_1 = \{e_1\}$, and then use the inductive hypothesis to create an $NS_{n-1,1}$ partition for the remaining items. Every bin in this partition has value at least $\frac{2n}{3n-1}\MMSi$, because the $MMS_{n-1}$ value for the remaining items is at least as large as the $MMS_n$ value for the original items ($n-1$ bundles of the $MMS_n$ partition do not contain $e_1$).

Place items $e_1, \ldots, e_n$ in bins $B_1, \ldots, B_n$, respectively, and continue filling bins in reverse order, closing each bin once its value reaches (or exceeds) $\frac{2n}{3n-1}\MMSi$. We may assume that $B_n$ contains at least three items. It cannot contain only one item because $v_i(e_n) \le v_i(e_1) < \frac{2n}{3n-1}\MMSi$. It can be assumed not to have only two items, because then removing $B_n$ does not decrease the MMS.  (This is a known fact that holds because some bundle in the MMS partition has two of the first $n+1$ items, {and the total value of items $e_n$ and $e_{n+1}$ is at most the value of this pair.}) Consequently, for the remaining items we can use the inductive hypothesis to create an $NS_{n-1,1}$ partition in which every bundle has value at least $\frac{2n}{3n-1}\MMSi$. {(For $q=1$, setting $B_n=\{e_n,e_{n+1}\}$ does not impose any new constraint on $NS_{n-1,1}$ for the remaining items.)}

Hence $v_i(e_n) + v_i(e_{n+1}) < \frac{2n}{3n-1}\MMSi$, implying that $v_i(e_{n+1}) < \frac{n}{3n-1}\MMSi$. Hence no bin can have value as high as $\frac{3n}{3n-1}\MMSi$. Hence the total value of the last $n-1$ bins is at most $(n-1)\frac{3n}{3n-1}\MMSi$, leaving value of at least $(n - (n-1)\frac{3n}{3n-1})\MMSi = \frac{2n}{3n-1}\MMSi$ for $B_1$.
\end{proof}

We state some conventions that are used in all subsequent proofs.
Let $v$ be an arbitrary additive valuation function. We assume that items are sorted from highest to lowest value {(with arbitrary tie breaking)}. 
{When proving lower bounds} we also assume that the MMS is~1, and that in the MMS partition, every bundle has value exactly~1. These assumptions can be made without loss of generality. In particular, if there is a negative example in which the MMS is~1 but the MMS partition contains a bundle of value strictly larger than~1, we can reduce values of items in this bundle until its value becomes exactly~1, and the approximation ratio of $NS_{n,q}$ does not improve. This is because the value of the MMS does not change, and the value of the $NS_{n,q}$ share does not increase, because it is an ordinal maximin share {(and hence monotone in the value of the items)}. {We remark that in explicit examples that we give, the MMS will often not be~1, as we prefer to have integer item values, for readability of the examples.}

{In \cref{lem:rho32} we shall consider $NS_{3,2}$ and prove that for additive valuations $\rho_{3,2} = \frac{4}{5}$. 
Before that, we prove several helpful claims.}
{
\begin{lemma}
\label{lem:rho21}
$\rho_{2,1} = \frac{4}{5}$.
\end{lemma}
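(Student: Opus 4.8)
The plan is to prove the two inequalities $\rho_{2,1}\ge \tfrac45$ and $\rho_{2,1}\le \tfrac45$ separately. The lower bound is immediate: \cref{lem:twothirdsapprox}, instantiated at $n=2$ and $q=1$, already gives $\rho_{2,1}\ge \frac{2n}{3n-1}=\frac45$. So the real content is the matching upper bound, namely exhibiting a single additive valuation $v$ for two agents with $NS_{2,1}(v)=\frac45\,MMS_2(v)$, which forces $\rho_{2,1}=\inf_v \rho_{2,1}(v)\le \frac45$.

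Before constructing the example, I would record the explicit form of $NS_{2,1}$. For $n=2$ and $q=1$ the family $\FF_{2,1}$ consists exactly of the partitions in which $B_2=\{e_2,\dots,e_\ell\}$ is a block of the sorted sequence beginning at the second-largest item and $B_1=\{e_1\}\cup\{e_{\ell+1},\dots,e_m\}$ is its complement. Hence, writing $V=v(\items)$ and $T_\ell=\sum_{j=2}^{\ell}v(e_j)$, we have $NS_{2,1}(v)=\max_{2\le \ell\le m}\min(T_\ell,\,V-T_\ell)$. This closed form makes the example easy to check.

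For the example I would take the valuation with five items of values $(3,2,2,2,1)$, so $V=10$. Its maximin share is $5$, achieved by the balanced partition $\{3,2\}\mid\{2,2,1\}$; since $5=V/2$ is the largest value two bundles can both reach, $MMS_2(v)=5$. For $NS_{2,1}$, letting $\ell$ run over $2,3,4,5$ gives the bundle-value pairs $(v(B_2),v(B_1))=(2,8),(4,6),(6,4),(7,3)$, whose minima are $2,4,4,3$; thus $NS_{2,1}(v)=4$ and $\rho_{2,1}(v)=\frac45$, completing the upper bound.

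The main obstacle is conceptual rather than computational: the example must be engineered so that the balanced value $5$ is attainable \emph{only} by bundles lying outside the nested family $\FF_{2,1}$. Indeed $5$ is reached solely by ``skipping'' bundles such as $\{e_1,e_2\}$ (which is not of the form $\{e_1\}\cup\text{suffix}$) or $\{e_2,e_3,e_5\}$ (not a block starting at $e_2$), while every block $\{e_2,\dots,e_\ell\}$ and every complement $\{e_1\}\cup\{e_{\ell+1},\dots,e_m\}$ misses $5$ by at least one unit. Isolating this obstruction is exactly what pins the ratio at $\frac45$. Had \cref{lem:twothirdsapprox} not been available, the harder half would have been reproving the lower bound $NS_{2,1}\ge\frac45\,MMS_2$ directly, by arguing that the sorted partial sums $T_\ell$ must land inside $[\tfrac45 MMS_2,\,V-\tfrac45 MMS_2]$ unless one of the two largest items is already that large.
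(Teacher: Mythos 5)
Your proof is correct, and your upper-bound half coincides with the paper's: the paper also uses the instance with item values $3,2,2,2,1$ to show $\rho_{2,1} \le \frac{4}{5}$ (your closed form $NS_{2,1}(v)=\max_{2\le \ell\le m}\min(T_\ell,\,V-T_\ell)$, where $T_\ell=\sum_{j=2}^{\ell}v(e_j)$, is a transparent way to verify it, and your observation that every value-$5$ bundle lies outside the nested family is accurate). Where you genuinely diverge is the lower bound. The paper proves $\rho_{2,1}\ge\frac{4}{5}$ from scratch: normalizing $MMS_2(v)=1$ (so $v(\items)=2$), it splits on whether $v(e_3)\ge\frac{2}{5}$; if so, the partition with $B_2=\{e_2,e_3\}$ works, since $v(B_2)\ge\frac{4}{5}$ by monotonicity while $v(B_2)\le 1$ (otherwise an MMS bundle containing two of the top three items would exceed~$1$), and if not, it places $e_1$ in $B_1$ and fills $B_2$ greedily until it first reaches $\frac{4}{5}$, at which point $v(B_2)\le \frac{4}{5}+\frac{2}{5}$, leaving at least $2-\frac{6}{5}=\frac{4}{5}$ for $B_1$. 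You instead invoke \cref{lem:twothirdsapprox} at $n=2$, $q=1$, where $\frac{2n}{3n-1}=\frac{4}{5}$. This is legitimate and non-circular: in the paper, \cref{lem:twothirdsapprox} precedes \cref{lem:rho21} and is proved by a self-contained induction on $n$ that nowhere uses the two-agent result. Your route is shorter and makes explicit that $n=2$ is precisely where the general $\frac{2n}{3n-1}$ bound is tight, so your example simultaneously certifies tightness of \cref{lem:twothirdsapprox} for two agents; the paper's direct argument buys an elementary, self-contained proof whose threshold-and-greedy-filling template is the one reused in the subsequent case analyses (\cref{lem:rho22}, \cref{lem:rho31}).
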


\begin{proof}
The example with item values $3,2,2,2,1$ shows that $\rho_{2,1} \le \frac{4}{5}$. Hence it remains to show that $\rho_{2,1} \ge \frac{4}{5}$. 
If $v(e_3) \ge \frac{2}{5}$, then $NS_{2,1}$ may choose $B_2 = \{e_2,e_3\}$ and $B_1$ containing the remaining items, and each bundle has value at least $\frac{4}{5}$. (For $B_2$ this holds because $v(e_2) \ge v(e_3)$, and for $B_1$ this holds because $v(B_2) \le 1$, as otherwise {the MMS partition must contain a bundle containing  two of the top three items, with
value strictly larger than~1.}) If $v(e_3) \le \frac{2}{5}$ then  $NS_{2,1}$ can place $e_1$ in $B_1$, and then items in $B_2$ until the first point in which the value of $B_2$ reaches or exceeds $\frac{4}{5}$. At this point, value of at least $2 - (\frac{4}{5} + \frac{2}{5}) \ge \frac{4}{5}$ is left for $B_1$. In either case, the value of $NS_{2,1}$ is at least $\frac{4}{5}$.
\end{proof}
}

{
\begin{lemma}
\label{lem:rho22}
$\rho_{2,2} = \frac{5}{6}$.
\end{lemma}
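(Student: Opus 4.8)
The plan is to prove the two inequalities $\rho_{2,2}\le\frac56$ and $\rho_{2,2}\ge\frac56$ separately, after first recording a convenient reformulation of the $NS_{2,2}$ share. Unrolling the definition of $NS_{n,q}$ for $n=q=2$ shows that the family $\FF_{2,2}$ consists of exactly those partitions $(B_1,B_2)$ in which one bundle is an arbitrary contiguous block of the sorted item sequence and the other is its complement (a prefix together with a suffix); I would verify this first, since every later step is phrased in terms of it. Consequently $NS_{2,2}(v)=\max_{I}\min[v(I),\,v(\items)-v(I)]$, where $I$ ranges over all contiguous intervals $\{e_{a+1},\ldots,e_b\}$ of the items sorted in non-increasing order. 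For the upper bound $\rho_{2,2}\le\frac56$ I would then exhibit the explicit instance with item values $4,3,2,2,1$: its MMS is $6$ (e.g. via the partition $\{4,2\},\{3,2,1\}$), while a direct check of all contiguous intervals shows that none sums to $6$ and that the best value of $\min[v(I),12-v(I)]$ is $5$, attained by $I=\{3,2\}$ or $I=\{2,2,1\}$. Hence $NS_{2,2}=5=\frac56\,MMS$.

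For the lower bound I would use the conventions already in force: normalize so that $MMS=1$ and the MMS partition consists of two bundles of value exactly $1$, so that $v(\items)=2$, every item has value at most $1$, and there is a \emph{balanced partition} of the items into two halves of value $1$. By the reformulation it then suffices to produce a contiguous interval $I$ with $v(I)\in[\frac56,\frac76]$, since this gives $\min[v(I),2-v(I)]\ge\frac56$. The single structural fact I would extract from the balanced partition is that at most four items exceed $\frac13$ (each half sums to $1$ and can hold at most two such items). The core is a greedy prefix sweep: let $t$ be the first index with $P_t=v(\{e_1,\ldots,e_t\})\ge\frac56$. If $P_t\le\frac76$ the prefix $\{e_1,\ldots,e_t\}$ is the desired interval; otherwise the crossing item satisfies $e_t=P_t-P_{t-1}>\frac13$, and since the items before it also exceed $\frac13$ and sum to less than $\frac56$, one gets $t\in\{2,3\}$.

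I would then re-run the same greedy idea over intervals that begin just after the jump: in case $t=3$ (where $e_1,e_2,e_3\in(\frac13,\frac12)$, because $e_1+e_2<\frac56$) over intervals starting at index $3$, and in case $t=2$ over intervals starting at index $2$. Each such sweep either crosses $[\frac56,\frac76]$ with a step of size at most $\frac13$, which lands the interval inside the window, or crosses it with a large step. In case $t=3$ a large crossing step must be the fourth large item $e_4$ (there are at most four large items and $e_1,e_2,e_3$ are already large, and the start $e_3<\frac12<\frac56$ forces the crossing to index $4$); then $e_3+e_4<\frac12+\frac12=1<\frac76$ while the crossing makes the sum at least $\frac56$, so $\{e_3,e_4\}$ is in range. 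This case therefore closes without needing the balanced partition beyond the ``four large items'' bound.

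The delicate part, and where I expect the real work, is the large-step subcases of $t=2$, which is exactly where the balanced partition is indispensable. Here the crossing of intervals starting at index $2$ occurs at index $3$ or $4$. If at index $3$, I would rule out $e_2+e_3>\frac76$: since $e_1+e_2>1$ puts $e_1,e_2$ in opposite halves and $e_2+e_3>1$ puts $e_2,e_3$ in opposite halves, $e_1$ and $e_3$ would share a half with $e_1+e_3\ge e_2+e_3>1$, impossible; hence $e_2+e_3\in[\frac56,\frac76]$. If at index $4$, there are exactly four large items with two per half, so $e_1$ shares its half with a large item, forcing $e_1+(\text{large})\le1$ and thus $e_1<\frac23$, whence $e_2>\frac76-\frac23=\frac12$; but the crossing at index $4$ means $e_2+e_3<\frac56$ with $e_3>\frac13$, so $e_2<\frac12$, a contradiction that eliminates this subcase. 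Assembling all cases gives $NS_{2,2}(v)\ge\frac56$ for every $v$, which together with the example $4,3,2,2,1$ yields $\rho_{2,2}=\frac56$.
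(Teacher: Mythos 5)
Your proof is correct, and it takes a genuinely different route in its organization, though it shares the paper's endpoints and core engine. Both use the same tight instance $4,3,2,2,1$ for $\rho_{2,2}\le\frac{5}{6}$, the same normalization (MMS~$=1$ with both MMS bundles of value exactly~$1$, hence $v(\items)=2$ and every item of value at most~$1$), and the same basic mechanism: greedily grow a contiguous block until it crosses $\frac{5}{6}$ and bound the overshoot by the value of the crossing item, together with the pigeonhole fact that at most four items exceed $\frac{1}{3}$ (the paper's constraint $x_5\le\frac{1}{3}$). The difference is in the scaffolding. The paper never states the contiguous-interval reformulation of $NS_{2,2}$ explicitly; instead it supposes a counterexample and derives a chain of seven numbered item-value constraints ($x_2+x_3\le 1$, $x_2+x_3<\frac{5}{6}$, $x_4>\frac{1}{3}$, $x_2<\frac{1}{2}$, $x_1<\frac{2}{3}$, $x_1+x_2<\frac{5}{6}$, $x_5\le\frac{1}{3}$), each justified by ``otherwise some bundle works,'' and then closes with a single greedy fill starting at $e_3$. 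You instead make the target window $\left[\frac{5}{6},\frac{7}{6}\right]$ explicit and run an exhaustive case analysis on where prefix sweeps (from indices $1$, $2$, $3$) cross it, replacing several of the paper's constraints by half-assignment arguments in the balanced MMS partition (e.g., $e_1,e_2$ lie in opposite halves when $e_1+e_2>1$, which yields your contradictions in the $s=3$ and $s=4$ subcases; your derivation of $e_1<\frac{2}{3}$ from $e_1$ sharing its half with a large item parallels the paper's constraint~5). Your framing buys a more transparent completeness check -- every branch visibly lands an interval in the window -- while the paper's constraint-chain format is the template it reuses for $\rho_{3,1}$ (\cref{lem:rho31}) and later mechanizes in the MILP analysis of $\rho_{4,3}$. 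Two small remarks: ruling out $t=1$ in your first sweep needs the (immediate) observation that every item has value at most~$1$ under the normalization, which you leave implicit; and your half-assignment argument in the $s=3$ subcase can be shortcut by the paper's simpler top-three pigeonhole, which gives $e_2+e_3\le 1<\frac{7}{6}$ directly.
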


\begin{proof}
The example with item values $4,3,2,2,1$ shows that $\rho_{2,2} \le \frac{5}{6}$. Hence it remains to show that $\rho_{2,2} \ge \frac{5}{6}$. 

For every item $j$, we denote $v(e_j)$ by $x_j$, for brevity. If there is an example with $\rho_{2,2} < \frac{5}{6}$ (and with an MMS partition with two bundles of value exactly~1), then it must satisfy the following constraints.

\begin{enumerate}
    \item $x_2 + x_3 \le 1$. Otherwise the MMS partition has a bundle of value more than~1 {(as one of the parts contains two of the three items of highest value)}.
    \item $x_2 + x_3 < \frac{5}{6}$. Otherwise choose $B_2 = \{e_2, e_3\}$ {and we have that $v(B_1)\geq 1$ as $x_2 + x_3 \le 1$.}
    \item $x_4 > \frac{1}{3}$. Otherwise start filling up $B_2$ from $e_2$, until the first point in time in which it reaches or exceeds $\frac{5}{6}$ (and value of at least $\frac{5}{6}$ remains for $B_1$).
    \item $x_2 < \frac{1}{2}$. Otherwise {$x_4\leq x_3 < \frac{1}{3}$}.
    \item $x_1 < \frac{2}{3}$. Otherwise the MMS partition has a bundle of value more than~1 
    {(either $\{e_1\}$ is one of the MMS bundles, and the other bundle has value at least $3x_4 > 1$, or one of the bundles has value at least $x_1+x_4>\frac{2}{3}+\frac{1}{3} = 1$).}
    \item $x_1 + x_2 < \frac{5}{6}$. Otherwise choose $B_1 = \{e_1, e_2\}$, and value of at least $\frac{5}{6}$ remains for $B_2$, {as $x_1 + x_2 < \frac{2}{3}+ \frac{1}{2} =\frac{7}{6}$}.
    \item $x_5 \le \frac{1}{3}$. Otherwise the MMS partition has a bundle of value more than~1 {(one of the MMS bundles has at least three of the five items of highest value, with total value at least $3x_5$).}
\end{enumerate}

Given the above constraints, we start filling up $B_2$ from {$e_3$}, until the first point in time in which it reaches or exceeds $\frac{5}{6}$. At that point $B_2$ must include $e_5$ 
(as $x_3 + x_4 \le x_2 + x_3 < \frac{5}{6}$), and as $x_5 \le \frac{1}{3}$, the value of $B_2$ is at most $\frac{7}{6}$. A value of at least $\frac{5}{6}$ remains for $B_1$, as desired.
\end{proof}
}

{
\begin{lemma}
\label{lem:rho31}
Let $v$ be an additive valuation for which either $v(\{e_2, e_3\}) < \frac{4}{5}\cdot MMS_3(v)$ or $v(\{e_2, e_3\}) > MMS_3(v)$. 
Then $\rho_{3,1}(v) \ge \frac{4}{5}$.
\end{lemma}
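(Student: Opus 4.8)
The plan is to adopt the normalization used in the other lower-bound proofs: assume $MMS_3(v)=1$ with an MMS partition into three bundles of value exactly $1$, so $v(\items)=3$, and order the items so that $x_1\ge x_2\ge\cdots\ge x_m$ with $x_j=v(e_j)$. I would work directly from the shape of an $NS_{3,1}$ partition. Since $q=1$ forces the first $n-1=2$ prefix parts to be the singletons $\{e_1\},\{e_2\}$, unwinding the nesting shows that every $NS_{3,1}$ partition is determined by two cut points $3\le a\le b\le m$, with $B_3=\{e_3,\dots,e_a\}$ (a top block, no seed), $B_2=\{e_2\}\cup\{e_{a+1},\dots,e_b\}$ (seed $e_2$ plus a middle block), and $B_1=\{e_1\}\cup\{e_{b+1},\dots,e_m\}$ (seed $e_1$ plus the bottom tail). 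It thus suffices to exhibit, in each hypothesis case, cuts $a,b$ making all three bundles worth at least $\frac45$. A useful global identity guides everything: since $v(B_1)+v(B_2)+v(B_3)=3$ and the three targets sum to $\frac{12}{5}$, the total overshoot $\sum_i\big(v(B_i)-\frac45\big)$ equals $\frac35$ for \emph{every} partition, so the only real task is to choose the cuts so that no single bundle is pushed below $\frac45$. The canonical construction I would use is to fill $B_3$ from $e_3$ downward until $v(B_3)\ge\frac45$, and then split the remaining items together with the seeds $e_1,e_2$ between $B_2$ and $B_1$ exactly as in the two-agent argument of Lemma~\ref{lem:rho21}.

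For the case $x_2+x_3>1$, because $x_1\ge x_2\ge x_3$ every pair among the top three items has value exceeding $1=MMS_3(v)$, so no MMS bundle can contain two of $e_1,e_2,e_3$; each sits alone-with-fillers in a distinct unit bundle. This yields the key bounds $x_j\le1$ for all $j$ and $x_2>\frac12$. With $x_1\ge x_2>\frac12$ the two seeded bundles already start above half of their own target, so $B_2$ and $B_1$ require only a small top-up; feeding these bounds into the canonical construction and estimating the leftover mass $3-v(B_3)$ gives $v(B_2),v(B_1)\ge\frac45$ as long as the items sitting at the two cuts are at most $\frac{3}{10}$.

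For the case $x_2+x_3<\frac45$, first note $2x_3\le x_2+x_3<\frac45$ forces $x_3<\frac25$, so every item from $e_3$ on has value below $\frac25$. I would dispose of the easy sub-case $x_1\ge\frac45$ by taking $B_1=\{e_1\}$ and filling $B_3$ to $\ge\frac45$: then $v(B_2)=3-x_1-v(B_3)>3-1-\frac65=\frac45$, using $x_1\le1$ and the overshoot bound $v(B_3)<\frac45+x_3<\frac65$. When $x_1<\frac45$ all items are moderate, and the same fill-and-split succeeds once the two cut items are at most $\frac{3}{10}$, by the same estimate as in the first case.

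The main obstacle, common to both cases, is overshoot control in the presence of ``medium'' items, those of value in $\left(\frac{3}{10},\frac25\right)$: one such item at a cut can make a greedily filled block overshoot by nearly $\frac25$, and two such overshoots would exceed the total slack $\frac35$. Worked examples confirm the statement is nonetheless true, the point being that medium items are scarce and cannot all pile up at the cuts. My plan to close this gap is to let $P$ be the short prefix of items exceeding $\frac{3}{10}$ (a unit MMS bundle holds at most three of them, so $|P|\le 9$) and to force both cut points into the small-item region, running a bounded case analysis on how the few items of $P$ beyond $e_1,e_2$ are distributed between $B_3$ and the middle block of $B_2$. Once the boundary items are at most $\frac{3}{10}$, each top-up overshoot is below $\frac{3}{10}$ and their sum stays within the $\frac35$ budget, so all three bundles reach $\frac45$. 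I expect this medium-item bookkeeping to be the delicate part, whereas the reduction to the seeded two-agent split and the item-size bounds extracted from $MMS_3(v)=1$ should be routine.
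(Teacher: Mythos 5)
There is a genuine gap, and you flag it yourself: the ``medium-item bookkeeping'' you defer at the end is not a routine loose end but the entire content of the lemma. Your overshoot-budget accounting (total slack $\frac{3}{5}$) is exactly the mechanism the paper uses, but the paper never needs your unprovable sufficient condition that both cut items be at most $\frac{3}{10}$ (the boundary item of a greedily filled $B_3$ can be as large as nearly $\frac{2}{5}$, since only $x_3 < \frac{2}{5}$ is available in the low case). Instead, the paper derives a chain of constraints each justified by exhibiting an alternative good partition when it fails: $x_3+x_4 < \frac{4}{5}$ (else take $B_3=\{e_3,e_4\}$ and reduce to $NS_{2,1}$), $x_3+x_4+x_5 \ge \frac{4}{5}$ (else $x_5 \le \frac{4}{15}$ and the greedy fill already succeeds), forcing $B_3=\{e_3,e_4,e_5\}$ with $v(B_3)\le\frac{6}{5}$, and $x_2+x_6 < \frac{4}{5}$, forcing $\{e_2,e_6,e_7\}\subseteq B_2$. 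These bound each bundle's overshoot by its own boundary item; in the case $x_2+x_3\ge 1$ the combined overshoot is then at most $(\frac{4}{5}-x_3)+(\frac{4}{5}-x_2)\le \frac{3}{5}$, exactly the budget, and in the case $x_2+x_3\le\frac{4}{5}$ a final split on whether $B_2=\{e_2,e_6,e_7\}$ or $|B_2|\ge 4$ (where the item closing $B_2$ has value at most $\frac{1}{2}(\frac{4}{5}-x_2)$) finishes the count. None of this is present in your proposal, and your planned case analysis over at most nine items above $\frac{3}{10}$ would still need equivalents of these constraints to control the cuts.

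Separately, one of your tools is unsound as stated. After filling $B_3$ you propose to split the residual ``exactly as in the two-agent argument of \cref{lem:rho21}.'' Structurally this is fine (an $NS_{2,1}$ partition of the residual does extend to an $NS_{3,1}$ partition of the original instance), but invoking $\rho_{2,1}\ge\frac{4}{5}$ requires that the residual two-agent MMS is still at least~$1$, and that can fail once $B_3$ contains three or more items: removing three items of total value at most~$1$, one from each MMS bundle, can leave items of values $0.7, 0.7, 0.6$, whose best two-way split has minimum value $0.7 < 1$; your greedy $B_3$ can moreover have value up to $\frac{6}{5}$. The two-item removal fact (which the paper does use, e.g.\ to justify $x_3+x_4<\frac{4}{5}$ via $x_3+x_4\le 1$) does not extend to three items. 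The paper confronts exactly this obstruction: it proves $x_2+x_6\le 1$, because when that fails $B_3=\{e_3,e_4,e_5\}$ must lie inside a single MMS bundle and only then is the residual reduction legitimate; in all remaining configurations it abandons the reduction and argues by direct value counting. So to complete your proof you need both the constraint-derivation machinery and a replacement for the invalid residual-MMS step; the correct easy sub-case $x_1\ge\frac{4}{5}$ and the correct description of the $NS_{3,1}$ cut structure do not carry the argument on their own.
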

}

\begin{proof}
For an additive valuation function $v$ and $n=3$, we write constraints that need to hold if the value of the MMS is~1 (with an MMS partition in which each bundle has value exactly~1, {which is without loss of generality}), {but the value of $NS_{3,1}$ is strictly less than $\frac{4}{5}$.} 
The variables in our constraints are $x_i$ for $i \ge 1$, where $x_i$ denotes the value of item~$e_i$ (namely, $x_i = v(e_i)$). 

\begin{enumerate}
    \item $\sum_i x_i = 3$. {This holds as every bundle in the MMS has value 1, and $n=3$.}
    \item $x_1 < \frac{4}{5}$. Otherwise, take $B_1 = \{e_1\}$, and for the remaining items $NS_{2,2}$ applies, with value $NS_{2,2} \ge NS_{2,1} \ge \frac{4}{5}$. 
    \item $x_3 + x_4 < \frac{4}{5}$. Otherwise, take $B_3 = \{e_3, e_4\}$, and for the remaining items use $NS_{2,1} \ge \frac{4}{5}$. (We use the fact that by removing $\{e_3, e_4\}$ and reducing $n$ to~2, the MMS of the resulting instance remains at least~1 {(because at least two of the top four items are in the same bundle in the original MMS partition, and $e_3$ and $e_4$ have lowest value among these four items)}. 
    
    We infer than $x_4 < \frac{2}{5}$.
    \item $x_5 + x_6 + x_7 \le  1$. Otherwise the MMS partition contains a bundle of value larger than~1. (This follows because some bundle in the MMS partition contains at least three items out of the first seven items, and $e_5, e_6, e_7$ are the items of smallest value among these seven items.)  
    \item $x_3 + x_4 + x_5 \ge \frac{4}{5}$. Otherwise $x_5 \le \frac{4}{15}$. Then, after putting $\{e_3,e_4,e_5\}$ in $B_3$, if we close each of $B_3$ and $B_2$ at the item at which it reaches or exceeds $\frac{4}{5}$,  none of these bundles has value larger than $\frac{16}{15}$ ({recall that $x_2\leq x_1 \le \frac{4}{5}$}), and a value of at least $\frac{13}{15} > \frac{4}{5}$ is left for $B_1$. 
    
    We infer that $B_3 = \{e_3,e_4,e_5\}$
    and $v(B_3) \le \frac{6}{5}$.
       \item $x_2 + x_6 \le 1$.
       Otherwise (namely, if $x_2 + x_6 > 1$), it must be that $e_1$, $e_2$ and $\{e_3, \ldots, e_6\}$  are in separate bundles in the MMS partition (otherwise the MMS partition has a bundle of value strictly larger than~1). Consequently, $B_3$ (shown above to satisfy $B_3 = \{e_3,e_4,e_5\}$) would be  contained in a single bundle of the MMS partition. {As two full bundles of the MMS partition remain, the MMS (with $n=2$) for the remaining items remains at least~1.} Then, from the remaining items, $NS_{3,2}$ can make two bundles of value at least $\frac{4}{5}$, as in the case $NS_{2,1}$. 
    \item $x_2 + x_6 < \frac{4}{5}$.
    Otherwise, make $B_2 = \{e_2, e_6\}$, and as $v(B_2) \le 1$ {(by the constraint $x_2 + x_6 \le 1$)} and we have the constraint $v(B_3) \le \frac{6}{5}$, a value of at least $\frac{4}{5}$ is left for $B_1$. 
    
    We infer that $\{e_2, e_6, e_7\} \subseteq B_2$.

\end{enumerate}

{
If any of the above constraints are violated, then the value of $NS_{3,1}$ is at least $\frac{4}{5}$. Hence we assume that all the above constraints hold. Now we consider two cases, based on the allowed values of {$v(\{e_2, e_3\})= x_2+x_3$.}
}

\begin{enumerate}
    
\item If $x_2 + x_3 \ge 1$, then recall that the last item to enter $B_3$ is $e_5$, and the last item to enter $B_2$ has value not larger than $x_7$. Using $x_5 \le x_4 \le \frac{4}{5} - x_3$ and $x_7 \le x_6 \le \frac{4}{5} - x_2$ we have that the combined value of $B_3$ and $B_2$ is at most $(\frac{4}{5} + \frac{4}{5} - x_3) + (\frac{4}{5} + \frac{4}{5} - x_2) \le \frac{16}{5} - 1 = \frac{11}{5}$.
This leaves for $B_1$ a value of $\frac{4}{5}$.  

\item If $x_2 + x_3 \le \frac{4}{5}$ then $x_3 \le \frac{2}{5}$. If $B_2 = \{e_2, e_6, e_7\}$ then the value of $B_2$ and $B_3$ combined is at most $\frac{11}{5}$ (as $x_5 + x_6 + x_7 \le  1$ and $x_4 \le x_3 \le \frac{2}{5}$), and $\frac{4}{5}$ value remains for $B_1$. If $B_2$ has at least four items, then $B_3$ has value at most $3x_3$ and $B_2$ has value at most $\frac{4}{5} + \frac{1}{2}(\frac{4}{5} - x_2)$ (this is because the two items added to $B_2$ after $e_2$ did not give it a value of $\frac{4}{5}$, implying that the {third item of $B_2$ had value at most $\frac{1}{2}(\frac{4}{5} - x_2)$, and hence so did the item that caused $B_2$ to reach or exceed $\frac{4}{5}$}), for a total of at most $\frac{6}{5} - \frac{1}{2}x_2 + 3x_3 \le \frac{6}{5} + \frac{5}{2}x_3 \le \frac{11}{5}$. Hence again, $\frac{4}{5}$ value remains for $B_1$.

\end{enumerate}
\end{proof}

\begin{lemma}
\label{lem:rho32}
$\rho_{3,2} = \frac{4}{5}$ and thus $\rho_{3,3} \geq \frac{4}{5}$.
\end{lemma}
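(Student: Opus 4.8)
The plan is to prove both inequalities, $\rho_{3,2}\le \frac{4}{5}$ and $\rho_{3,2}\ge \frac{4}{5}$, and then read off $\rho_{3,3}\ge \frac 45$ for free. For the upper bound I would exhibit a single additive valuation on which $NS_{3,2}$ loses a factor $\frac 45$ against the MMS. A convenient choice is the item values $3,3,2,2,2,2,1$: here $MMS_3(v)=5$ (witnessed by $\{3,2\},\{3,2\},\{2,2,1\}$, and optimal since the total is $15$), whereas $NS_{3,2}(v)=4$. The point is that in any member of $\FF_{3,2}$ the bundle $B_3$ is a contiguous block of the sorted order, $B_1$ consists of $e_1$ together with a suffix of smallest items, and $B_2$ surrounds $B_3$; checking the few contiguous blocks of value $5$ shows that no member of $\FF_{3,2}$ makes all three bundles equal to $5$ (the MMS partition uses the non-contiguous pair $\{e_1,e_4\}$), while value $4$ is easily attained. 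This gives $\rho_{3,2}(v)=\frac 45$ for this $v$, hence $\rho_{3,2}\le \frac 45$.

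For the lower bound I would, using the stated conventions, normalize so that $MMS_3(v)=1$ with all three MMS bundles of value exactly $1$ (so $\sum_i x_i=3$, where $x_i=v(e_i)$ and items are sorted decreasingly). Since $\FF_{3,1}\subseteq\FF_{3,2}$ gives $NS_{3,2}(v)\ge NS_{3,1}(v)$, Lemma~\ref{lem:rho31} already yields $NS_{3,2}(v)\ge \frac 45$ whenever $x_2+x_3=v(\{e_2,e_3\})$ lies outside $[\tfrac 45,1]$. So the entire content is the \emph{middle regime} $\frac 45\le x_2+x_3\le 1$, where I would exploit the freedom that $q=2$ provides, namely the ability to place $e_2$ and $e_3$ together. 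Concretely, set $B_2=\{e_2,e_3\}$; since $v(B_2)=x_2+x_3\ge \frac 45$ this bundle is already acceptable, and it remains to split $R:=\items\setminus\{e_2,e_3\}$ into the other two bundles, where the nested structure forces one part ($B_1$) to be $e_1$ together with the smallest items and the other ($B_3$) to be the complementary top block of $R$.

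The key observation is that this induced split of $R$ between two agents is exactly an $NS_{2,1}$ partition of the instance $R$: in $R$ the largest item is $e_1$, and $NS_{2,1}$ forces $e_1$ (the singleton prefix) plus a bottom suffix into one bundle and a top block into the other. Hence, by Lemma~\ref{lem:rho21} (which gives $\rho_{2,1}=\frac 45$), the best such split has value at least $\frac 45\,MMS_2(R)$. I would then prove the sub-claim $MMS_2(R)\ge 1$: take the global MMS partition $C_1,C_2,C_3$, each of value $1$; since $\{e_2,e_3\}$ meets at most two of these bundles, at least one $C_i$ is disjoint from $\{e_2,e_3\}$ and lies entirely in $R$, so putting $C_i$ on one side (value $1$) and the rest of $R$ on the other (value $v(R)-1=2-(x_2+x_3)\ge 1$) shows $MMS_2(R)\ge 1$. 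Combining, the $NS_{2,1}$-optimal split gives $B_1,B_3$ each of value at least $\frac 45$; together with $B_2$ this is a member of $\FF_{3,2}$ all of whose bundles have value $\ge \frac 45$, so $NS_{3,2}(v)\ge \frac 45$. This closes the middle regime and gives $\rho_{3,2}=\frac 45$. Finally $\rho_{3,3}\ge \rho_{3,2}=\frac 45$ follows from $\FF_{3,2}\subseteq\FF_{3,3}$, which yields $NS_{3,3}(v)\ge NS_{3,2}(v)$ for every $v$.

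The step I expect to be the main obstacle is verifying cleanly that the induced split of $R$ genuinely coincides with a legal $NS_{2,1}$ partition and that $B_2=\{e_2,e_3\}$ together with this split sits inside $\FF_{3,2}$ (matching the description ``$Z_1=\{e_1\}$, $Z_2=\{e_2,e_3\}$, $Z_3=$ top of $R$'s prefix, empty middle suffix'' against the $NS_{2,1}$ prefix/suffix decomposition of $R$). The subset-sum-flavoured sub-claim $MMS_2(R)\ge 1$ is where one must use the existence of an exact value-$1$ MMS bundle disjoint from $\{e_2,e_3\}$; this is the crux that makes the reduction to the two-agent bound, and ultimately the whole argument, go through.
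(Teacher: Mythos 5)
Your proposal is correct and takes essentially the same route as the paper's proof: the same instance $3,3,2,2,2,2,1$ for the upper bound, the same appeal to Lemma~\ref{lem:rho31} together with $NS_{3,2}\ge NS_{3,1}$ outside the regime $\frac{4}{5}\le v(\{e_2,e_3\})\le 1$, and inside that regime the same construction $B_2=\{e_2,e_3\}$ combined with an $NS_{2,1}$ partition of the remaining items via Lemma~\ref{lem:rho21}, justified by the fact that removing two items of total value at most the MMS while dropping one agent does not decrease the MMS. Your phrasing of that last sub-claim (an MMS bundle disjoint from $\{e_2,e_3\}$ plus the value of the rest of $R$) and your explicit check that the composed partition lies in $\FF_{3,2}$ are just slightly more detailed bookkeeping of the same argument.
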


\begin{proof}
The instance with item values $3,3,2,2,2,2,1$ shows that $\rho_{3,2} \le \frac{4}{5}$.
It remains to show that $\rho_{3,2} \ge \frac{4}{5}$. 

{If it does not hold that $\frac{4}{5} \le v(\{e_2,e_3\} \le 1$, then $\rho_{3,1} \ge \frac{4}{5}$ by \cref{lem:rho31},  and as $\rho_{3,2} \ge \rho_{3,1}$ we conclude that $\rho_{3,2} \ge \frac{4}{5}$.
If, on the other hand,  $\frac{4}{5} \le v(\{e_2,e_3\} \le 1$,
we can form the bundle $B_2=\{e_2, e_3\}$ in the $NS_{3,2}$ partition. As removing any two items of total value at most~1 and decreasing $n$ by~1 does not reduce the MMS (if they were in different bundles in the original MMS partition, then the union of what remains from these bundles can be used as a bundle in the new MMS partition), and as any $NS_{2,1}$ partition for the remaining items gives (together with $B_2$) an $NS_{3,2}$ partition for the original instance, the bound $\rho_{3,2} \ge \frac{4}{5}$ follows from Lemma~\ref{lem:rho21} which shows that $\rho_{2,1} = \frac{4}{5}$.
}
\end{proof}

\subsubsection{Computer assisted analysis of $\rho_{4,3}$}\label{sec:computer}

{In this section we prove the following bound.}

\begin{lemma}\label{lem:rho43-ub}
$\rho_{4,3} = \frac{4}{5}$.
\end{lemma}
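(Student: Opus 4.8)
The plan is to prove the two inequalities $\rho_{4,3}\le\frac45$ and $\rho_{4,3}\ge\frac45$ separately. For the upper bound I would reuse the instance of Example~\ref{example:NS}, with four agents and item values $(3,3,2,2,2,2,2,2,1,1)$: there $MMS_4(v)=5$ while $NS_{4,3}(v)=4$ (that example computes the nested share to be $4$ for every $q\in\{1,2,3,4\}$), so $\rho_{4,3}(v)=\frac45$ and hence $\rho_{4,3}\le\frac45$.

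For the lower bound I would normalize so that $MMS_4(v)=1$ with an MMS partition whose four bundles each have value exactly $1$ (without loss of generality, since $NS_{4,3}$ is an ordinal maximin share and hence monotone), sort the items with $x_i=v(e_i)$ decreasing, so that $\sum_i x_i=4$, and split on the value of $x_1$. If $x_1\ge\frac45$, take the forced singleton $Z_1=\{e_1\}$ as a bundle $B_1$ of value at least $\frac45$; removing $e_1$ and reducing to three agents leaves an instance of MMS at least $1$ (the three MMS bundles avoiding $e_1$ survive), and an $NS_{3,3}$ partition of the remaining items, which by \cref{lem:rho32} has every bundle of value at least $\frac45$, combines with $B_1$ (setting the suffix piece $S_1=\emptyset$) into a legal $\FF_{4,3}$ partition. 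This already gives $NS_{4,3}(v)\ge\frac45$ in this case.

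The remaining case $x_1<\frac45$ is the hard one and is where computer assistance enters. Mirroring the constraint-based arguments of Lemmas~\ref{lem:rho31} and~\ref{lem:rho32}, I would suppose for contradiction that every partition in $\FF_{4,3}$ leaves a bundle of value below $\frac45$, and record the resulting linear inequalities on the sorted value vector: the equalities from each MMS bundle having value $1$, the ordering constraints $x_1\ge x_2\ge\cdots$, the bound $x_1<\frac45$, and, for every admissible $NS_{4,3}$ partition shape, the statement that its least valuable bundle has value $<\frac45$. Since all bundle values sum to $4$, and since any sufficiently small item can always be absorbed by greedily extending the suffix pieces of the $NS_{4,3}$ bundles up to the threshold $\frac45$, one can bound the number and values of the "large" items and thereby reduce to finitely many combinatorial types. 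A type fixes which items lie in which MMS bundle together with the shape of the candidate partitions in $\FF_{4,3}$ (the prefix cut $k$, the forced singleton $Z_1=\{e_1\}$, the consecutive cuts defining $Z_2,Z_3,Z_4$, and the suffix cuts). For each type the program solves the linear program minimizing $\max_{P\in\FF_{4,3}}\min_j v(B_j)$ subject to the above constraints—equivalently, it certifies infeasibility of the system asserting $NS_{4,3}(v)<\frac45$—and verifies that the optimum is always at least $\frac45$. This essential use of $q=3$, namely that the partition need not place $e_1,\dots,e_4$ in separate bins, matches the remark following \cref{thm:NS2}, and the polynomial-time evaluation of the candidate partitions relies on \cref{lem:NS-poly-time}.

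I expect the main obstacle to be the finiteness reduction together with the completeness of the enumeration: one must argue rigorously that it suffices to consider a bounded number of items (so that small items are genuinely absorbed by greedy suffix-filling with no loss) and then enumerate every admissible MMS-partition and $\FF_{4,3}$-partition pattern exhaustively. The number of resulting cases is large enough to force a computer-assisted check (see \cref{sec:computer}), and the delicate point is guaranteeing that the case list is complete, so that no worst-case value vector with ratio below $\frac45$ is missed; this exhaustiveness, rather than any single calculation, is what the machine verification secures.
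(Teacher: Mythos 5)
Your upper bound and your treatment of the case $x_1 \ge \frac{4}{5}$ coincide with the paper's (Example~\ref{example:NS} for $\rho_{4,3}\le\frac{4}{5}$; the singleton $B_1=\{e_1\}$ plus an $NS_{3,3}$ partition justified by $\rho_{3,3}\ge\frac{4}{5}$ from \cref{lem:rho32} is exactly the paper's constraint b1), and your overall plan for the hard case --- encode a hypothetical counterexample by linear constraints on the sorted value vector and let a machine certify that no such vector exists --- is the same architecture as the MILP of \cref{sec:computer}. The genuine gap is the finiteness reduction, which you correctly identify as the crux but do not actually supply. The claim that ``any sufficiently small item can always be absorbed by greedily extending the suffix pieces up to the threshold $\frac{4}{5}$'' is not a proof: deleting or absorbing small items perturbs the MMS-side constraints (the requirement that each MMS bundle have value exactly~$1$, and all the prefix-sum inequalities it implies) and changes which nested partition shapes exist, so it does not manifestly preserve the property of being a counterexample. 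The paper's \cref{lem:14} does something more delicate: whenever $m>14$ it merges $e_1$ with $e_m$ and re-verifies every constraint that mentions $e_1$ --- the ordering constraint, the total-value constraint, the bound $x_1\le 5$ (via a surplus argument that uses $x_6+x_7+x_8\ge 4$ to cap the total surplus), and the bound $x_1\le 4$ (by forming $B_1=\{e_1,e_m\}$ and again invoking $\rho_{3,3}\ge\frac{4}{5}$) --- and observes that every $NS_{4,q}$ partition of the merged instance induces one of the same value for the original. Without an argument of this kind, your computer search ranges over an unbounded family of instances and certifies nothing.

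A second, architectural point: the ``completeness of the enumeration'' that worries you is a burden the paper deliberately avoids. The paper does not enumerate all MMS-bundle assignments or all $\FF_{4,3}$ shapes; it writes a single MILP whose constraints are merely \emph{necessary} for a counterexample (groups a, c, d, f follow from the existence of the value-$5$ MMS partition; group b may be imposed because its violation already yields $NS_{4,3}\ge 4$ by hand reductions to $NS_{3,3}$, $NS_{3,1}$, $NS_{2,1}$; the p/n constraints hold for any counterexample with $z=NS_{4,3}(v)<4$, for a \emph{curated} subset of partitions). Soundness of each constraint plus the solver's optimum $z^\ast=4$ then rules out every counterexample; exhaustiveness of the partition list is never needed, and indeed the paper remarks that its n-constraints miss the optimal partition for the extremal instance $4,4,2,2,2,2,1,1,1,1$, which is harmless because the constraints only force $z\le NS_{4,3}$. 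Your per-type scheme (fixing the MMS assignment and which bundle is minimal in each shape, then certifying LP infeasibility) is workable in principle once the $14$-item bound is in place --- and note that ``minimize $\max_P\min_j$'' is itself not an LP, so some such disjunctive encoding is forced --- but it shoulders exactly the completeness obligation you flag, which the paper's one-sided relaxation discharges automatically.
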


The inequality $\rho_{4,3} \le \frac{4}{5}$ is proved by~\cref{example:NS}, {thus we only need to show that $\rho_{4,3} \ge \frac{4}{5}$}.

Our proof {that $\rho_{4,3} \ge \frac{4}{5}$} is based on constructing  {a \em{mixed integer linear program} (MILP)} that searches for {a valuation $v$}
that minimizes the ratio between $NS_{4,3}$ and the MMS (with $n=4$) for $v$. 
As we shall prove {(see \cref{lem:14})}, it suffices to consider instances with up to~14 items.  

The MILP searches for an instance with an MMS partition into four bundles, each of value exactly~5. It attempts to minimize $z$, which represents the value of $NS_{4,3}$. It contains the variables $x_1 \ge \ldots, x_{14} \ge 0$ for the values of the items. 
It also contains binary decision variables $y_j$. 
We write the constraints (and comments explaining them) in the syntax of the online optimizer appspot (https://online-optimizer.appspot.com/), so that readers can simply run the program. {(A text file containing the full program can be obtained from the authors upon request.)}

\begin{verbatim}
    
var x1 >= 0;
...
var x14 >= 0;

var z >= 0;

var y1>=0, binary;
...
var y45>=0, binary;

minimize ns:     z;

subject to

o1: x1 >= x2;
...
o13: x13 >= x14;

\end{verbatim}

The MMS partition imposes the following constraints. Constraint a0 follows from having an MMS partition with four disjoint bundles, each of value~5. 
The other constrain express the fact that for every $k\ge 0$ and $1 \le j \le 3$, there are $j$ bundles in the MMS partition that together contain $jk + j$ of the first $4k + j$ items, {and that the value of these $jk + j$ items (even if these are the least valuable) must be at most $5j$}.

\begin{verbatim}

a0: x1 + x2 + x3 + x4 + x5 + x6 + x7 + x8 + x9 + x10 + x11 + x12 + x13 + x14 = 20; 
a1: x1 <= 5; /* k=0 and j=1 */
a2: x4 + x5 <= 5; /* k=1 and j=1 */
a3: x3 + x4 + x5 + x6 <= 10; /* k=1  and j=2 */
a4: x2 + x3 + x4 + x5 + x6 + x7 <=  15; /* k=1 and j=3 */
a5: x7 + x8 + x9 <= 5; /* k=2 and j=1 */
a6: x5 + x6 + x7 + x8 + x9 + x10 <= 10; /* k=2 and j=2 */
a7: x3 + x4 + x5 + x6 + x7 + x8 + x9 + x10 + x11 <= 15; /* k=2 and j=3 */
a8: x10 + x11 + x12 + x13 <= 5; /* k=3 and j=1 */
a9: x7 + x8 + x9 + x10 + x11 + x12 + x13 + x14 <= 10; /* k=3 and j=2 */

\end{verbatim}

For the next set of constraints, we show that if they do not hold then $z \ge 4$. We shall use the fact that whenever two items of total value not larger than the MMS are removed, the MMS for the remaining items (when $n$ is smaller by one) does not decrease. 

\begin{verbatim}

b1: x1 <= 4; 
b2: x4 + x5 <= 4; 
b3: x6 + x7 + x8 >= 4; 
b4: x7 >= 1; 
b5: x3 + x6 <= 5; 
b6: x2 + x9 <=  5; 

\end{verbatim}

The above constraints are justified as follows:

\begin{itemize}
    \item b1: otherwise choose $B_1 = \{e_1\}$. For the remaining items, take the best $NS_{3,3}$ partition. Then $z \ge 4$, because for the remaining items {$MMS_3(v)\ge 5$}, and $\rho_{3,3} \ge \frac{4}{5}$ {by \cref{lem:rho32}}. 
    \item b2:     {otherwise ({in fact, even if $x4 + x5 \ge 4$ rather than  $x4 + x5 > 4$}), there are two cases to consider. If $x2 + x3 \le 5$ choose $B_2 = \{e_2,e_3\}$ and $B_3 = \{e_4, e_5\}$ (note that $x2 + x3 \ge x4 + x5 \ge 4$), and for the remaining items reduce to $NS_{2,1}$. If $x2 + x3 > 5$, choose $B_4 = \{e_4,e_5\}$, and for the remaining items take the best $NS_{3,1}$ partition (in particular, this will enforce  $e_1\in B_1, e_2\in B_2, e_3\in B_3$). By Lemma~\ref{lem:rho31} (which is applicable because $x2 + x3 > 5$), every bundle in this partition will have value at least~4.}
    
    \item {b3: otherwise, consider an $NS_{4,1}$ partition in which we first put items $e_1, e_2, e_3, e_4$ in bins $B_1, B_2, B_3, B_4$ respectively, and then continue inserting items in decreasing order of value into the bins $B_4$ to $B_1$, where each bin becomes closed (and then we move to the next bin) once its value reaches or exceeds~4. 
    
    In the above $NS_{4,1}$ partition, the last item in $B_4$ is no earlier than $e_6$ {(due to b2, and the fact that its justification also implies that we may assume that $x4 + x5 < 4$)}. Consequently, if b3 does not hold, then the sum of values of last item in bins  
    ${B_2}, B_3, B_4$ is at most~4, and consequently their total value is at most $3\cdot 4 + 4 = 16$. Hence a value of at least~4 remains for $B_1$.}

    We remark that b3 also implies that $x_4 + x_5 + x_6 \ge 4$ (by monotonicity of item values), and together with b2 {(and its justification)}, we deduce that in an $NS_{4,1}$ partition one should take $B_4 = \{e_4, e_5, e_6\}$. 
   
    \item b4: otherwise, choose an $NS_{4,1}$ partition {(as explained in the justification for b3)} with $B_4 = \{e_4, e_5, e_6\}$. {As each of the remaining bins becomes closed once its value reaches or exceeds~4, the value in the bin will be less than~5 (if b4 does not hold. then the last item in each of these bins has value less than~1).} As $v(B_4) \le 6$, each of the remaining bins will indeed reach a value of at least~4.
    
    \item b5: otherwise, none of $e_1, e_2, e_3$ can be in the same bundle as one of $e_4, e_5, e_6$ in the MMS partition, and hence $e_4, e_5, e_6$ are in the same bundle in the MMS partition.  If $4 \le x2 + x3 \le 5$ choose $B_2 = \{e_2,e_3\}$ and $B_3 = \{e_4, e_5, e_6\}$, and for the remaining items reduce to $NS_{2,1}$. Else, choose $B_4 = \{e_4,e_5,e_6\}$, and for the remaining items reduce to $NS_{3,1}$ using Lemma~\ref{lem:rho31}. 
    
    \item b6: otherwise, {none of $e_1, e_2$ can be in the same bundle as one of $e_3, \ldots, e_9$ in the MMS partition, and consequently} $e_3, \ldots, e_9$ fit in two MMS bundles. Choose $B_4 = \{e_4, e_5, e_6\}$ and put $e_3, e_7, e_8, e_9$ in $B_3$. If by this $B_3$ has value at least~4, continue with an $NS_{2,1}$ partition for the remaining items. If this $B_3$ has value at most less than~4, then $x_9 < 1$, and then in the $NS_{4,1}$ partition {presented in the justification for b3}, only $B_4$ might have value above~5 (and no more than~6, by b2). Hence no bin has value less than~4 (because the sum of their values is~20). 
    
\end{itemize}

Each of the above constraints (in groups a and b) captures a condition that must hold independently of other constraints. We next present some other constraints that capture a {logical} \emph{OR} over several basic constraints, {implied by the assumption that there is an MMS partition into four bundles, each of value~5}. For example, one  set of constraints (group d below) captures the constraint that ``From the first {seven} items, either $x2+x7\leq 5$ (when every bundle has at most {two} items), or $x5+x6+x7\leq 5$ (when some bundle has {three} items)".
A basic constraint like ``$x2+x7\leq 5$" is captured by the constraint ``$x2+x7-20*y4\leq 5$", such that if the indicator $y4$ is $0$ then the basic constraint must hold (and if {$y4 = 1$} it might not). To {enforce} {a logical} \emph{OR} over $k$ basic constraints, {we require that} the sum of the $k$ indicators be at most $k-1$, so at least one indicator must be $0$ (this is captured by constraint numbered~0 in each group). 

We next present three groups of constraints (c, d and f), each corresponds to a {logical} \emph{OR}, each regarding the {values for some} prefix of items.

\begin{verbatim}

c1: x2 + x6 - 20*y1 <= 5;
c2: x5 + x6 + x9 - 20*y2 <= 5;
c3: x6 + x7 + x8 + x9 - 20*y3  <= 5;
c0: y1 + y2 + y3 <= 2;

d1: x2 + x7 - 20*y4 <= 5;
d2: x5 + x6 + x7 - 20*y5 <= 5;
d0: y4 + y5 <= 1;

f1: x2 + x8 - 20*y6 <= 5;
f2: x3 + x7 + x8 - 20*y7  <= 5;
f3: x5 + x6 + x7 + x8 - 20*y8  <= 5;
f0: y6 + y7 + y8 <= 2;

\end{verbatim}

The above constraints are justified as follows:

\begin{itemize}
    \item Group c: among the first {nine} items, the MMS partition either has a bundle of size at least {four} (and then c3 holds), or at least two bundles of size {three} (and then c2 holds{: at least out of the first six items must be in the same bundle of size {three}, together with one more item of value at least $x_9$}), or one bundle of size three {and three pairs}, and then either c2 holds (if only one of $e7, e8, e9$ is in a bundle of size {three}) {or c1 holds (if at least two of $e7, e8, e9$ are in a bundle of size three}, then at least one of $e_1, e_2$ is not in a bundle with any of the items $e7, e8, e9$). 
    
    \item Group d: among the first {seven} items, the MMS partition either has a bundle of size at least {three} (and then d2 holds), or has three bundles of size {two} (and then d1 holds). 
    
    \item Group e: among the first {eight} items, the MMS partition either has a bundle of size {four} (and then f3 holds), or has two bundles of size {three} (and then f2 holds), or has at least three bundles of size {at least two} 
    (and then f1 holds). 
\end{itemize}

The above set of constraints can naturally be extended to corresponding constraints when there are $m>14$ items. {This involves the following modifications. 
{For every $i$ such that $14 < i \le m$  we introduce a non-negative variable $x_i$ to denote the value of item $e_i$. } 
For every {$i$ such that} $14 \le i < m$ we add the constraint $o_i: x_i \ge x_{i+1}$. Finally, we replace the constraint $a_0$ by the constraint $\sum_{i=1}^m x_i=20$.}

We are now ready to explain why we can assume that the number of items is at most~14.

\begin{lemma}
\label{lem:14}
{If for $n=4$ there exists an additive valuation $v$ over $m>14$ items such that $NS_{4,3}(v) < 4$ and the constraints above (when extended to $m$ items) hold, then there exists an additive valuation $v'$ over $14$ items such that $NS_{4,3}(v') \le NS_{4,3}(v) < 4$ and the constraints above hold.}
\end{lemma}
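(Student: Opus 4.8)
The plan is to reduce any counterexample with $m>14$ items to one with exactly $14$ items by repeatedly \emph{merging} two items that lie in a common bundle of the (value-$5$) MMS partition. Fix a counterexample $v$: an additive valuation over $m>14$ items admitting an MMS partition $(B_1^{*},\dots,B_4^{*})$ with $v(B_t^{*})=5$ for every $t$, and with $NS_{4,3}(v)<4$. Since $m>14>4$, some bundle $B_t^{*}$ contains at least two items; let $e_a,e_b$ (with $v(e_a)\ge v(e_b)$) be two items in a common bundle and replace them by a single item $e^{\ast}$ of value $v(e_a)+v(e_b)$, obtaining $v'$ over $m-1$ items. Keeping $e^{\ast}$ inside $B_t^{*}$ yields a partition of $v'$ into four bundles of value $5$, so $MMS(v')\ge 5$; and since merging can never raise the maximin value (any partition of the merged instance is a partition of the original with both pieces of $e^{\ast}$ in one bundle), also $MMS(v')\le MMS(v)=5$, hence $MMS(v')=5$. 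Each merge strictly decreases the item count while preserving four value-$5$ bundles, and a mergeable pair exists whenever more than four items remain (otherwise the four singleton bundles would account for all items). So I iterate until exactly $14$ items are left.

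It then remains to verify the two facts that make the terminal $14$-item instance a genuine counterexample. First, once $NS_{4,3}(v')\le NS_{4,3}(v)<4$ is established, \emph{all} the listed constraints hold automatically for $v'$: the $a$-constraints are precisely the necessary conditions extracted from the existence of a value-$5$ MMS partition, which $v'$ has; the $b,c,d,f$ constraints are exactly the conditions shown earlier to be forced whenever $NS_{4,3}<4$; and the normalization $a_0$ (total value $20$ over $14$ items) holds because merging preserves the total value. Thus the entire substance of the lemma is the single-merge inequality $NS_{4,3}(v')\le NS_{4,3}(v)$.

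To prove that inequality I would use the defining family $\FF_{4,3}$ of nested consecutive partitions. Take an optimal partition $P'$ for $v'$, so $\min_j v'(P'_j)=NS_{4,3}(v')$, and let $B'_t$ be the bundle containing $e^{\ast}$. Splitting $e^{\ast}$ back into $e_a,e_b$ inside $B'_t$ produces a partition $P$ of $\items$ with identical bundle values, hence $\min_j v(P_j)=NS_{4,3}(v')$; if $P$ belongs to $\FF_{4,3}$ we are done, because then $NS_{4,3}(v)\ge\min_j v(P_j)=NS_{4,3}(v')$. Equivalently, viewing the merge as transferring the value $v(e_b)$ from the lower item $e_b$ onto the higher item $e_a$ (and deleting the resulting zero item, which does not affect an ordinal maximin share), the claim is the intuitive statement that moving value from a less valuable to a more valuable item only spreads the item vector out and so cannot increase a maximin value taken over $\FF_{4,3}$.

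The main obstacle is exactly that $P$ need not lie in $\FF_{4,3}$: merging changes the sorted order, since $e^{\ast}$ may have strictly higher rank than $e_a$ had, so that upon splitting, the pieces $e_a,e_b$ can land on opposite sides of items that $e^{\ast}$ had overtaken, destroying the consecutiveness (in the order of $v$) of the bundle $B'_t$. I would handle this by re-choosing the cut points of $P$ rather than copying those of $P'$: every item that $e^{\ast}$ overtook has value in the narrow band $[\,v(e_a),\,v(e_a)+v(e_b)\,)$, so these items are mutually interchangeable up to an additive error at most $v(e_b)=v(e^{\ast})-v(e_a)$, and one can slide the cuts of $P'$ through this band to recover a genuine nested consecutive partition of $v$ whose minimum bundle value is still at least $NS_{4,3}(v')$. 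Tracking how a single nested bundle (a prefix block together with a suffix block) absorbs the two split pieces is the one point that requires the full, careful argument; the rest is bookkeeping. Combining the single-merge inequality with the iteration yields the desired $14$-item counterexample, establishing the lemma and hence reducing the computer search for $\rho_{4,3}$ to instances with at most $14$ items.
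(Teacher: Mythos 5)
There is a genuine gap, and it sits exactly where you placed your bet: the single-merge inequality $NS_{4,3}(v') \le NS_{4,3}(v)$ for merging an \emph{arbitrary} pair of items lying in a common MMS bundle is not only unproven in your sketch -- the general principle you invoke (``moving value from a less valuable to a more valuable item only spreads the item vector out and so cannot increase a maximin value taken over the nested family'') is false for nested shares. The same family with parameters $n=2$, $q=1$ gives a counterexample: take $v$ with item values $3,2,2,2,1$, which has an MMS partition $\{3,2\}$, $\{2,2,1\}$ with both bundles of value $5$, and $NS_{2,1}(v)=4$ (in an $NS_{2,1}$ partition, $B_2$ is a consecutive block $\{e_2,\ldots,e_t\}$ and $B_1 = \{e_1\}\cup\{e_{t+1},\ldots,e_m\}$; checking all $t$ gives maximin $4$). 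Merging the items of values $2$ and $1$ \emph{inside the second MMS bundle} yields $v'$ with values $3,3,2,2$, where $B_2=\{e_2,e_3\}$, $B_1=\{e_1,e_4\}$ is a nested partition with both bundles of value $5$, so $NS_{2,1}(v')=5 > 4 = NS_{2,1}(v)$, while the MMS stays $5$. The mechanism is precisely the one you flagged as the obstacle: the merged item jumps in rank and makes new consecutive partitions available. Since here $NS(v') > NS(v)$, no ``cut-sliding'' repair recovering a nested partition of $v$ of value at least $NS(v')$ can exist; your band argument would in any case only bound the loss by an additive $v(e_b)$, which is exactly the slack the counterexample exploits. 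Nothing in your argument is specific to $(n,q)=(4,3)$ in a way that would rescue it.

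The paper's proof chooses the merged pair so that this failure cannot occur: it merges $e_1$ with $e_m$, the two extremes of the sorted order. In every $NS_{4,q}$ partition the first item lies in $Z_1 \subseteq B_1$ and the final suffix block $S_1$ also lies in $B_1$; hence any nested partition of the merged instance $I'$ splits back (place $e_1$ in $Z_1$, append $e_m$ to $S_1$) into a nested partition of the original instance $I$ with \emph{identical} bundle values, giving $NS_{4,3}(I') \le NS_{4,3}(I)$ with no sliding at all. The paper then only re-verifies the four constraints that mention $e_1$, namely $o_1$, $a_0$, $a_1$ and $b_1$ ($a_1$ via a surplus-counting argument when $x_m \ge 1$, and $b_1$ via forming $B_1=\{e_1,e_m\}$ and invoking $\rho_{3,3}\ge \frac{4}{5}$). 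A secondary mismatch: the lemma's hypothesis is that the MILP constraints hold, which, as the paper notes explicitly, does \emph{not} force the instance to possess a genuine value-$5$ MMS partition, whereas your proof assumes such a partition exists; that could be patched by restructuring the application, but the merge inequality cannot be -- to fix your proof you must restrict which pairs are merged, and the extreme pair $\{e_1,e_m\}$ is the choice for which the split-back argument goes through within the nested family.
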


\begin{proof}
The constraints of types a, c, d and f are implied by having an MMS partition in which every bundle has value~5, but do not force the instance to have such a partition. 
Suppose now that there is an instance {$I$} with $m \ge 15$ items that satisfies the above constraints {and for which $NS_{4,3} < 4$ (which implies that also all constraints of type b are satisfied)}. Then we claim that we can merge $e_1$ and $e_m$ to a new item {of value $v(\{e_1,e_m\})$}  
(replacing $e_1$), and the new instance {$I'$} will still satisfy all above constraints. {Applying the same argument recursively will result in an instance with $14$ items.}

The only constraints involving $e_1$ are o1, a0, a1 {and b1}.  {They are all assumed to hold in $I$, and we need to prove that they hold in $I'$.}

\begin{itemize}

    \item Constraint o1 holds {in $I'$}, as $x_1$ increased. 

    \item Constraint a0 (if $m > 14$, the interpretation is that the sum 
does not stop at $x_{14}$, but rather includes all items up to $x_m$) 
holds {in $I'$} because the value lost by {$e_m$} 
was added to $e_1$. 

\item As to constraint a1, we consider two cases. If $x_m \ge 1$, {then consider the {\em surplus} of items in $I$, where for every item $e_i$ its surplus is defined as $x_i - 1$ (note that the assumption $x_m \ge 1$ implies that the surplus is non-negative)}.  
If a1 is violated {in $I'$}, then $x_1 + x_m > 5$ {in $I$}, and the total surplus of $\{e_1, e_m\}$ is more than~3. 
Constraint b3 
implies that the total surplus of $\{e_6, e_7, e_8\}$ (and hence also of $\{e_3, e_4, e_5\}$) is at least $4 - 3 = 1$. {Thus the total surplus of items in $I$ is greater than~5, and consequently the sum of item values in $I$ is strictly larger than $15 \cdot 1 + 5 = 20$},  
contradicting {constraint a0 (total value of 20)}. {Hence a1 {holds in $I'$} if $x_m \ge 1$.}
It remains to consider the case that $x_m < 1$. In this case a1 follows because originally $x_1 \le 4$, by constraint b1.  

\item {Constraint b1 {holds in $I'$}, as otherwise in the original instance {$I$} we could form a bundle $B_1 = \{e_1, e_m\}$, and its value would be at least~4 but no more than~5 (as a1 holds {in $I'$}). For the remaining items, we can take the best $NS_{3,3}$ partition, creating bundles $B_2, B_3, B_4$, each of value at least~4 (because for the remaining items $MMS_3(v)\ge 5$, and $\rho_{3,3} \ge \frac{4}{5}$ {by \cref{lem:rho32}}). This contradicts the assumption that $NS_{4,3} < 4$ {for instance $I$}. }
    
\end{itemize}

For every $q$, every $NS_{4,q}$ partition of {$I'$} induces one of the same value for {$I$}, and so if $NS_{4,3} < 4$ for {$I$}, the same holds for {$I'$}.
\end{proof}

The remaining set of constraints (group n) consider various $NS_{4,3}$ partitions, and for each such partition binary decision variables are used to express the requirement that at least one of the bundles has value not larger than $z$ (a logical \emph{OR}). The constraints of type $p_{ij}$ identify various bundles that can be placed in bin $B_i$. The corresponding decision variable is indexed by the index of the constraint, for convenience. Each constraint of type $n_j$ considers one possible $NS_{4,3}$ partition. The index $j$ is a concatenation of the second indices of the $y$ variables in the constraint.

\begin{verbatim}
    
p41: x4 + x5 + x6 - 20*y41 <= z;
p42: x5 + x6 + x7 - 20*y42 <= z;
p43: x6 + x7 + x8 - 20*y43 <= z;
p44: x7 + x8 + x9 - 20*y44 <= z;
p45: x7 + x8 + x9 + x10 + x11 + x12 + x13 - 20*y45 <= z;

p31: x3 + x4 - 20*y31 <= z;
p32: x3 + x7 - 20*y32<= z;
p33: x3 + x7 + x8 - 20*y33 <= z;
p34: x4 + x5 + x9 - 20*y34 <= z;
p35: x4 + x5 + x9 + x10 - 20*y35 <= z;
p36: x4 + x5 + x6 - 20*y36 <= z; 
/* mathematically, y41 can serve instead of y36, but introducing y36 unifies notation */

p21: x2 + x3 - 20*y21 <= z;
p22: x2 + x3 + x10 - 20*y22 <= z;
p23: x2 + x8 - 20*y23 <= z;
p24: x2 + x8 + x9 - 20*y24 <= z;
p25: x2 + x9 - 20*y25 <= z;
p26: x2 + x9 + x10 - 20*y26 <= z;
p27: x2 + x3 + x11 + x12 - 20*y27 <= z;

p11: x1 + x9 + x10 + x11 + x12 + x13 + x14 - 20*y11 <= z;
p12: x1 + x10 + x11 + x12 + x13 + x14 - 20*y12 <= z; 
p13: x1 + x11 + x12 + x13 + x14 - 20*y13 <= z;
/* p14: x1 + x12 + x13 + x14 - 20*y14 <= z; */
p15: x1 + x13 + x14 - 20*y15 <= z;
p16: x1 + x14 - 20*y16 <= z;

n1312: y11 + y23 + y31 + y42 <= 3;
n1321: y11 + y23 + y32 + y41 <= 3;
n2164: y12 + y21 + y36 + y44 <= 3;  
n2412: y12 + y24 + y31 + y42 <= 3; 
n2531: y12 + y25 + y33 + y41 <= 3; 
n3243: y13 + y22 + y34 + y43 <= 3;  
n3631: y13 + y26 + y33 + y41 <= 3;
n5753: y15 + y27 + y35 + y43 <= 3; 
n6165: y16 + y21 + y36 + y45 <= 3; 

    
\end{verbatim}

{Solving this \em{mixed integer linear program} (MILP) gives a valuation $v$ 
that minimizes $z$, where the constraints enforce that $z\leq NS_{4,3}$ and that the MMS is at most $5$.}
{Running the MILP with all the above constraints produces the instance with value $z=4$ and item values $4,4,2,2,2,2,1,1,1,1$. 
{As the minimum value of the MILP is $z=4$, we conclude that $\rho_{4,3} = \frac{4}{5}$ as claimed. We remark that for this instance it holds that}
$NS_{4,3} = NS_{4,2} = 5$, but via an $NS_{4,3}$ partition not considered by any of our constraints of type n {(the constraints of the MILP force the inequality $z\leq NS_{4,3}$, but not necessarily as an equality)}. However, as we know of examples in which the MMS is~5 and $NS_{4,3}=4$, there is no point in strengthening the MILP by additional constraints -- its value will not increase beyond~4.}

\subsection{Tightness}\label{app:NS-tightness}

{We next show that the our analysis of $\rho$ for which the share $NS_{n,q}$ is $\rho$-dominating is tight up to low order terms.}

\begin{proposition}
\label{pro:NEWnegative}
For every $\epsilon > 0$ and fixed $q \ge 1$, there is sufficiently large $n$ and an additive valuation function for which the {$NS_{n,q}$ share} does not provide an agent more than $\frac{2}{3} + \epsilon$ times her MMS. 
\end{proposition}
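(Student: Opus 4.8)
The plan is to exhibit, for any given $\epsilon$ and fixed $q$, a large $n$ and an explicit additive valuation $v$ whose MMS is easy to pin down but for which \emph{every} partition in the family $\FF_{n,q}$ defining $NS_{n,q}$ contains a bundle of value at most $(\tfrac23+\epsilon)\,MMS_n(v)$. Everything rests on the structural rigidity of $\FF_{n,q}$: when the items are sorted in decreasing value, each bundle is a union of a prefix interval $Z_j$ and a suffix interval $S_j$, and the pairing is \emph{nested}, so the outermost bundle $B_1$ receives the top prefix interval together with the bottom suffix interval. Thus the very largest and the very smallest items are coupled into one bundle. Moreover, the only prefix chunks allowed to be non-singletons are the last $q$ of them, which sit at the \emph{bottom} of the prefix; this is the fact that will let me absorb the parameter $q$.

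First I would fix the construction: a three-tier instance generalizing the paper's tight examples $3,2,2,2,1$ (\cref{lem:rho21}) and $3,3,2,2,2,2,2,2,1,1$ (\cref{example:NS}). Take $\Theta(n)$ \emph{big} items of value $a$, $\Theta(n)$ \emph{medium} items of value $b$, and $\Theta(n)$ \emph{small} items of value $c$, with $a=b+c$ and $2b+c=V$, so the two ``efficient'' bundles are the pair $\{a,b\}$ and the triple $\{b,b,c\}$, both of value exactly $V$. Taking $p$ big items, $n-p$ small items and $2n-p$ medium items, the partition into $p$ pairs $\{a,b\}$ and $n-p$ triples $\{b,b,c\}$ gives every bundle value exactly $V$, whence $MMS_n(v)=V$ (the total value being $nV$). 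The crux is to choose the counts so that small items are \emph{scarce}: a big item is completed by a single medium in the MMS partition, but in a nested partition a big item can be completed only from the bottom of the suffix, i.e.\ by small items, and since $a+c<V$ it needs at least two of them --- more than the supply allows once $p$ is a sufficiently large fraction of $n$.

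Next I would prove the upper bound on $NS_{n,q}(v)$. The $p=\Theta(n)$ big items occupy the first $p\le n-q$ prefix chunks, all forced to be singletons, while the $q$ non-singleton chunks are the last $q$ and therefore lie among the medium items; hence the free chunks cannot pair a big with a medium, and every big-bundle has the form $\{a\}\cup S_j$ with $S_j$ taken from the bottom (small) part of the suffix. A counting/averaging argument then shows that no nested assignment of the scarce smalls can bring every big-bundle up to $V$, and by optimizing the tier values ($b\to V/3^{-}$, $c<b$) together with $n\to\infty$ the most balanced nested partition is left with a bundle of value $(\tfrac23+o(1))V$. This is exactly the greedy partition analyzed in \cref{lem:twothirdsapprox}, whose guarantee $\tfrac{2n}{3n-1}V$ the construction simultaneously shows to be essentially best possible. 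Since at most $q=O(1)$ bundles are affected by the free chunks, deleting them changes the minimum bundle value by $O(1/n)$, so the identical bound survives for every fixed $q$, giving $NS_{n,q}(v)\le(\tfrac23+\epsilon)\,MMS_n(v)$ for $n$ large.

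The main obstacle is precisely this last upper bound: one must rule out \emph{every} member of $\FF_{n,q}$ --- over all prefix lengths $k$, all consecutive suffix splits, and all placements of the $q$ non-singleton chunks --- not merely the greedy partition. I expect the cleanest route to be a weighting (potential) argument --- assign each item a charge so that the charges of any nested partition force a small bundle --- rather than a brittle case analysis; the $q$-dependence is then controlled by observing that the $q$ non-singleton prefix chunks can contribute at most $O(qV)$ of extra ``rescued'' value, negligible against $nV$. Calibrating the tier values and counts so that the limiting ratio is \emph{exactly} $\tfrac23$, and checking via \cref{lem:twothirdsapprox} that it cannot be pushed below, is the delicate quantitative step, but it is a bounded one-parameter optimization rather than a conceptual difficulty.
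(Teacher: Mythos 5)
There is a genuine gap, and it is in the heart of your construction. Your structural claim that ``the free chunks cannot pair a big with a medium, and every big-bundle has the form $\{a\}\cup S_j$ with $S_j$ taken from the bottom (small) part of the suffix'' is false. The nested structure pins only $B_1$ to the very tail of the suffix; the chunk boundaries are otherwise free (chunks may even be empty), so the suffix chunks feeding the big-singleton bins can perfectly well contain leftover medium items. Concretely, in your instance with $p$ bigs ($a=b+c$), $2n-p$ mediums ($b$) and $n-p$ smalls ($c$), consider the nested partition that gives each medium-singleton bin one suffix medium and each big-singleton bin one suffix medium, dumping all smalls into $B_1$'s chunk: every big bundle has value $a+b=V$ and every medium bundle has value $2b$, so the nested share is at least $\frac{2b}{2b+c}\ge \frac{2}{3}$, with equality only when $b=c$. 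But your calibration ``$b\to V/3^-$'' means exactly $c\to b$, and at $b=c$ the instance has only two item values and a nested partition attains the MMS exactly (pairs $\{2c,c\}$ and triples $\{c,c,c\}$ are realizable by consecutive suffix chunks of identical-value items), so the ratio tends to $1$, not $\frac23$. Optimizing over all three-tier instances of your pair/triple form, the two relevant nested partitions (one medium per bin, versus medium pairs for the medium bins and the scarce smalls for the bigs, giving big bundles $b+2c$) balance at $b=2c$ with ratio $\frac{2b}{2b+c}=\frac45$ --- which is precisely the paper's tight examples $3,2,2,2,1$ (\cref{lem:rho21}) and \cref{example:NS}. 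So no construction with a bounded number of value tiers of this shape can get below $\frac45$, let alone approach $\frac23$; your ``one-parameter optimization'' has no solution.

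The paper's actual proof of \cref{pro:NEWnegative} needs a \emph{multi-scale} construction with $\Theta(1/\epsilon)$ tiers (taken from~\cite{BEF2021b}): for $k=\lceil \frac{1}{2\epsilon}\rceil$, group $a\in\{0,\ldots,k\}$ contains $4^a$ large items of value $1-\frac{a}{2k}$ and $2\cdot 4^a$ small items of value $\frac{a}{2k}$, with $n=\sum_{i=0}^k 4^i$. Because every large item exceeds every small item in value, the prefix singletons are exactly the large items, and the geometric growth of the group sizes forces any good nested partition to spend the group-$a$ smalls on the group-$a$ larges at every scale; the deficit cascades upward and $B_1$ ends with the group-$0$ large plus worthless group-$0$ smalls, giving $NS_{n,1}=1$ while the MMS is at least $\frac32-\epsilon$ (via bundles of two group-$a$ larges plus one group-$(a-1)$ small, and triples of group-$k$ smalls). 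Your handling of general $q$ is also not sound as stated: a bound of ``$O(qV)$ rescued value against $nV$'' controls totals, not the minimum bundle. The paper instead takes $n\ge 2q$ and uses the fact that group $k$ alone contains more than half of all items, so the last $q$ (non-singleton) chunks are forced to consist of triples of group-$k$ smalls, after which the analysis reduces to the $q=1$ case.
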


\begin{proof}
We focus first on the case $q=1$, and later we will address the case $q > 1$.
The valuation function that we describe is taken from~\cite{BEF2021b}, and so is the proof regarding its MMS value.

Let $k =\lceil \frac{1}{2\epsilon}\rceil$, let $n=\sum_{i=0}^k 4^i=\frac{ 4^{k+1} -1}{3}$, and consider a set of $m=3\cdot \sum_{i=0}^k 4^i = 4^{k+1}-1$ items $\items =\big\{(a,b,c) \mid a\in\{0,1,\ldots,k\},b\in\{1,2,3\},c \in \{1,2,\ldots,4^a\}\big\}$. The value of item $(a,b,c)$ is $1-\frac{a}{2k}$ if $b=1$, and $\frac{a}{2k}$ if $b\neq 1$.  
	We call items with $b=1$ large items, and items with $b\in \{2,3\}$ small items.
	For every $0 \le i \le k$, we refer to the items with $a=i$ as belonging to {\em group} $i$. 
	Hence group $i$ contains $4^i$ large items and $2\cdot 4^i$ small items.
	
	{We show that in the partition formed by $NS_{n,1}$, bundle $B_1$ is composed of the three items of group $0$, and hence has value $1+0+0 = 1$, implying that $NS_{n,1} = 1$. 
	In the $NS_{n,1}$ partition, every bundle receives a single large item (as every large items have value at least half, while every small item has value less than half).
	After the allocation of the large items, the $NS_{n,1}$ partition allocates the small items in decreasing order of the groups. Each bundle that received a large item of group $i$ also receives two small items of the same group.
	Thus, bundle $B_1$ will not get another item until all items but the small items of group $0$ are allocated.}

	In contrast,
	the MMS is at least $\frac{3}{2}-\epsilon$. This holds because the items can be partitioned into $n$ bundles, each with value at least $\frac{3}{2}-\epsilon$. 
	For each $a\in \{1,2,\ldots,k\}$, create $2\cdot 4^{a-1}$ bundles by taking two large items of group $a$ and one small item of group $a-1$.
	The value of such a bundle  is $2(1-\frac{a}{2k}) + \frac{a-1}{2k} = 2-\frac{a+1}{2k}\geq \frac{3}{2}-\epsilon  $.
	{In addition, create $ \frac{2\cdot 4^k-2}{3} $ bundles, each containing three small items of group $k$, and each such bundle has a value of $\frac{3}{2}$.} 
	Note that $2\cdot 4^k-2$ is divisible by $3$.
	Finally, create another bundle using  {the two remaining}  small items of group $k$, and the large item of group $0$. {That bundle has value of $2$.} 
	The number of bundles created is $$\sum_{a=1}^k 2 \cdot 4^{a-1} + \frac{2\cdot 4^k-2}{3} +1  =\frac{2\cdot 4^k-2}{3} + \frac{2\cdot 4^k+1}{3} =  n,$$
	which 	concludes the proof.
	
	Now we address the case that $q > 1$. In this case, we require that $n \ge 2q$ (and not just $n \ge q$). As group $k$ contains over half the items, then also for the optimal $NS_{n,q}$ partition, in order that every bin will have value at least~1 it will still be necessary that each of the last $q$ bins contains three items from group $k$. Thereafter, the rest of the bins are filled up exactly as in the case of $NS_{n,1}$, again resulting in $B_1$ having value~1. 
\end{proof}

\subsubsection{Additional Approximation bounds for Nested Shares} \label{app:NS-bounds}

The following proposition presents several examples that provide upper bounds on the approximation ratios obtained by the nested share $NS_{n,q}$, for some values of $n$ and $q$.
It shows that reaching a $\frac{4}{5}$ approximation for $n=3$ requires using $NS_{3,2}$, and $NS_{3,1}$ does not suffice. Moreover, it shows that the bound of $\frac{4}{5}$ does not extend to $NS_{4,2}$. Finally, {we observe that \cref{example:NS} shows that}
even the best possible value of $q$, namely $q=n$, does not offer a better than $\frac{4}{5}$ approximation to the MMS, when $n \ge 3$. 

\begin{proposition}
	\label{pro:NEWnegative3}
	There are additive valuation functions illustrating the following upper bounds on the ratio within which $NS_{n,q}$ approximates the MMS:
	\begin{enumerate}
		\item {$\rho_{2,2} \le \frac{5}{6}$. $\rho_{3,3} \le \frac{5}{6}$.}
		\item $\rho_{3,1} \le 0.77$.
		\item $\rho_{4,2} \le \frac{3}{4}$.
		{More generally, $\rho_{n,n-2} \le \frac{3}{4}$ for all $n \ge 4$.}
		\item $\rho_{n,n} \le \frac{4}{5}$ for all $n \ge 4$. 
	\end{enumerate}
\end{proposition}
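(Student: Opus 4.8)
The plan is to establish each bound $\rho_{n,q}\le c$ by exhibiting an explicit additive valuation $v$ with $NS_{n,q}(v)\le c\cdot \MMSv$, and to verify the two directions separately. The lower bound on $\MMSv$ is the easy direction: I would display a balanced partition into $n$ bundles of the claimed value. The upper bound on $NS_{n,q}(v)$ is the substantive direction, and for it I would lean on the structural description of the admissible partitions in $\FF_{n,q}$ already used in \cref{lem:not-fully}: read in the sorted order $e_1,\dots,e_m$, the bundle labels of any partition in $\FF_{n,q}$ rise and then fall (a ``nested interval'' pattern), so every bundle is a union of one prefix-interval and one suffix-interval, with the first $n-q$ prefix-intervals forced to be singletons. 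Bounding $NS_{n,q}$ then amounts to ruling out, over all such nested partitions, any assignment reaching $c\cdot \MMSv$.

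For the two $5/6$ bounds: $\rho_{2,2}\le \frac56$ is exactly the instance $4,3,2,2,1$ already analysed in \cref{lem:rho22}, so I would cite it. For $\rho_{3,3}\le \frac56$ I would use the instance with item values $4,4,3,2,2,1,1,1$, whose MMS is $6$ (via $\{4,2\},\{4,2\},\{3,1,1,1\}$); to bound $NS_{3,3}$ I would note that the only contiguous central block of value $6$ is $\{e_4,e_5,e_6,e_7\}$, and that removing it leaves $\{4,4,3,1\}$, which has no value-$6$ subset, so an all-$6$ nested partition is impossible and $NS_{3,3}=5$. For $\rho_{n,n}\le\frac45$ ($n\ge4$) I would generalise \cref{example:NS} to the family with two items of value $3$, $2(n-1)$ items of value $2$, and $n-2$ items of value $1$ (so $\MMSv=5$, recovering \cref{example:NS} at $n=4$), and prove $NS_{n,n}\le 4$ by parity: a value-$5$ bundle contains an odd number of odd-valued items, and with exactly $n$ odd items each of the $n$ bundles would contain exactly one, forcing the bundle holding $e_1$ to be $\{3,2\}$; the nested structure cannot deliver this, because the only $2$ available to pair with $e_1$ lies in the suffix beyond all the $1$'s, so that bundle would either also swallow a bottom $1$ (two odd items) or fall short of value $5$.

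The remaining bounds, $\rho_{3,1}\le 0.77$ and the family $\rho_{n,n-2}\le\frac34$ ($n\ge4$), all have $n-q=2$ forced singleton blocks, so $B_1=\{e_1\}\cup S_1$ and $B_2=\{e_2\}\cup S_2$ can only be completed from the smallest items while the medium items are trapped in the flexible bundles $B_3,\dots,B_n$. Here I would present an explicit instance — a fixed instance on a handful of items whose ratio evaluates to just below $0.77$ for $NS_{3,1}$, and a scalable instance parametrised by $n$ for the family — and verify the $NS$ value by enumerating the (polynomially many) admissible partitions, exactly in the spirit of the program used for $\rho_{4,3}$ in \cref{lem:rho43-ub}. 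For $NS_{3,1}$ this enumeration is over the $O(m^2)$ choices of the two cut points defining $(B_3,B_2,B_1)$, so the check is finite; for the family I would first bound the number of relevant items, as in \cref{lem:14}, to make the search finite.

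The step I expect to be the main obstacle is the upper bound on $NS_{n,q}$, i.e.\ showing that no nested partition beats the claimed ratio. This is subtle precisely because the nested structure is far more flexible than it first looks: a bundle's prefix- or suffix-interval may be empty, so the ``outermost'' bundle need not contain both extreme items, which defeats naive interleaving or parity arguments (many instances that appear to force imbalance in fact admit an optimal nested partition). The arguments therefore have to exploit genuine granularity or parity obstructions, and in the fully flexible case $q=n$ one must argue carefully about where the extreme items are forced to land; for the instances verified by search, the obstacle instead reduces to bounding the instance size and then checking finitely many partitions.
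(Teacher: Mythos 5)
Your witnesses for items~1 and~4 are sound (your $\rho_{3,3}$ instance $(4,4,3,2,2,1,1,1)$ checks out: the unique interval of value~$6$ is $\{e_4,\dots,e_7\}$ and $\{4,4,3,1\}$ has no subset of value~$6$; the paper instead appends an item of value~$6$ to the $\rho_{2,2}$ instance, and for item~4 it appends $n-4$ items of value~$5$ to \cref{example:NS} rather than scaling the $2$'s and $1$'s as you do). But there are two genuine gaps. First, for items~2 and~3 you exhibit no valuation at all: the proposition is a pure existence claim, so the explicit instances are the entire content, and ``I would present an instance and verify it by enumeration'' is a promissory note, not a proof. The paper gives, for $\rho_{3,1}\le 0.77$, the nine-item instance with values $\frac{1}{13}(6,6,5,5,5,4,4,2,2)$ (MMS~$=1$; the $NS_{3,1}$ procedure is forced into the bundles $(e_3,e_4,e_5)$ and $(e_2,e_6,e_7)$, leaving only $\frac{10}{13}\approx 0.769$ for $B_1$), and for $\rho_{n,n-2}\le\frac{3}{4}$ the family with two items of value~$4$, $2(n-1)$ of value~$3$, and $n-1$ of value~$2$ (MMS~$=8$), with a \emph{uniform} counting argument: the two $4$'s sit in the forced singleton prefix blocks, for both of their bundles to exceed~$6$ the last four items must be absorbed by them, and then $3n-7$ items of value at most~$3$ spread over $n-2$ bundles leave, by pigeonhole, a bundle of at most two items, hence value at most~$6$. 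Your fallback of bounding the instance size as in \cref{lem:14} and searching cannot substitute for this in item~3: the bound is asserted for \emph{every} $n\ge 4$, and a finite enumeration certifies only finitely many values of $n$; some $n$-uniform argument is unavoidable.

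Second, your closing step for item~4 is wrong as stated. In your family the $2$'s \emph{precede} the $1$'s in sorted order, and the suffix block paired with the bundle holding $e_1$ is the last one (ending at $e_m$) only when $e_1\in B_1$. If $Z_1,\dots,Z_{j-1}$ are empty and $e_1\in B_j$ with $j\ge 2$, the companion suffix block $S'_{n-j+1}$ is an interior block and can perfectly well be a lone~$2$, so the bundle $\{3,2\}$ \emph{is} deliverable by the nested structure in that configuration; it is excluded for a different reason, namely that $B_1=S'_n$ is then a pure suffix interval ending at $e_m$, and no interval ending at $e_m$ has value~$5$ with exactly one odd-valued item (any such interval containing a~$2$ must contain all $n-2$ ones, and the all-ones intervals fail the parity count). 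Your parity reduction up to that point is correct---total value $5n$ forces every bundle to have value exactly~$5$ and hence exactly one odd item, so $e_1$'s bundle must be $\{3,2\}$---so with this extra case the argument closes, but as written the proof of item~4 is incomplete.
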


\begin{proof}
{For the following additive valuation function the value of $NS_{2,2}$ is 5, whereas the MMS is~6. The item values in non-increasing order are $4,3,2,2,1$. To get a similar example for $NS_{3,3}$, add a single item of value~6.}

For the following additive valuation function the value of $NS_{3,1}$ is 0.7693, whereas the MMS is~1. The item values in non-increasing order are:

0.4615, 0.4615, 0.3846, 0.3846, 0.3846, 0.3077, 0.3077, 0.1539, 0.1539.

The MMS partition is $\{(e_1, e_3, e_8), (e_2, e_4, e_9), (e_5, e_6, e_7)\}$. To get to 0.7693, the $NS_{3,1}$ is forced to create the bundle $(e_3, e_4, e_5)$, and then $(e_2, e_6, e_7)$, but then the items that remain have total value only 0.7693.

For the following additive valuation function the value of $NS_{4,2}$ is~6, whereas the MMS is~8. The item values in decreasing order are:
4, 4, 3, 3, 3, 3, 3, 3, 2, 2, 2.
{We next extend this example to prove that $\rho_{n,n-2} \le \frac{3}{4}$ for all $n \ge 4$.} 
Consider an instance with two items of value~4, with $2(n-1)$ items of value~3, and with $n-1$ items of value~2. The MMS is 8, as the MMS partition can have one bundle of the form $(4,4)$, and $n-1$ bundles of the form $(3,3,2)$. In contrast, $NS_{n,n-2} \le 6$.
In any $NS_{n,n-2}$ partition, each item of value 4 belongs to different bundle. 
{For}  
each of these two bundles to have value more than 6, the four last items must be contained in these bundles. This leaves at most $2(n-1)+(n-1)-4= 3n-7$ items of value 3 and 2. In any partition of these items to $n-2$ bundles, there must be a bundle with at most two items. As each of these items has value at most~3, the value of some bundle is at most 6.

{Finally, an example for which $NS_{4,4} \le \frac{4}{5}$ is  
provided in \cref{example:NS}.
The example extends to every $n>4$ by adding $n-4$ items of value $5$, giving an instance with MMS of value~5, and $NS_{n,n} = 4$ {(as each additional item will be in a different bundle)}. Thus, for $\rho> 4/5$ the share $NS_{n,n}$ does not $\rho$-dominate the MMS.}
{(We note that \cref{example:NS} does not apply to $NS_{3,3}$. The natural way to adapt it to $NS_{3,3}$ is to have item values $(3, 3, 2, 2, 2, 2, 1)$. However, in this case $\{(e_5, e_6, e_7), (e_1, e_4), (e_2, e_3)\}$ is an $NS_{3,3}$ partition with value~5. In this $NS_{3,3}$ partition, bin $B_1$ contains items from the suffix but no items from the prefix.)}

\end{proof}

\end{document}